\documentclass{article}
\usepackage[american]{babel}
\usepackage[utf8]{inputenc}
\usepackage[T1]{fontenc}
\usepackage{csquotes}
\usepackage{amsmath, amssymb, amsfonts, amsthm, bm}
\allowdisplaybreaks[4]
\usepackage{textcomp}
\usepackage{appendix}
\usepackage{color}
\usepackage{comment}
\usepackage{dsfont}
\usepackage{enumerate}
\usepackage[mathscr]{euscript}
\usepackage{graphicx}
\usepackage{latexsym}
\usepackage{mathrsfs}
 \usepackage{multicol}
\usepackage{enumitem}
\usepackage{mathtools}

\usepackage{bm}
\let\origdot\dot
\let\origddot\ddot
\usepackage{accents}
\let\dot\origdot
\let\ddot\origddot

\usepackage[a4paper, top=3cm, bottom=3.5cm, left=3.5cm, right=3.5cm]{geometry}

\usepackage{tikz}
\usepackage[font=small]{caption}

\usepackage[colorlinks=true,citecolor=blue,urlcolor=blue,linkcolor=blue]{hyperref}

\usepackage[backend=biber,style=alphabetic,giveninits=true,sorting=nyt,natbib=true,url=false,
maxbibnames=10,
maxalphanames=4,
]{biblatex}
\DeclareNameAlias{default}{family-given}
\renewbibmacro*{in:}{%
  \ifentrytype{article}
    {}
    {\printtext{\bibstring{in}\intitlepunct}}}
\DeclareNolabel{\nolabel{\regexp{[\p{P}\p{S}\p{C}\p{Z}]+}}}
\bibliography{Biblio, Biblio_ASMF}

\newcommand{\R}{\mathbb{R}}
\newcommand{\N}{\mathbb{N}}
\newcommand{\E}{\mathbb{E}}
\newcommand{\V}{\mathbb{V}}
\newcommand{\BMO}{\mathrm{BMO}}

\newcommand{\A}{\mathcal{A}}
\newcommand{\X}{\mathcal{X}}

\newcommand{\F}{\mathcal{F}}

\newcommand{\Q}{\mathbb{Q}}
\renewcommand{\P}{\mathbb{P}}
\newcommand{\1}{\mathds{1}}

\newcommand{\tinybullet}{\mathbin{\vcenter{\hbox{\tiny$\bullet$}}}}
\newcommand{\ds}{\displaystyle}
\renewcommand{\d}{\,\mathrm{d}}
\newcommand{\itemref}[1]{$\textit{\ref{#1}}$}
\newcommand{\sss}{\mathchoice
  {\scriptscriptstyle}{\scriptscriptstyle}{\scriptscriptstyle}{\scriptscriptstyle}}

\def\qed{\unskip\nobreak\hfill$\Box$\par\addvspace{\medskipamount}}
\def\a{{\alpha}}

\def\r{{\rho}}
\def\l{{\lambda}}
\def\Om{{\Omega}}
\def\om{{\omega}}
\def\s{{\sigma}}

\def\g{{\gamma}}
\def\eps{{\varepsilon}}

\def\p{{\rho}}

\numberwithin{equation}{section}
\mathtoolsset{showonlyrefs}

\newtheorem{theorem}{Theorem}

\newtheorem{corollary}[theorem]{Corollary}

\newtheorem{definition}[theorem]{Definition}

\newtheorem{lemma}[theorem]{Lemma}

\newtheorem{proposition}[theorem]{Proposition}
\newtheorem{example}[theorem]{Example}

\theoremstyle{remark}
\newtheorem{remark}[theorem]{Remark}
\newtheorem{exampletext}[theorem]{Example}

\DeclareMathOperator*{\esssup}{ess\,sup}

\DeclareMathOperator*{\essmax}{ess\,max}

\DeclareMathOperator*{\argmin}{arg\,min}
\DeclareMathOperator*{\argmax}{arg\,max}

\title{
Financial Resilience Evaluation: 
From Conditional Expectations to Dynamic Convex
Risk Measures\thanks{We are very grateful to 
Yacine A\"it-Sahalia, Jose Blanchet, Marco Frittelli, Sergio Ortobelli Lozza,
and seminar and conference participants at 
QFW 2026 in Bergamo, 
IMPMS 2026 in Palermo, 
SAC 2026 in Stockholm, 
the 2026 Linnaeus Workshop on Stochastic Analysis and Applications in V\"axj\"o, 
the STAR seminar at the University of Oslo,
SUSTech, and the University of Amsterdam
for useful comments and discussions.
This research was funded in part by
the Netherlands Organization for Scientific Research (NWO) under grant NWO Vici 2020--2027 (Ferrari, Laeven, Zullino)
and by an UvA-AC3R Midsize Grant (Ferrari, Laeven).
Matteo Ferrari, Emanuela Rosazza Gianin, and Marco Zullino are members of Gruppo Nazionale per l’Analisi Matematica, la Probabilità e le loro Applicazioni (GNAMPA)-INdAM, Italy. 
}} 
\author{Matteo Ferrari \\
{\footnotesize Dept.~of Quantitative Economics}\\
{\footnotesize University of Amsterdam, The Netherlands}\\
{\footnotesize \texttt{M.Ferrari@uva.nl}}\\
\and Roger J.~A.~Laeven\footnote{Corresponding author.} \\
{\footnotesize Dept.~of Quantitative Economics}\\
{\footnotesize University of Amsterdam, CentER}\\
{\footnotesize and EURANDOM, The Netherlands}\\
{\footnotesize \texttt{R.J.A.Laeven@uva.nl}}\\
\and Emanuela Rosazza Gianin \\
{\footnotesize Dept.~of Statistics and Quantitative Methods}\\
{\footnotesize University of Milano-Bicocca, Italy}\\
{\footnotesize \texttt{emanuela.rosazza1@unimib.it}}\\
\and Marco Zullino \\
{\footnotesize Dept.~of Mathematics and Applications}\\
{\footnotesize University of Milano-Bicocca, Italy}\\
{\footnotesize \texttt{M.Zullino@campus.unimib.it}}\\
}

\date{\today}

\begin{document}

\maketitle

\begin{abstract}
    Financial resilience concerns the rate at which a position recovers, or further deteriorates, in response to adverse conditions.
    As a first step, \cite{Laeven+Ferrari+Rosazza+Zullino_2025_measuringfinancialresilienceusing} introduced the resilience rate, defined as the expected instantaneous rate of (favorable) change of a price or risk-assessment process.
    Since this quantity captures only the conditional mean of future increments, it cannot distinguish between positions having the same expected recovery but different conditional risk profiles.
    We obtain a richer characterization by evaluating such increments through a genuine, possibly nonlinear, dynamic risk measure.
    More precisely, for an It\^o process $\pi$ and a normalized, cash-additive dynamic risk measure $\rho$, we define the resilience evaluation by
    \[
        \mathcal D_s^\rho\pi_t
        :=
        L^1\text{-}\lim_{\eps\to0^+}
        \frac{1}{\eps}\rho_s(\pi_{t+\eps}-\pi_t),
        \qquad 0\leq s\leq t<T,
    \]
    whenever the limit exists.
    When $\rho$ is a convex dynamic risk measure induced by a BSDE with a Lipschitz or quadratic driver, we prove that this limit is well-posed and admits an explicit dual representation.
    It is given by the worst-case conditional expectation, over a zero-penalty class of measure changes, of an effective drift combining the drift of $\pi$ with the risk adjustment assigned by $\rho$ to its volatility.
    We further establish attainment of the optimal scenario and illustrate the scope of the construction, as well as the role of the assumptions, through examples and counterexamples.
\end{abstract}

\newpage

\setcounter{tocdepth}{3}
\tableofcontents

\section{Introduction}

As risks resolve over time and information is revealed progressively, risk measurement is inherently dynamic.
The arrival of new information, and the occurrence of adverse events in particular, continuously reshape the assessment of a financial position.
\emph{Dynamic risk measures} formalize this by assigning to a position $X$ a process $(\rho_t(X))_{t\in[0,T]}$ of conditional evaluations along a filtration $\bm\F=(\F_t)_{t\in[0,T]}$, the position being acceptable at time $t$ whenever $\rho_t(X)\leq 0$ (see, e.g., \cite{Delbaen_2006_Structure_m-stable_sets_particular_set_risk_neutral_measures, Bion-Nadal_2009_Time_consistent_dynamic_risk_processes,Frittelli+RosazzaGianin_2004_Dynamic_Convex_Risk_Measures,Follmer+Schied_2016_Stochastic_finance}).
Among these, the dynamic risk measures induced by backward stochastic differential equations (BSDEs) provide a natural and tractable class of nonlinear, time-consistent conditional valuations, whose structural properties are inherited from those of the BSDE driver (see \cite{Peng_1997_Backward_SDE_related_g-expectation,Barrieu+ElKaroui_2009_Pricing_hedging_optimally_designing_derivatives_minimization_risk_measures,ElKaroui+Peng+Quenez_1997_Backward_stochastic_differential_equations_finance,RosazzaGianin_2006_Risk_measures_g-expectations}).

While this literature analyzes the dynamic evolution of the \emph{level} of risk, a complementary and largely unexplored question concerns its \emph{rate} of change once a position becomes unacceptable.
This is the perspective of financial \emph{resilience}: the pace at which a position recovers, or further deteriorates, after breaching the acceptability set.
As a first step in this direction, \cite{Laeven+Ferrari+Rosazza+Zullino_2025_measuringfinancialresilienceusing} introduced the \emph{resilience rate}, the instantaneous expected rate of change of $\pi$ at a (possibly stopping) time $t\in[0,T)$,
\[
    \bm\dot\pi_{t|s}
    :=L^1\text{ \!-\!}\lim_{\varepsilon\to0^+}\frac1\eps\E\!\left[\pi_{t+\varepsilon}-\pi_{t}\,\middle|\,\F_s\right],
\]
where the conditioning reflects the information available at an earlier time $s\in[0,t]$, and $\pi$ is interpreted either as a dynamic risk measure or as a price process.
The resilience rate thus measures the average instantaneous speed of recovery following an adverse event.\footnote{Empirically, \cite{Eiling+Laeven+Xu_2024_Coping_Unexpected_Forward-Looking_Measure_Firm_Resilience} find that innovation and R\&D are key characteristics among U.S.\ listed firms that are financially resilient.}

The resilience rate, however, captures only the local conditional mean of the future increment. 
It therefore cannot distinguish between positions that have the same expected infinitesimal recovery but very different conditional risk profiles, for instance because of different local volatility, downside exposure, or sensitivity to model misspecification. 
A fuller characterization of financial resilience should account for broader features of the conditional risk profile beyond its conditional mean, and hence reflect the risk attitude of the investor or regulator who evaluates it.
This calls for replacing the conditional expectation in $\bm\dot\pi_{t|s}$ with a genuine, possibly nonlinear, dynamic risk measure.

We formalize this idea through a nested, two-layer construction.
The ``inner'' layer is a real-valued price/risk process $\pi$ following Itô dynamics,
\[
    \d \pi_t=b_t\d t+\s_t\cdot\d W_t,\qquad t\in[0,T],
\]
whose local recovery or deterioration is being assessed.
The ``outer'' layer is a normalized, cash-additive dynamic risk measure $\rho=(\rho_s)_{s\in[0,T]}$, applied to the short-time increments of $\pi$.
We then study existence and identification of the limit
\begin{equation}
\label{EQ:intro:gen_res}
    \mathcal D_s^\rho \pi_t
    := L^1\text{ \!-\!}\lim_{\eps\to 0^+}\frac1\eps\,\r_s(\pi_{t+\eps}-\pi_t),
    \qquad 0\le s\le t<T,
\end{equation}
which we call the \emph{resilience evaluation} of $\pi_t$ through $\rho_s$.
It measures the instantaneous risk, per unit of time, of the future increment of $\pi$ around time $t$, as evaluated through $\r$ at the present time $s$.
The \textit{mean} resilience rate introduced in \cite{Laeven+Ferrari+Rosazza+Zullino_2025_measuringfinancialresilienceusing} is recovered as the special case $\rho_s(\cdot)=\E[\,\cdot\,|\F_s]$, so that $\bm\dot\pi_{t|s}=\mathcal D_s^\E\pi_t$.

The resilience evaluation admits several complementary interpretations, which we discuss in Section~\ref{SEC:res_eval_interpretation}.
It is the slope at the origin of the exposure-risk curve $\eps\mapsto\rho_s(\pi_{t+\eps}-\pi_t)$.
For BSDE-induced risk measures,~\eqref{EQ:intro:gen_res} can also be viewed as an extension of the classical representation formula for BSDE drivers (see \cite{Jiang_2008_Convexity_translation_invariance_subadditivity_g-expectations_related_risk_measures}). 
At $s=t$, cash-additivity and normalization yield the first-order expansion
\[
    \r_t(\pi_{t+\eps})=\pi_t+\eps\,\mathcal D^\r_t\pi_t+o_{L^1}(\eps),\qquad \text{as }\eps\to 0^+,
\]
so that $\mathcal D^\r_t\pi_t$ is the instantaneous, risk-adjusted rate of change of the risk \emph{level} of $\pi$.
For $s<t$, the resilience evaluation is then the present, risk-adjusted forecast of this future instantaneous resilience.
Finally, when $\rho$ is positively homogeneous, it coincides with the present risk of the future infinitesimal rate of change of $\pi$.

As a first validation of the definition, we compute the resilience evaluation for three benchmark cases.
For the conditional expectation,
it reduces to the mean resilience rate $\mathcal D^\E_s\pi_t=\E[b_t|\F_s]$, driven solely by the local drift $b$ of $\pi$.
For the conditional variance, the first-order contribution is purely diffusive and yields
\[
    L^1\text{ \!-\!}\lim_{\eps\to 0^+}\frac1\eps\mathbb V(\pi_{t+\eps}-\pi_t|\F_s)
    =\E\big[|\s_t|^2\big|\F_s\big],
\]
showing that the same infinitesimal construction can isolate the local volatility of the position.
For the entropic risk measure $\mathfrak e^\g_s(X)=\g^{-1}\ln\E[\exp(\g X)|\F_s]$, with risk-aversion parameter $\g>0$ (see, e.g., \cite{Barrieu+ElKaroui_2005_Inf-convolution_risk_measures_optimal_risk_transfer}), the evaluation combines both, yielding a drift term corrected by a quadratic volatility adjustment, 
\[
    \mathcal D^{\mathfrak e^\g}_s\pi_t=\E[b_t|\F_s]+\frac\g2\E\big[|\s_t|^2\big|\F_s\big].
\]
These examples already show the additional information carried by $\mathcal D_s^\rho\pi_t$: depending on the outer evaluation $\r$, the same local increment can be assessed through its mean, its dispersion, or a combination of the two.

When restricted to sufficiently integrable risks, the entropic risk measure is itself induced by a BSDE, with a quadratic driver $g(z)=\tfrac\g2|z|^2$.
This suggests considering, as outer evaluation, the entire class of convex dynamic risk measures induced by BSDEs, whose driver $g$ is normalized, independent of the $y$-variable, and convex in the $z$-variable.
We treat two regimes of growth for the driver: the Lipschitz and the quadratic case, in which $\rho$ is well-defined on 
$\F_T$-measurable random variables that are square integrable or have finite exponential moments of all orders, respectively.
The key structural property we exploit is convexity: it grants $\rho$ a dual representation as a penalized worst-case expectation (see \cite{Barrieu+ElKaroui_2009_Pricing_hedging_optimally_designing_derivatives_minimization_risk_measures}), in which the penalty term is the time integral of the convex conjugate $g^\ast$ of the driver:
\[ 
    \rho_s(X) = \essmax_{\mu\in\mathcal B} \E_\mu\bigg[ X-\int_s^T g^\ast(r,\mu_r)\d r \bigg|\F_s \bigg], \qquad s\in[0,T], 
\] 
where the expectation is computed with respect to the probability $\Q^\mu$ generated by the stochastic exponential of $\int \mu\cdot\d W$, and the dual domain $\mathcal B$ is a suitable subset of $\BMO_T$ processes.

Our main result, see Section~\ref{SEC:resil_convex}, establishes that, under the above Lipschitz or quadratic assumptions and an additional mild error-bound condition, the resilience evaluation is well-posed and admits the closed-form dual representation
\begin{equation}
\label{EQ:intro:main}
    \mathcal D_{s}^\r\pi_t
    =\essmax_{\mu\in\mathcal A^0}\E_\mu\big[b_t+g(t,\s_t)\big|\F_s\big],
    \qquad 0\leq s \leq t <T,
\end{equation}
for almost every $t$, where $\mathcal A^0$ is the class of processes $\mu\in\mathcal B$ that take values in the zero-penalty set 
\[
    \partial_z g(\,\cdot\,,0)
    =\big\{q\in \R^m:\ g^\ast(\,\cdot\,,q)=0\big\}.
\]
Thus the resilience evaluation is obtained as a worst conditional expectation of a local \textit{effective drift} that combines the physical drift of $\pi$ with the risk adjustment that $\r$ assigns to its volatility.
The proof is delicate and proceeds in several steps.
It starts from the dual representation of $\rho_s(\pi_{t+\eps}-\pi_t)$, rewrites the increment under $\Q^\mu$ by Girsanov's theorem, thus making an additional term $\int_t^{t+\eps}\s_r\cdot\mu_r\d r$ appear, and separates the penalty contribution on the short interval $[t,t+\eps]$ from the one accumulated on $[s,t]$.
The scaling factor $\eps^{-1}$ then amplifies any non-vanishing penalty and forces the dual processes toward the zero-penalty set, localizing the problem to the shrinking interval $[t,t+\eps]$; there, the Fenchel equality $\s\cdot\mu-g^\ast(\,\cdot\,,\mu)=g(\,\cdot\,,\s)$ converts the dual expression back into the driver $g$.
Making this rigorous requires restricting the dual domain, a measurable selection, and a pasting argument before passing to the $L^1$-limit.
At the diagonal $s=t$,~\eqref{EQ:intro:main} collapses to the simple expression $\mathcal D_{t}^\r\pi_t=b_t+g(t,\s_t)$.

We show in Section~\ref{SEC:attainment} that the worst-case representation~\eqref{EQ:intro:main} is indeed attained, that is, there exists an optimal dual process achieving the essential supremum.

We then illustrate the scope of the theory with a range of examples.
First, we consider the case where the inner process $\pi$ is the first component of the solution $(\pi,Z^\pi)$ of a BSDE with driver $g^\pi$.
The resilience evaluation makes the interaction between the two layers explicit, the inner driver $g^\pi$ governing the local drift of $\pi$ and the outer driver $g$ imposing a risk-adjustment to the martingale component of $\pi$:
\begin{equation}
    \mathcal D_{s}^\r\pi_t
    =\essmax_{\mu\in\mathcal A^0}\E_\mu\big[-g^\pi(t,\pi_t,Z^\pi_t)+g(t,Z^\pi_t)\big|\F_s\big].
\end{equation}
As a financial illustration, we apply this to a replicating portfolio in the Black--Scholes model, where $\pi$ is the price process of a European payoff and $Z^\pi$ encodes the optimal hedging strategy.
We then specialize the outer driver $g$.
Linear drivers return the conditional expectation under a change of measure, sublinear drivers yield coherent risk measures, and genuinely convex drivers produce, among others, the entropic measure, its robust counterpart obtained by adding a coherent term, and subquadratic penalties.
In each case, Theorem~\ref{TH:resilience_convex} delivers a closed-form expression in terms of the local characteristics $(b,\s)$ of $\pi$ and the zero-penalty set $\partial_z g(\,\cdot\,,0)$.

A few counterexamples also clarify the limits of the proposed resilience evaluation.
For dynamic risk measures not induced by BSDEs, the resilience evaluation may fail to exist: for Brownian increments, both Value-at-Risk and Expected Shortfall scale like $\sqrt\eps$, so that the corresponding normalized limits explode as $\eps\to0^+$.
We also show that, if either normalization or cash-additivity for $\rho$ is dropped, the limit may become ill-posed even for simple It\^o dynamics.

The outline of the paper is as follows.
Section~\ref{sec:prel} collects the necessary preliminaries on functional spaces, dynamic risk measures, and BSDEs.
Section~\ref{SEC:res_eval} introduces the resilience evaluation, develops its interpretations, and establishes the representation results together with the attainment of the optimal scenario.
Section~\ref{SEC:examples} illustrates the theory through examples and motivates the assumptions by means of counterexamples.
Section~\ref{sec:con} concludes.
The appendices~\ref{SEC:app_taylor},~\ref{SEC:app_duality},~\ref{SEC:app_determ_driver},~\ref{SEC:app_lipschitz} collect the more technical proofs of several auxiliary results and of the Lipschitz case of the main theorem, which are deferred there to preserve the flow of the exposition.

\section{Preliminaries}\label{sec:prel}
In this section, we present the definitions of the main functional spaces used in this paper, the basic definitions of dynamic risk measures, and the required results on BSDEs and their relation with dynamic risk measures.

\subsection{Functional spaces}
\label{SEC:funct_spaces}
For $m\in\N$, we denote by $a\cdot b$ the Euclidean scalar product of $a,b\in\R^m$, and by $|a|$ the induced norm.
For a topological space $E$, we let $\mathscr B(E)$ denote its Borel $\s$-algebra.
Moreover, $\ell_1$ denotes the Lebesgue measure on $\mathscr B(\R)$, or its restriction to Borel subsets of~$\R$.
We use the symbol $\otimes$ both for the product of $\sigma$-algebras and $\s$-finite measures.

If $(A,\Sigma, \mu)$ is a measure space, and $p\in[1,+\infty]$, we denote by $L^p(A,\Sigma,\mu;\R^m)$ the Banach space of (equivalence classes) of functions $A\to\R^m$ that are $\Sigma/\mathscr B(\R^m)$-measurable and $p$-integrable with respect to $\mu$ (or $\mu$-essentially bounded, if $p=\infty$).
We fix a complete probability space $(\Om,\F,\P)$.
As usual, probabilities restricted to sub-$\s$-algebras of $\F$ will be denoted by the same symbol.
We let $L^{\mathrm{exp}}(\Om,\F,\P;\R)$ be the linear subspace of $L^1(\Om,\F,\P;\R)$ consisting of random variables $X$ with finite exponential orders, i.e.,~$\E\big[\exp(c|X|)\big]<+\infty$ for all $c>0$.\footnote{We note that $L^{\mathrm{exp}}(\Om,\F,\P;\R)$ is the Orlicz heart associated with the function $\R\ni x \mapsto e^{|x|}-1.$}
When $A=\Om$, $\Sigma=\F$, $\mu=\P$, or $m=1$, the respective argument will be implied. 
In particular, we simply write
\[
    L^p:=L^{p}(\Om,\F,\mathbb P;\R),
    \qquad 
    L^p_t:=L^p\big(\Om\times[0,t],\F\otimes\mathscr B([0,t]),\P\otimes\ell_1\big), \qquad \forall \,t>0,
\]
for the respective space of random variables and jointly measurable stochastic processes.
The norms on $L^p$ and $L^p_t$ are denoted by $\|\,\cdot\,\|_{L^p}$, $\|\,\cdot\,\|_{L^p_t}$, respectively.

Assume that $W:\Omega\times[0,+\infty)\to\R^m$ is a standard Brownian motion, with $m\in\N$.
Henceforth, we work with a fixed time horizon $T\in(0,+\infty)$.
We let ${\bm \F}:=\big(\F_t\big)_{t\in[0,T]}$ be the complete and right-continuous filtration on $(\Om,\F,\P)$ generated by $W$, namely
\[
    \F_t:=\bigcap_{s\in(t,T]}\s\big(W_r \ : \ r\in[0,s]\big)\vee \mathscr N, \qquad \forall\, t\in[0,T),
\]
and $\F_T:=\s\big(W_r \ : \ r\in[0,T]\big)\vee \mathscr N$, where $\mathscr N$ is the family of all events in $\s\big(W_r \ : \ r\in[0,T]\big)$ of $\P$-probability zero.
We denote by $\mathcal P$ the predictable $\s$-algebra on $\Om\times[0,T]$, i.e.,~the one generated by continuous (equivalently, left-continuous or càglàd) $\bm\F$-adapted processes $\Om\times[0,T]\to\R$.
With a slight abuse of notation, we use the same symbols $\bm\F$ and $\mathcal P$ when the time domain is a subset of $[0,T]$.

Unless otherwise stated, equalities and inequalities between random variables should be understood $\mathbb{P}$\text{-a.s.}, whereas equalities and inequalities for processes must be interpreted $\mathbb{P}\otimes \ell_1$-a.e.

Consider a function $\psi:\Om\times[0,T]\times E\ni(\om,t,e)\mapsto \psi(\om,t,e)\in \R$, for some non-empty set $E$.
If we say that $\psi$ satisfies $\P\otimes\ell_1$-a.e.\ a property in $e\in E$, we mean that, for $\P\otimes\ell_1$-a.e.\ $(\om,t)\in\Om\times[0,T]$, the function $E\ni e\mapsto \psi(\om,t,e)$ satisfies that property.
Also, for fixed $e\in E$, we denote by $\psi(\,\cdot\,,e)$ the function $\Om\times[0,T]\ni(\om,t)\mapsto \psi(\om,t,e)$.

For any $p\in[1,+\infty)$ and $t\in(0,T]$, we define the following classes of stochastic processes:
{\small\begin{align}
    \mathcal S_t^p&:=\left\{Y:\Om\times[0,t]\to \R \ \textit{continuous  
    predictable } : \ \E\left[\sup_{s\in[0,t]}|Y_s|^p\right]<+\infty\right\},\\
    \mathcal S_t^{\mathrm{exp}}&:=\Bigg\{Y:\Om\times[0,t]\to \R \  \textit{continuous 
    predictable }: \ \E\bigg[\exp\Big(c\sup_{s\in[0,t]}|Y_s|\Big)\bigg]<+\infty, \ \forall\, c>0\Bigg\},\\
    \mathcal S_t^\infty&:=\left\{Y:\Om\times[0,t]\to \R \  \textit{continuous 
    predictable }: \ \bigg\|\sup_{s\in[0,t]}|Y_s|\bigg\|_{L^\infty}<+\infty\right\},\\
    \mathcal H^p_t&:=\text{\large$\Bigg\{$} Z:\Om\times[0,t]\to \R^m\   \textit{predictable }:\ \E\left[\left(\int_0^t|{Z_s}|^2\d s \right)^{p/2}\right]<+\infty\text{\large$\Bigg\}$},\\
    \BMO_t&:=\Bigg\{Z:\Om\times[0,t]\to \R^m \   \textit{predictable }:\ \bigg\|\sup_{s\in[0,t]}\E\left[\left.\int_s^t|{Z_r}|^2\d r \right|\F_s\right]\bigg\|_{L^\infty}<+\infty\Bigg\}.
\end{align}}
With slight abuse of notation, we use the same symbols for the normed spaces obtained as quotients of the above under the equivalence relation induced by their respective natural (semi)norms.
With this convention, a process in $\mathcal S^p_t$, $\mathcal S^{\mathrm{exp}}_t$ or $\mathcal S^\infty_t$ is unique up to indistinguishability (and, for each equivalence class, we select the continuous representative), while processes in $\mathcal H^p_t$, $\BMO_t$ are unique up to a set of null $\P\otimes\ell_1$-measure (and, for each equivalence class, we select the appropriately measurable representative).

For any continuous local martingale $M$ with $M_0=0$, we denote its stochastic exponential by ${\mathcal E(M) := \exp\big(M-\frac12\langle M\rangle\big)}$, where $\langle M\rangle$ is its quadratic variation process. 
By \cite[Theorem~1.2]{Kazamaki_1994_Continuous_exponential_martingales_BMO}, $\mathcal E(M)$ is a strictly positive local martingale and hence a supermartingale.
If $\mu\in\BMO_T$, then we denote its stochastic integral by $(\mu\tinybullet W):=\int_0^{\,\cdot}\mu\cdot\d W$ and the respective stochastic exponential by $\mathcal E^\mu:= \mathcal E(\mu\tinybullet W)$, which is a uniformly integrable martingale by \cite[Theorem~2.3]{Kazamaki_1994_Continuous_exponential_martingales_BMO}. 
Moreover, by the reverse H\"older inequality in \cite[Theorem~3.1]{Kazamaki_1994_Continuous_exponential_martingales_BMO}, there exist $q\in(1,+\infty)$ and a constant $C_{q,\mu}>0$ such that, for every $[0,T]$-valued stopping time $\tau$, it holds $\E\big[
    (\mathcal E_T^\mu)^q
    \big|\F_\tau
    \big]
    \leq
    C_{q,\mu}(\mathcal E_\tau^\mu)^q$.
Eventually, we define the probability measure $\Q^\mu\sim\P$ by
\begin{equation}
\label{EQ:equivalent_mg_measure}
    \frac{\d\Q^\mu}{\d\P}
    := \mathcal E^\mu_T
    = \exp\left(
    \int_0^T\mu_r\cdot\d W_r
    -\frac12\int_0^T|\mu_r|^2\d r
    \right),
\end{equation}
and we denote by $\E_\mu[\,\cdot\,|\F_s]$ the conditional expectation under $\Q^\mu$. 
By Girsanov's theorem (see, e.g.,\ \cite[Section~3.5, Theorem~5.1]{Karatzas+Shreve_1991_Brownian_motion_stochastic_calculus}) the process $W^\mu_t :=
    W_t-\int_0^t\mu_r\d r$, for $t\in[0,T]$, is a Brownian motion under $\Q^\mu$.

Last, we recall \cite[Proposition~9]{Laeven+Ferrari+Rosazza+Zullino_2025_measuringfinancialresilienceusing}, a result that generalizes \cite[Proposition~2.2]{Jiang_2008_Convexity_translation_invariance_subadditivity_g-expectations_related_risk_measures} and that will be used extensively in the paper.
\begin{lemma}
\label{LEM:integral_average}
     Let $\psi\in L^q_T$ for some $q\geq 1$. 
     For $\ell_1$-a.e.\ $t\in[0,T)$ and all $s\in[0,t]$, we have:
    \begin{align} 
    \label{EQ:PROP:integral_average_determ_time}
        \psi_t=L^q \text{ \!-\!}\lim_{\varepsilon\to0^+}\frac{1}{\varepsilon}\int_{t}^{t+\varepsilon}\psi_r\d r, 
        \qquad 
        \E[\psi_{t}|\F_s] =L^q \text{ \!-\!}\lim_{\varepsilon\to0^+}\E\left[\left.\frac{1}{\varepsilon}\int_{t}^{t+\varepsilon}\psi_r\d r\right|\F_s\right].
    \end{align}
\end{lemma}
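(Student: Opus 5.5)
The plan is to reduce the first limit to the Lebesgue differentiation theorem applied pathwise, upgrade it from pointwise to $L^q$ convergence by a density argument, and then obtain the conditional statement from the contractivity of conditional expectation.

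\emph{Step 1: pointwise limit.} Since $\psi\in L^q_T$, Fubini gives $\int_0^T|\psi_r(\om)|^q\d r<\infty$ for $\P$-a.e.\ $\om$. For each such $\om$, the Lebesgue differentiation theorem on $[0,T]$ yields $\frac1\eps\int_t^{t+\eps}\psi_r(\om)\d r\to\psi_t(\om)$ for $\ell_1$-a.e.\ $t$. The set of $(\om,t)$ where this fails is jointly measurable, so Fubini again shows that for $\ell_1$-a.e.\ $t$ the convergence holds $\P$-a.s. This pointwise statement does not by itself give $L^q$ convergence, which is why Step 2 is needed.

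\emph{Step 2: $L^q(\Om)$ convergence.} I would use the vector-valued form of the differentiation theorem, working with the process $t\mapsto\psi_t$ as an element of $L^q([0,T];L^q(\Om))$.

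1. The map $t\mapsto\psi_t\in L^q(\Om)$ is Bochner integrable; strong measurability follows from separability of $L^q(\Om,\F_T,\P)$, since $\F_T$ is countably generated.
2. The Lebesgue point theorem for Bochner functions gives, for a.e.\ $t$,
\[
  \frac1\eps\int_t^{t+\eps}\|\psi_r-\psi_t\|_{L^q}\d r\to0 .
\]
3. Minkowski's integral inequality then bounds $\big\|\frac1\eps\int_t^{t+\eps}\psi_r\d r-\psi_t\big\|_{L^q}$ by this quantity, which settles the first limit.

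An alternative avoiding Bochner theory goes through a density argument:

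1. Approximate $\psi$ in $L^q_T$ by simple processes $\phi^n=\sum_k\1_{A_k}\1_{(a_k,b_k]}$, for which the limit holds except at finitely many $t$.
2. Control the error through the maximal-type operator $M_\eps\psi(t):=\frac1\eps\int_t^{t+\eps}\psi_r\d r$ in the norm of $L^q_T$. Jensen and Fubini give $\|M_\eps\psi\|_{L^q_T}\le\|\psi\|_{L^q_T}$ uniformly in $\eps$.
3. This uniform bound does not immediately give a.e.\ $t$ convergence for each fixed $t$. I would therefore fall back on the Bochner route, which is where I expect the main technical care: checking the identification between the jointly measurable process and its Bochner-measurable version, so that the exceptional null set of $t$ is meaningful.

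\emph{Step 3: conditional statement.} Conditional expectation is a contraction on $L^q$, and by conditional Fubini it commutes with the time integral, $\E[\int_t^{t+\eps}\psi_r\d r|\F_s]=\int_t^{t+\eps}\E[\psi_r|\F_s]\d r$. Hence
\[
\Big\|\E\Big[\tfrac1\eps\textstyle\int_t^{t+\eps}\psi_r\d r\,\Big|\,\F_s\Big]-\E[\psi_t|\F_s]\Big\|_{L^q}\le\Big\|\tfrac1\eps\textstyle\int_t^{t+\eps}\psi_r\d r-\psi_t\Big\|_{L^q}\to0 .
\]
The null set of $t$ is the one from Step 2. It does not depend on $s$, which gives the claim simultaneously for all $s\in[0,t]$.
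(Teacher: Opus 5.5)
Your proof is correct. The paper does not prove this lemma; it only recalls it from \cite[Proposition~9]{Laeven+Ferrari+Rosazza+Zullino_2025_measuringfinancialresilienceusing}, which generalizes \cite[Proposition~2.2]{Jiang_2008_Convexity_translation_invariance_subadditivity_g-expectations_related_risk_measures}, so there is no in-text argument to compare with. Your route is the standard one: Lebesgue points of the $L^q(\Om)$-valued map $t\mapsto\psi_t$, then Minkowski's integral inequality, then $L^q$-contractivity of $\E[\,\cdot\,|\F_s]$. It correctly yields an exceptional $t$-set that does not depend on $s$. A few points should be tightened.

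First, the lemma only assumes $\psi$ to be $\F\otimes\mathscr B([0,T])$-measurable, with $\F$ the ambient $\s$-algebra. Hence $\psi_t$ need not lie in $L^q(\F_T)$, and your separability justification does not apply as stated. Strong measurability holds anyway, for either of two reasons:
\begin{itemize}
\item $\psi$ is measurable with respect to $\mathcal G\otimes\mathscr B([0,T])$ for some countably generated $\mathcal G\subseteq\F$, since any product-measurable function is measurable for the $\s$-algebra generated by countably many measurable rectangles;
\item rectangle-simple processes are dense in $L^q_T$, and by Fubini a subsequence converges in $L^q(\Om)$ for $\ell_1$-a.e.\ $t$.
\end{itemize}

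Second, the identification issue you flag as the main technical care disappears. Apply Minkowski's integral inequality directly to the jointly measurable integrand $(\om,r)\mapsto\psi_r(\om)-\psi_t(\om)$ on $\Om\times[t,t+\eps]$. This gives $\|\frac1\eps\int_t^{t+\eps}\psi_r\d r-\psi_t\|_{L^q}\le\frac1\eps\int_t^{t+\eps}\|\psi_r-\psi_t\|_{L^q}\d r$ with the pathwise integral, so no Bochner integral is needed. You then only need the Lebesgue-point property of the scalar function $r\mapsto\|\psi_r-\psi_t\|_{L^q}$. This follows from the scalar differentiation theorem applied to $r\mapsto\|\psi_r-\xi_k\|_{L^q}$, for a countable dense family $(\xi_k)$ in $L^q(\Om,\mathcal G,\P)$, together with the triangle inequality.

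Third, the density route you abandoned does close. The missing ingredient is not an $L^q_T$-bound on $M_\eps$, but the weak-type $(1,1)$ Hardy--Littlewood maximal inequality applied to $r\mapsto\|\psi_r-\phi^n_r\|_{L^q}$.

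Finally, Step~1 and the conditional Fubini identity in Step~3 are never used. Your displayed estimate needs only linearity and contractivity of conditional expectation. Conditional Fubini would additionally require a jointly measurable version of $(r,\om)\mapsto\E[\psi_r|\F_s](\om)$.
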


\subsection{Dynamic risk measures}
\label{SEC:Dyn_risk_measures}

We set the basic definition of dynamic risk measure, adopting the actuarial sign convention and interpreting it as risk evaluation of an asset through time; see \cite{Barrieu+ElKaroui_2009_Pricing_hedging_optimally_designing_derivatives_minimization_risk_measures,Frittelli+RosazzaGianin_2004_Dynamic_Convex_Risk_Measures}.

\begin{definition}[Dynamic risk measure]
\label{DEF:dynamic_risk_measure}
    Let $\X_T$ be a vector subspace of $L^1(\F_T)$ containing constants, and define $\X_t:=\X_T \cap L^1(\F_t)$ for $t\in[0,T)$. 
    A dynamic risk measure on $\X_T$ is a family $\rho=(\rho_t)_{t\in[0,T]}$ such that $\r_t:\X_T\to \X_t$ for any $t\in[0,T]$, and further satisfies the monotonicity property:
    For  any $X,Y\in \X_T$, if $X\leq Y$, then $\rho_t(X)\leq\rho_t(Y)$ for any $t\in[0,T]$.
\end{definition}

A static risk measure is defined as the evaluation at time $0$ of a dynamic risk measure; see \cite{Delbaen_2012_Monetary_Utility_Functions, Follmer+Schied_2016_Stochastic_finance,Frittelli+RosazzaGianin_2002_Putting_order_risk_measures,Bellini+Laeven+RosazzaGianin_2018_Robust_return_risk_measures}.

Let us now fix $\X_T$ as above.
We present a non-exhaustive list of well-known axioms for a dynamic risk measure $\r$ on $\X_T$.

\begin{enumerate}[noitemsep, topsep=0pt,leftmargin=1em]
    \item 
        Normalization: $\rho_t(0)=0$ for any $t\in[0,T]$.
    \item 
        Cash-additivity: For $X\in \X_T$, and $t\in[0,T]$, if $h\in \X_t$, then $\rho_t(X+h)=\rho_t(X)+h$.
    \item 
        Positive homogeneity: $\rho_t(\alpha X)=\alpha\rho_t(X)$ for $X\in \X_T$, $\a> 0$, and $t\in[0,T]$.
    \item 
        Convexity: $\rho_t\big(\lambda X+(1-\lambda)Y\big)\leq \lambda\rho_t(X)+(1-\lambda)\rho_t(Y)$, for $X,Y\in \X_T$, $\lambda\in[0,1]$, and $t\in[0,T]$.
    \item 
        Coherence: cash-additivity + convexity + positive homogeneity.
\end{enumerate}
For further details on the financial interpretation of the properties listed above, interested readers are referred to \cite{ Follmer+Schied_2016_Stochastic_finance, Frittelli+RosazzaGianin_2002_Putting_order_risk_measures,Barrieu+ElKaroui_2009_Pricing_hedging_optimally_designing_derivatives_minimization_risk_measures, RosazzaGianin_2006_Risk_measures_g-expectations}. 
We here briefly recall that positive homogeneity implies normalization, hence a coherent risk measure is always normalized.

If $\rho$ is a dynamic risk measure on $\X_T$, and $X\in \X_T$, we will often work with the naturally defined $\bm\F$-adapted stochastic process ${\rho(X):\Om\times[0,T]\ni(\om,t)\mapsto\rho_t(X)(\om)\in \R}$. 
This notation highlights the connection to BSDEs, which is going to be examined in the next subsection.

\subsection{Backward stochastic differential equations}
\label{SEC:BSDEs}
We now provide the main results on BSDEs that we will use in this paper. 
\begin{definition}
\label{DEF:solution_BSDE_jumps}
    We call driver a $\mathcal P\otimes\mathscr B(\R)\otimes\mathscr B(\R^m)$-measurable function 
    \[
        g:\Om\times[0,T]\times\R\times \R^m\ni(\om,t,y,z)\mapsto g(\om,t,y,z)\in \R.
    \]
    If $g$ is a driver, $T'\in(0,T]$, and $X:\Om\to\R$ is $\F_{T'}$-measurable, then a solution to the BSDE with parameters $(g,T',X)$ is a couple $(Y,Z)\in\mathcal S_{T'}^2\times \mathcal H_{T'}^2$ such that, for any $t\in[0,T']$:
    \begin{equation}
    \label{EQ:BSDE}
        Y_t=X+\int_t^{T'}g(s, Y_s, Z_s)\d s - \int_t^{T'} Z_s\cdot \d W_s.
    \end{equation}
\end{definition}

We recall some theorems of existence, uniqueness and regularity for solutions to BSDEs.
\begin{enumerate}[label=$\mathbf{(L)}$]
    \item
    \label{IT:L_condition}
        Let a driver $g$ satisfy the following assumptions. 
        There exist $p\in[2,+\infty)$ and $L\geq 0$ such that $g(\,\cdot\,,0,0)\in L^p_T$ and 
        $\P\otimes\ell_1$-a.e., for all $(y,z),(y',z')\in\R\times\R^m$:
        \[
                |g(\,\cdot\,,y,z)-g(\,\cdot\,,y',z')|\leq L\big(|y-y'|+|z-z'|\big),
        \]
        Then, for every $X\in L^p(\F_T)$ there exists a unique solution $(Y,Z)$ to the BSDE~\eqref{EQ:BSDE} with parameters $(g,T,X)$.
        Moreover, $(Y,Z)\in\mathcal S^p_T\times\mathcal H^p_T$.
        See \cite{Pardoux+Peng_1990_Adapted_solution_backward_stochastic_differential_equation} and \cite[Theorems~4.2.1, 4.3.1]{Zhang_2017_Backward_stochastic_differential_equations}.
\end{enumerate}
\begin{enumerate}[label=$\mathbf{(Q)}$]
\item
\label{IT:Q_condition}
    Let a driver $g$ satisfy the following assumptions.
    There exist constants $\beta,\gamma,K>0$ and a predictable process $\alpha:\Om\times[0,T]\to[0,+\infty)$ such that $\int_0^T\a\d\ell_1\in L^\infty$ and, $\P\otimes\ell_1$-a.e., the following conditions hold for all $(y,z),(y',z')\in \R\times\R^m$:
    \begin{align}
        &|g(\,\cdot\,,y,z)|
            \leq \alpha+\beta|y|+\gamma|z|^2,
            \label{EQ:Q_growth}\\
        &|g(\,\cdot\,,y,z)-g(\,\cdot\,,y',z')|
            \leq K
            \big[
                |y-y'|
                +(1+|y|+|y'|+|z|+|z'|)|z-z'|
            \big].
            \label{EQ:Q_local_lip}
    \end{align}
    Then, for every $X\in L^\infty(\F_T)$, there exists a unique solution
    $(Y,Z)$ to the BSDE~\eqref{EQ:BSDE} with parameters $(g,T,X)$.
    Moreover, $(Y,Z)\in\mathcal S^\infty_T\times\BMO_T$.
    Existence follows from \cite[Lemma~2]{Briand+Hu_2008_Quadratic_BSDEs_convex_generators_unbounded_terminal_conditions}.
    Uniqueness and regularities follow by adapting the standard a priori estimates and linearisation arguments in \cite[Theorems~7.2.1,~7.3.1]{Zhang_2017_Backward_stochastic_differential_equations}, see also \cite{Kobylanski_2000_Backward_stochastic_differential_equations_partial_differential_equations_quadratic_growth,Laeven+Stadje_2014_Robust_portfolio_choice_indifference_valuation}.
\end{enumerate}
\begin{enumerate}[label=$\mathbf{(E)}$]
    \item
    \label{IT:exp_condition}
    Let a driver $g$ satisfy the following assumptions.
    There exist constants $\beta, \gamma>0$, and a predictable process $\alpha:\Om\times[0,T]\to[0,+\infty)$ such that $\int_0^T\a\d\ell_1\in L^{\mathrm {exp}}$ and, $\P\otimes\ell_1$-a.e., the following conditions hold for all $y,y'\in\R$, $z\in\R^m$:
    \begin{align}
        &\R^m\ni z'\mapsto g(\,\cdot\,,y,z')\in\R \text{ is convex},\\
        &|g(\,\cdot\,,y,z)-g(\,\cdot\,,y',z)|
            \leq \beta |y-y'|,\\
        &|g(\,\cdot\,,y,z)|
            \leq \alpha+\beta |y|+\gamma|z|^2.
    \end{align}
    Then, for every $X\in L^{\mathrm{exp}}(\F_T)$, there exists a unique solution $(Y,Z)$ to the BSDE
   ~\eqref{EQ:BSDE} with parameters $(g,T,X)$ such that $Y\in \mathcal S^{\mathrm{exp}}_T$ and $Z\in\mathcal H^p_T$ for every $p\geq1$.
    See \cite[Corollary~6]{Briand+Hu_2008_Quadratic_BSDEs_convex_generators_unbounded_terminal_conditions}.
\end{enumerate}

\begin{remark}
\label{REM:induced_dyn_risk_meas}
    Consider a driver $g$ satisfying one of the three conditions~\ref{IT:L_condition},~\ref{IT:Q_condition},~\ref{IT:exp_condition} above, and set $\mathcal X_T=L^p(\F_T)$, $\mathcal X_T=L^\infty(\F_T)$, or $\mathcal X_T=L^{\mathrm{exp}}(\F_T)$, respectively.
    Then the first components of the solutions to the family of BSDEs with parameters $(g,T,X)$, as $X$ varies in $\mathcal X_T$, naturally define a conditional non-linear expectation, see \cite{Peng_1997_Backward_SDE_related_g-expectation,RosazzaGianin_2006_Risk_measures_g-expectations,Kratschmer+Ladkau+Laeven+Schoenmakers+Stadje_2018_Optimal_stopping_under_uncertainty_drift_jump_intensity}, or, according to Definition~\ref{DEF:dynamic_risk_measure}, a dynamic risk measure on $\mathcal X_T$.
    Indeed, if $Y^X$ denotes the first component of the solution to the BSDE with parameters $(g,T,X)$, then the function ${\rho_t:\mathcal X_{T}\ni X \mapsto Y^X_t\in \mathcal X_t}$ is well-defined for any $t\in[0,T]$, and the comparison theorem (see \cite[Theorems~4.4.1, 7.3.1]{Zhang_2017_Backward_stochastic_differential_equations} and \cite[Theorem~5]{Briand+Hu_2008_Quadratic_BSDEs_convex_generators_unbounded_terminal_conditions}) guarantees the monotonicity property. 
    We say that the dynamic risk measure $\rho$ constructed above is induced (on $\mathcal X_T$) by the (family of BSDEs with) driver $g$.
    Conversely, we say that a given dynamic risk measure $\rho$ is BSDE-induced if there exists a driver $g$ satisfying~\ref{IT:L_condition}, 
   ~\ref{IT:Q_condition}, or~\ref{IT:exp_condition},
    such that $\rho$ is induced by $g$.
    There are sufficient conditions under which dynamic 
    risk measures are BSDE-induced,
    see \cite{Peng_2004_Nonlinear_expectations_nonlinear_evaluations_risk_measures,Peng_2005_dynamically_consistent_nonlinear_evaluations,RosazzaGianin_2006_Risk_measures_g-expectations,Laeven+RosazzaGianin+Zullino_2025_Geometric_bsdes}.
\end{remark}
    
As is well-known, the properties of BSDE-induced dynamic risk measures are mainly dictated by the corresponding drivers.
In the following list, we make explicit the relation between the driver $g$ and the induced dynamic risk measure (see, for instance, \cite{Barrieu+ElKaroui_2009_Pricing_hedging_optimally_designing_derivatives_minimization_risk_measures,Jiang_2008_Convexity_translation_invariance_subadditivity_g-expectations_related_risk_measures,Laeven+RosazzaGianin+Zullino_2023_Dynamic_return_star-shaped_risk_measures_BSDEs,RosazzaGianin_2006_Risk_measures_g-expectations}). 
Let $g$ be a driver satisfying either~\ref{IT:L_condition},
\ref{IT:Q_condition}, or~\ref{IT:exp_condition}.
Let $\rho$ denote the dynamic risk measure induced by $g$,  as explained in Remark~\ref{REM:induced_dyn_risk_meas}.
Then the following properties hold.
\begin{enumerate}[noitemsep, topsep=0pt, leftmargin=1.3em]
    \item Normalization: 
        $\rho$ is normalized if and only if $g(\,\cdot\,,0,0)=0$.
    \item Cash-additivity: 
        $\rho$ is cash-additive if 
        $g$ is independent of $y$, namely, $\P\otimes\ell_1$-a.e., we have  $g(\,\cdot\,,y,z)=g(\,\cdot\,,0,z)$ for all $(y,z)\in\R\times\R^m$.
        In this case, we identify $g$ with a driver defined on $\Om\times[0,T]\times\R^m$.
    \item Positive homogeneity: 
            If $g$ is $\P\otimes\ell_1$-a.e.\ positively homogeneous in $(y,z)$,
            then $\rho$ is positively homogeneous.
    \item Convexity: 
            If $g$ is $\P\otimes\ell_1$-a.e.\ convex in $(y,z)$, then $\rho$ is convex.\label{PAGE:convex}
    \item Coherence:
        If $g$ is independent of $y$
        and is $\P\otimes\ell_1$-a.e. convex and positively homogeneous in $z$, then $\rho$ is coherent.
\end{enumerate}

\medskip
The converse implications, namely the derivation of structural properties of the driver from the corresponding properties of the induced risk measure, hold under additional assumptions ensuring the local representation of the driver.
We refer to \cite{Jiang_2008_Convexity_translation_invariance_subadditivity_g-expectations_related_risk_measures} in the Lipschitz case and \cite{Ma+Yao_2010_Quadratic_g-evaluations/expectations_related_analysis, Zheng+Li_2018_Representation_theorem_generators_quadratic_BSDEs} in the quadratic case.

\section{Resilience Evaluation}
\label{SEC:res_eval}

This section introduces the resilience evaluation and develops its main representations.
We first give the definition, relate it to the resilience rate of \cite{Laeven+Ferrari+Rosazza+Zullino_2025_measuringfinancialresilienceusing}, and discuss its principal interpretations.
We then examine the conditional variance and the entropic risk measure as preliminary benchmarks.
Finally, we establish the representation theorem for BSDE-induced convex risk measures, present its coherent specialization, and study the attainment of the resulting essential supremum.

\subsection{Definition and conditional-expectation benchmark}
\label{SEC:definition}
The construction is formulated for a general dynamic risk measure and a continuous It\^o process.
Later on, both objects will be specialized in several directions.

Assume that $\X_T$ is a vector subspace of $L^1(\F_T)$ containing the constants, and set $\X_t:=\X_T\cap L^1(\F_t)$ for all $t\in[0,T]$.
Fix a normalized and cash-additive dynamic risk measure $\rho$ on $\X_T$.
Let $\pi=(\pi_t)_{t\in[0,T]}$ be a real-valued It\^o process with dynamics
\begin{equation}
\label{EQ:pi_Ito}
    \d \pi_t=b_t\d t+\s_t\cdot\d W_t,
    \qquad t\in[0,T],
\end{equation}
where $b\in L^1_T$ is $\bm\F$-adapted and $\s\in\mathcal H^2_T$.
Further assume that for any $t\in[0,T)$ exists $h\in(0,T-t]$ suh that the increment $\Delta_\eps\pi_t:=\pi_{t+\eps}-\pi_t $ is in $\X_T$ for all $\eps\in(0,h]$.

\begin{definition}
\label{DEF:res_eval}
    Let $0\leq s\leq t < T$.
    Whenever the following limit exists in $L^1(\F_s)$, we define the \textbf{resilience evaluation} of $\pi_t$ through $\r_s$ by
    \[
        \mathcal D^\r_s\pi_t
        :=
        L^1\text{ \!-\!}\lim_{\varepsilon\to0^+}
        \frac1\eps\r_s(\pi_{t+\eps}-\pi_t).
    \]
\end{definition}

We first illustrate the definition through the conditional expectation. 
This elementary benchmark also identifies the connection between the resilience evaluation $\mathcal D^\rho\pi$ and the resilience rate $\bm\dot\pi$ introduced in \cite{Laeven+Ferrari+Rosazza+Zullino_2025_measuringfinancialresilienceusing}.

\begin{example}[Resilience rate]
\label{EX:resilience_rate}
    Suppose that the dynamic risk measure $\r$ is the conditional expectation:
    \[
        \r_s(X)=\E[X|\F_s],
        \qquad
        \forall\,X\in L^1(\F_T),\ s\in[0,T].
    \]
    Let $\pi$ be as above.
    Then the increments $\Delta_\eps\pi_t$ belong to $L^1(\F_T)$ and the random variable
    $\rho_s(\Delta_\eps\pi_t)$ is well-defined in $L^1(\F_s)$, for all
    $0\leq s\leq t<t+\eps\leq T$.
    By direct inspection,
    \begin{align}
        \frac1\eps\r_s(\pi_{t+\eps}-\pi_t)
        =
        \frac1\eps
        \E\left[
            \left.
            \int_t^{t+\eps}b_r\d r
            +\int_t^{t+\eps}\s_r\cdot\d W_r
            \right|\F_s
        \right] =
        \frac1\eps
        \E\left[
            \left.
            \int_t^{t+\eps}b_r\d r
            \right|\F_s
        \right],
    \end{align}
    where we used the martingale property of the It\^o integral.
    In view of Lemma~\ref{LEM:integral_average}, the last expression converges in $L^1$,
    as $\eps\to0^+$, to $\E[b_t|\F_s]$ for $\ell_1$-a.e. $t\in[0,T)$ and all
    $s\in[0,t]$.
    Therefore,
    \[
        \mathcal D_s^\E\pi_t
        =
        \E[b_t|\F_s],
        \qquad
        \ell_1\text{-a.e. }t\in[0,T),\ \forall\,s\in[0,t].
    \]

    Suppose, in addition, that $\pi$ is the first component of the solution
    $(\pi,Z^\pi)$ to a BSDE with parameters $(g^\pi,T,\pi_{\sss T})$, where, for
    instance, $\pi_{\sss T}\in L^2(\F_T)$ and $g^\pi$ satisfies
   ~\ref{IT:L_condition} for $p=2$.
    Then $\pi$ can be interpreted as the dynamic risk assessment of the risk
    $\pi_{\sss T}$ and
    \begin{equation}
    \label{EQ:pi_BSDE_relation}
        b=-g^\pi(\,\cdot\,,\pi,Z^\pi),
        \qquad
        \s=Z^\pi.
    \end{equation}
    In this case, $b\in \mathcal S^2_T$ and $\s\in\mathcal H^2_T$, and the preceding formula
    becomes
    \[
        \mathcal D_s^\E\pi_t
        =
        -\E\big[g^\pi(t,\pi_t,Z^\pi_t)\big|\F_s\big],
        \qquad
        \ell_1\text{-a.e. }t\in[0,T),\ \forall\,s\in[0,t].
    \]
    This coincides with the conditional resilience rate
    $\bm\dot\pi_{t|s}$ of the risk $\pi_{\sss T}$, as defined in
    \cite{Laeven+Ferrari+Rosazza+Zullino_2025_measuringfinancialresilienceusing}.
\end{example}

As in the previous example, we will later specialize the space of risks $\X_T$, the outer dynamic risk measure $\rho$, and the inner price/risk process $\pi$ to several specific settings.
In the general results of  Section~\ref{SEC:resil_convex}, we shall assume that $\rho$ is induced by a family of BSDEs with driver $g$, and the domain $\mathcal X_T$ is chosen accordingly.
Concerning the inner process $\pi$, we keep it as general as possible, imposing on its coefficients $b$ and $\s$ only the additional regularity needed to ensure that its increments belong to the chosen ambient space $\X_T$.
Since we have specified only the dynamics for $\pi$, we may either choose $\pi$ as the solution to a forward SDE with given initial condition $\pi_0$, or, as in Example~\ref{EX:resilience_rate}, let $\pi$ be the first component of a solution $(\pi,Z^\pi)$ to a BSDE with parameters $(g^\pi,T,\pi_{\sss T})$.
In the latter case, the relation between the It\^o coefficients and the BSDE parameters is given by~\eqref{EQ:pi_BSDE_relation}.

\subsection{Interpretation}
\label{SEC:res_eval_interpretation}
We collect here the main interpretations of the resilience evaluation introduced in Definition~\ref{DEF:res_eval}.
Some of these interpretations anticipate the representation results established in Section~\ref{SEC:resil_convex}, in particular Theorem~\ref{TH:resilience_convex}.

\begin{itemize}[leftmargin=1em]
    \item
    \label{IT:res_interp_risk_rate}
    \textbf{Risk rate of a future increment.}
        For $0\leq s\leq t<T$, the random variable $\rho_s(\pi_{t+\eps}-\pi_t)$ is the risk, assessed at time $s$, of the future exposure generated by the increment of the price/risk process $\pi$ over the interval $[t,t+\eps]$.
        The factor $\eps^{-1}$ in Definition~\ref{DEF:res_eval} converts this risk into a rate.
        Thus $\mathcal D^\rho_s\pi_t$ is the per-unit-time present risk of an infinitesimal future exposure to $\pi$.

    \item
    \label{IT:res_interp_curve}
    \textbf{Slope of the exposure-risk curve.}
        The idea above can be formalized as follows. 
        Fix $0\leq s\leq t<T$ and define
        \[
            \varphi_{s,t}:[0,T-t]\ni\eps\mapsto \rho_s(\pi_{t+\eps}-\pi_t)\in L^1(\F_s).
        \]
        By normalization, $\varphi_{s,t}(0)=0$.
        The map $\varphi_{s,t}$ gives the present risk of a future exposure to $\pi$ as a function of the duration of this exposure.
        Therefore, whenever $\varphi_{s,t}$ is differentiable at $0$, Definition~\ref{DEF:res_eval} gives
        \[
            \mathcal D^\rho_s\pi_t
            =
            \varphi'_{s,t}(0).
        \]
        Hence the resilience evaluation is the slope at the origin of the present risk of a future exposure, viewed as a function of the exposure duration.
        
    \item
    \label{IT:res_interp_driver}
    \textbf{Extension of the representation formula for BSDE drivers.}
        Suppose that $\rho$ is induced by a driver $g:\Omega\times[0,T]\times\R^m\to\R$.
        Representation results for BSDE generators yield, for every $z\in\R^m$ and for $\ell_1$-a.e.\ $s\in[0,T)$,
        \begin{equation}
        \label{EQ:Jiang_driver_representation}
            g(s,z)
            =
            L^1\text{ \!-\!}\lim_{\eps\to0^+}
            \frac1\eps
            \rho_s\big(z\cdot(W_{s+\eps}-W_s)\big);
        \end{equation}
        see, for instance, \cite[Lemma~2.1]{Jiang_2008_Convexity_translation_invariance_subadditivity_g-expectations_related_risk_measures}.
        Thus the driver is the infinitesimal risk rate of a rescaled Brownian increment.
        Definition~\ref{DEF:res_eval} applies the same principle to the local increment of a process $\pi$ at a possible future time $t\in[s,T)$.
        It can therefore be interpreted as a generalization of this representation formula.
        In particular,~\eqref{EQ:Jiang_driver_representation} is recovered from our definition by choosing $s=t$, $\s=z$, and $b=0$; see also Remark~\ref{REM:s=t}.
        
    \item
    \label{IT:res_interp_present}
    \textbf{First-order expansion at $s=t$.}
        Suppose now that $0\leq s = t <T$.
        Normalization and cash-additivity yield $\r_t(\pi_t)=\pi_t$ and $\r_t(\pi_{t+\eps}-\pi_t)=\r_t(\pi_{t+\eps})-\pi_t$.
        Therefore,
        \[ 
            \mathcal D^\rho_t\pi_t = L^1\text{ \!-\!}\lim_{\eps\to0^+} \frac1\eps\big[\rho_t(\pi_{t+\eps})-\rho_t(\pi_t)\big], 
        \] 
        whenever either limit exists. 
        Thus, at time $t$, resilience evaluation coincides with the right derivative of the risk level $r\mapsto\rho_t(\pi_r)$ at $r=t$: it is the future instantaneous change of the risk \emph{level} associated with~$\pi$, evaluated at time $t$ through the risk measure $\rho$.
        It follows from this formula that we can write a first-order expansion in time
        \[
            \r_t(\pi_{t+\eps})=\pi_t+\eps\,\mathcal D^\r_t\pi_t+o(\eps),\qquad \text{as }\eps\to 0^+,
        \]
        where the $o$ is intended in $L^1$, or in $\P$-probability.
        Therefore, $\mathcal D^\rho_t\pi_t$ serves also as a proxy for the computation of the instantaneous change in the risk associated with $\pi_t$.
        
    \item
    \label{IT:res_interp_forecast}
    \textbf{Present forecast of future resilience.}
        Assume that $\rho$ is induced by a convex driver $g$, then  Theorem~\ref{TH:resilience_convex} and Remark~\ref{REM:s=t} yield
        \[
            \mathcal D^\rho_s\pi_t
            =
            \esssup_{\mu\in\mathcal A^0}
            \E_\mu\big[
                \mathcal D^\rho_t\pi_t
                \big|\F_s
            \big],
            \qquad \ell_1\text{-a.e. }t\in[0,T), \ \forall\,s\in[0,t],
        \]
        where $\A^0$ is the class of predictable processes $\mu$ such that  $\mu\in\partial_zg(\,\cdot\,,0)$ $\P\otimes\ell_1$-a.e.
        Recall from the previous point the interpretation of $\mathcal D^\rho_t\pi_t$ as the instantaneous risk-adjusted rate of change of $\pi$ at the future time $t$, or equivalently the right derivative at $t$ of $r\mapsto\rho_t(\pi_r)$. 
        Hence $\mathcal D^\rho_s\pi_t$ is the time-$s$ risk-adjusted forecast of the future instantaneous resilience of $\pi$ at time $t$, where the forecast is taken over the zero-penalty scenarios encoded by $\mathcal A^0$.

    \item
    \label{IT:res_special_coherent} 
    \textbf{Risk of the rate of change under positive homogeneity.} 
        If $\rho$ is positively homogeneous, and $0\leq s \leq t <T$, then 
        \[ 
            \mathcal D^\rho_s\pi_t = L^1\text{ \!-\!}\lim_{\eps\to0^+} \rho_s\left(\frac{\pi_{t+\eps}-\pi_t}{\eps}\right), 
        \] whenever either limit exists. 
        In this case, the resilience evaluation can be read as the present risk of the future infinitesimal rate of change of $\pi$.
\end{itemize}

The next subsections prove the existence of $\mathcal D^\rho_s\pi_t$ and identify it explicitly for increasingly rich classes of dynamic risk measures.

\subsection{Conditional variance}
\label{SEC:variance}
Before turning to the entropic case, we consider the conditional variance as a preparatory benchmark. 
Although it is not a cash-additive risk measure, as later discussed in Remark~\ref{REM:variance_not_drm}, its infinitesimal behavior isolates the second-order term that will reappear in the entropic resilience evaluation.

According to \cite[Section~7.2]{ElKaroui+Ravanelli_2009_Cash_sub-additive_risk_measures_interest_rate_ambiguity}, for a solution $(Y,Z)$ to a BSDE, 
the absolute value of $Z$ is the local volatility of the conditional process $Y$.
Equivalently, the conditional variance of an infinitesimal increment in $Y$ coincides with the squared norm of $Z$ per unit time:
\[
    \V(\!\d Y_t|\F_t)=|Z_t|^2\d t.
\]
This informal statement stems from the following heuristic reasoning.
Letting $g$ be the driver of the BSDE, we have:
\begin{align}
    \V(\!\d Y_t|\F_t)
    &:=\E\left[\left.(\!\d Y_t-\E[\!\d Y_t|\F_t])^2\right|\F_t\right]\\
    &=\E\left[\left.\big(\!\d Y_t+g(t,Y_t,Z_t)\d t\big)^2\right|\F_t\right]\\
    &=\E\left[\left.(Z_t\cdot\d W_t)^2\right|\F_t\right]\\
    &=|Z_t|^2\d t,
\end{align}
where we used $\E[\!\d Y_t|\F_t]=\E[-g(t,Y_t,Z_t)\d t + Z_t\cdot\d W_t|\F_t]=-g(t,Y_t,Z_t)\d t$.

We now formalize this result, allowing at the same time the process $Y$ to be a general It\^o process.
\begin{theorem}
\label{TH:resilience_Variance}
    Assume that $b\in L^2_T$.
    For $\ell_1$-a.e. $t\in[0,T)$ and all $s\in[0,t]$, we have:
    \[
        \frac1\eps\V(\pi_{t+\eps}-\pi_t|\F_s)\longrightarrow\E\left[\left.|\s_t|^2\right|\F_s\right], \qquad \text{in }L^{1} .
    \]
\end{theorem}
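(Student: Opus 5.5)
Write $\Delta_\eps\pi_t=B_\eps+M_\eps$ with $B_\eps:=\int_t^{t+\eps}b_r\d r$ and $M_\eps:=\int_t^{t+\eps}\s_r\cdot\d W_r$. The conditional variance is
\[
    \V(\Delta_\eps\pi_t|\F_s)=\E[(B_\eps+M_\eps)^2|\F_s]-\big(\E[B_\eps|\F_s]\big)^2,
\]
since $\E[M_\eps|\F_s]=0$ (the stochastic integral is a true martingale because $\s\in\mathcal H^2_T$). Expanding the square gives four terms:
\[
    \frac1\eps\E[M_\eps^2|\F_s],\qquad
    \frac2\eps\E[B_\eps M_\eps|\F_s],\qquad
    \frac1\eps\E[B_\eps^2|\F_s],\qquad
    \frac1\eps\big(\E[B_\eps|\F_s]\big)^2.
\]
I will show that the first converges in $L^1$ to $\E[|\s_t|^2|\F_s]$ and the other three vanish in $L^1$.

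\textbf{Main term.} By the conditional Itô isometry, $\E[M_\eps^2|\F_s]=\E\big[\int_t^{t+\eps}|\s_r|^2\d r\big|\F_s\big]$. Since $|\s|^2\in L^1_T$, Lemma~\ref{LEM:integral_average} with $q=1$ and $\psi=|\s|^2$ gives the convergence to $\E[|\s_t|^2|\F_s]$ in $L^1$, for $\ell_1$-a.e.\ $t$ and all $s\le t$.

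\textbf{Drift terms.} By Cauchy--Schwarz, $B_\eps^2\le\eps\int_t^{t+\eps}b_r^2\d r$. Hence
\[
    \frac1\eps\E\big[\E[B_\eps^2|\F_s]\big]\le\E\Big[\int_t^{t+\eps}b_r^2\d r\Big]\to0,
\]
by dominated convergence, using $b\in L^2_T$. The same bound controls the squared conditional mean, since $(\E[B_\eps|\F_s])^2\le\E[B_\eps^2|\F_s]$ by conditional Jensen.

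\textbf{Cross term.} Apply Cauchy--Schwarz in $L^1$:
\[
    \frac1\eps\E|B_\eps M_\eps|\le\Big(\frac1\eps\E[B_\eps^2]\Big)^{1/2}\Big(\frac1\eps\E[M_\eps^2]\Big)^{1/2}.
\]
The first factor tends to zero by the previous step. The second factor equals $\big(\frac1\eps\E\int_t^{t+\eps}|\s_r|^2\d r\big)^{1/2}$, which by Lemma~\ref{LEM:integral_average} (taking $s=0$) converges to $\E[|\s_t|^2]^{1/2}<\infty$ for a.e.\ $t$, so it stays bounded as $\eps\to0^+$. Conditional expectation is an $L^1$-contraction, so the conditional cross term also vanishes in $L^1$.

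\textbf{Null set.} The only delicate point is that the exceptional null set of times must not depend on $s$. Lemma~\ref{LEM:integral_average} already states its conclusion for a.e.\ $t$ and all $s\in[0,t]$. The only other $t$-exceptional sets come from $\E[|\s_t|^2]<\infty$ and the lemma applied to $|\s|^2$ at $s=0$, both of which are independent of $s$. So taking a single countable union of null sets suffices, and I expect no genuine obstacle here.
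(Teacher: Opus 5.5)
Your proof is correct and follows essentially the same route as the paper. Both arguments use the same four-term expansion, with the conditional It\^o isometry plus Lemma~\ref{LEM:integral_average} for the main term and Cauchy--Schwarz for the remaining terms. The differences are minor: you kill the drift terms with the pathwise bound $B_\eps^2\le\eps\int_t^{t+\eps}b_r^2\d r$, which works for every $t$, whereas the paper uses the $L^2$-convergence of $\eps^{-1}B_\eps$. For the cross term you split symmetrically as $(\eps^{-1}\E[B_\eps^2])^{1/2}(\eps^{-1}\E[M_\eps^2])^{1/2}$, whereas the paper pairs the $L^2$-bounded $\eps^{-1}B_\eps$ with the $L^2$-vanishing $M_\eps$.
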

\begin{proof}
    Let us fix $t\in[0,T)$ and $\eps>0$ such that $t+\eps\leq T$, then 
    for all $s\in[0,t]$:
    \[
        \E\left[\pi_{t+\eps}-\pi_t|\F_s\right]
        =\E\left[\left.\int_t^{t+\eps}b_r\d r +\int_t^{t+\eps}\s_r\cdot\d W_r\right|\F_s\right]
        =\E\left[\left.\int_t^{t+\eps}b_r\d r \right|\F_s\right].
    \]
    Therefore, for fixed $s\in[0,t]$, we have:
    \begin{align}
        \frac1\eps\V(\pi_{t+\eps}-\pi_t|\F_s)
        =\,&\frac1\eps\E\left[\left.\left(\pi_{t+\eps}-\pi_{t}\right)^2\right|\F_s\right]-
        \frac1\eps\big(\E\left[\pi_{t+\eps}-\pi_t|\F_s\right]\big)^2\\
        =\,& \frac1\eps\E\left[\left.\left(\int_t^{t+\eps}b_r\d r \right)^2\right|\F_s\right] 
        + \frac1\eps\E\left[\left.\left(\int_t^{t+\eps}\s_r \cdot\d W_r \right)^2\right|\F_s\right] \\ 
        &
        +\frac2\eps\E\left[\left.\left(\int_t^{t+\eps}b_r \d r \right)\left(\int_t^{t+\eps}\s_r \cdot\d W_r \right)\right|\F_s\right]\\
        & -\frac1\eps\left(\E\left[\left.\int_t^{t+\eps}b_r\d r\right|\F_s\right]\right)^2.
    \end{align}
    Since $b\in L^2_T$, Lemma~\ref{LEM:integral_average} yields:
    \begin{equation}
    \label{EQ:mean_b_to_b}
        \frac1\eps\int_t^{t+\eps}b_r\d r \longrightarrow b_t, \qquad \textit{in }L^2 , \quad \ell_1\text{-a.e. }t\in[0,T).
    \end{equation}
    This fact directly implies the convergences for the first and fourth summand above, for $\ell_1\text{-a.e. }t\in[0,T)$, as $\eps\to 0^+$:
    \begin{align}
        \ds\frac1\eps\E\left[\left.\left(\int_t^{t+\eps}b_r \d r \right)^2\right|\F_s\right] =\eps\E\left[\left.\left(\frac1\eps\int_t^{t+\eps}b_r \d r \right)^2\right|\F_s\right] \longrightarrow0, \qquad \textit{in }L^{1} ,\\
        \frac1\eps\left(\E\left[\left.\int_t^{t+\eps}b_r\d r\right|\F_s\right]\right)^2=\eps\left(\E\left[\left.\frac1\eps\int_t^{t+\eps}b_r \d r \right|\F_s\right]\right)^2 \longrightarrow0, \qquad \textit{in }L^{1} .
    \end{align}
    The second term equals $\frac1\eps\E\left[\left.\int_t^{t+\eps}|\s_r|^2\d r\right|\F_s\right]$ by the Itô formula, and since $\s\in \mathcal H^2_T$ implies $|\s|^2\in L^{1}_T$, Lemma~\ref{LEM:integral_average} yields
    \[
        \frac1\eps\E\left[\left.\int_t^{t+\eps}|\s_r|^2\d r\right|\F_s\right] \longrightarrow \E\left[\left.|\s_t|^2\right|\F_s\right], \qquad \textit{in }L^{1} , \quad \ell_1\text{-a.e. }t\in[0,T).
    \]
    To complete the proof, we only need to show that the third summand vanishes in the limit:
    \begin{equation}
    \label{EQ:mixed_term}
        \frac1\eps\E\left[\left.\left(\int_t^{t+\eps}b_r \d r \right)\left(\int_t^{t+\eps}\s_r \cdot \d W_r \right)\right|\F_s\right]\longrightarrow0, \qquad \textit{in }L^1 , \quad \ell_1\text{-a.e. }t\in[0,T).
    \end{equation}
    Fix $t$ such that the convergence in equation~\eqref{EQ:mean_b_to_b} holds.
    In order to prove the last claim, we can use this general result: if $t_n\longrightarrow t$ and $(X_n)_{n\in\N}$ is a sequence of random variables bounded in $L^2$, then:
    \begin{equation}
    \label{EQ:TH:variance:claim}
        \E\left[\left.X_n \int_t^{t_n}\s_r\cdot \d W_r\right|\F_s\right]\longrightarrow0, \qquad \textit{in }L^{1} .
    \end{equation}
    Indeed, by conditional Jensen's inequality, Hölder's inequality, and  
    It\^o's isometry, we have:
    \begin{align}
        \E\left[\left|\E\left[\left.X_n \int_t^{t_n}\s_r\cdot \d W_r\right|\F_s\right]\right|\right]
        &\leq\E\left[\left|X_n \int_t^{t_n}\s_r\cdot \d W_r\right|\right]\\
        &\leq \sqrt{\E\big[|X_n|^2\big]} \sqrt{\E\left[\left|\int_t^{t_n}\s_r\cdot \d W_r\right|^2\right]}\\
        &= \|X_n\|_{L^2} \sqrt{\E\left[\int_t^{t_n}|\s_r|^2 \d r\right]}.
    \end{align}
    Because $\s\in\mathcal H_T^2$, the integral $\int_0^T|\s_r|^2\d r$ is $\P$-a.s. finite, hence $\int_t^{t_n}|\s_r|^2 \d r\longrightarrow 0$ $\P$-a.s. by absolute continuity of the Lebesgue integral.
    Also, $\int_t^{t_n}|\s_r|^2 \d r\leq \int_0^T|\s_r|^2 \d r\in L^{1}$.
    Hence, the dominated convergence theorem gives 
    \[
        \E\left[\int_t^{t_n}|\s_r|^2 \d r\right]\longrightarrow 0,
    \]
    which proves~\eqref{EQ:TH:variance:claim}, in view of the $L^2$-boundedness of $(X_n)_{n\in\N}$.
    
    In our setting $t_n:=t+\eps_n$ for an arbitrary infinitesimal positive sequence $(\eps_n)_{n\in\N}$,  and $X_n:=\frac{1}{\eps_n}\int_t^{t+\eps_n}b_r \d r$, which is bounded in $L^2 $ because it converges to $b_t$ in $L^2 $ for the $t$ we fixed above.
    We then obtain the claim~\eqref{EQ:mixed_term}, completing the proof.
\end{proof}

\begin{remark}
\label{REM:variance_not_drm}
    The conditional variance $\mathbb V(\,\cdot\,|\F_s)$ is not a dynamic cash-additive risk measure.
    Although it is well-defined as a map $L^\infty(\F_T)\to L^\infty(\F_s)$ and is normalized and convex, it is neither monotone nor cash-additive, being instead cash-invariant: $\mathbb V(X+m|\F_s)=\mathbb V(X|\F_s)$ for every $m\in L^\infty(\F_s)$.
    Thus, it falls outside the class covered by our standing assumptions.
    We include it nonetheless because Definition~\ref{DEF:res_eval} requires only the existence of the $L^1$-limit.
    Moreover, the variance illustrates that the limit extracts local information that is not captured by the conditional mean, namely the instantaneous conditional volatility $\E[|\s_t|^2|\F_s]$, i.e.,~the local quadratic variation of $\pi$.
    Eventually, the analysis of the conditional variance is preparatory for the next section, where it will reappear within our framework as the leading second-order correction of the entropic risk measure.
\end{remark}

\subsection{Entropic risk measure}
\label{SEC:entropic_RM}
We treat the entropic risk measure separately because its explicit exponential representation allows us to derive the resilience evaluation for possibly unbounded coefficients under suitable exponential-integrability assumptions, whereas the quadratic case of Theorem~\ref{TH:resilience_convex} requires $b,|\s|\in L^\infty_T$.

Let us define the entropic risk measure (see, e.g.,\ \cite{Barrieu+ElKaroui_2005_Inf-convolution_risk_measures_optimal_risk_transfer,Laeven+Stadje_2013_Entropy_coherent_entropy_convex_measures_risk}) with (a fixed) risk aversion coefficient $\gamma>0$:
\[
    \mathfrak e^\gamma_s(X)=\frac1\gamma\ln\E\left[\left.e^{\gamma X}\right|\F_s\right], \qquad s\in[0,T],
\]
which is finite $\P$-a.s.\ for every real-valued $\F_T$-measurable random variable $X$ such that $\E[e^{\gamma X}]<+\infty$.
On this natural domain, $\mathfrak e^\gamma$ is not necessarily a dynamic
risk measure in the sense of Definition~\ref{DEF:dynamic_risk_measure}, since its domain need not be a vector space and its values need not be integrable. 
This causes no difficulty here, since Definition~\ref{DEF:res_eval} only requires the relevant normalized quantities to converge in $L^1$.
We are interested in studying the setting under which the following limit exists, and  in giving an explicit expression for it:
\begin{equation}
\label{EQ:res_ev_entropic}
    L^1 \text{ \!-\!}\lim_{\eps\to 0^+}\frac1\eps \mathfrak e^\gamma_s(\pi_{t+\eps}-\pi_t), \qquad 0\leq s \leq t< T.
\end{equation}
Here, we need to impose sufficient exponential integrability on $\pi$ in order for the entropic risk measure of its increments to be well-defined.
More precisely, a minimal necessary requirement for the well-posedness of $\mathfrak e^\gamma_s(\pi_{t+\eps}-\pi_t)$
at all $s$ is the following: for any $t \in [0,T)$ there exists $\delta > 0$ such that, for any $\eps \in (0,\delta)$, the increment $\Delta_\eps \pi_t:=\pi_{t+\eps}-\pi_t$ satisfies $\E[\exp(\gamma \Delta_\eps \pi_t)]<+\infty$.
The assumptions required in the following result are stronger
and they turn out to be sufficient for the existence of the limit in~\eqref{EQ:res_ev_entropic}.

\begin{theorem}
\label{TH:resilience_entropy}
    Let $p\in(2,+\infty)$ and $\g>0$.
    Assume that $b,|\s|\in L^p_T$, and that there exists $q>p/(p-2)$ such that
    \begin{equation}
    \label{EQ:TH:resilience_entropy:HP_expon_integ}
        \E\left[\exp\Big(4q\g\int_0^T|b_r|\d r \Big)\right]<+\infty, \qquad 
        \E\left[\exp\Big(32q^2\g^2\int_0^T|\s_r|^2\d r \Big)\right]<+\infty.
    \end{equation}
    Then, for $\ell_1$-a.e. $t\in[0,T)$ and all $s\in[0,t]$, the resilience evaluation of $\pi_t$ through $\mathfrak e^\gamma_s$ is well-defined in $L^1$ and given by the formula:
    \[
        \mathcal D^{\mathfrak e^\gamma}_s\pi_t
        =\E[b_t|\F_s]+\frac\gamma2\E\big[|\s_t|^2\big|\F_s\big].
    \]
\end{theorem}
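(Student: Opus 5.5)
The approach is to reduce the entropic evaluation to a Girsanov change of measure with an exponential martingale. Then the drift and the quadratic correction come out explicitly, and everything left is a remainder to be controlled by the integrability assumptions. Fix $t$ with $t+\eps\le T$ and write $\Delta_\eps\pi_t=\int_t^{t+\eps}b_r\d r+\int_t^{t+\eps}\s_r\cdot\d W_r$. Introduce the exponential martingale $M^\eps:=\mathcal E\big(\g\int_t^{\cdot\wedge(t+\eps)}\s_r\cdot\d W_r\big)$. The second condition in~\eqref{EQ:TH:resilience_entropy:HP_expon_integ} implies Novikov, so $M^\eps$ is a true martingale with $\E[M^\eps_{t+\eps}|\F_t]=1$. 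With this notation
\[
    e^{\g\Delta_\eps\pi_t}=M^\eps_{t+\eps}\exp\Big(\g A_\eps\Big),\qquad A_\eps:=\int_t^{t+\eps}\Big(b_r+\tfrac\g2|\s_r|^2\Big)\d r .
\]
Hence $\mathfrak e^\g_s(\Delta_\eps\pi_t)=\g^{-1}\ln\E[M^\eps_{t+\eps}e^{\g A_\eps}|\F_s]$. The target is $\E[\psi_t|\F_s]$ with $\psi:=b+\frac\g2|\s|^2\in L^{p/2}_T\subset L^1_T$. By Lemma~\ref{LEM:integral_average}, for a.e.\ $t$ we have $\eps^{-1}A_\eps\to\psi_t$ in $L^{1}$ and $\E[\eps^{-1}A_\eps|\F_s]\to\E[\psi_t|\F_s]$ in $L^1$.

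The key steps are an expansion and a comparison. Set $R_\eps:=e^{\g A_\eps}-1-\g A_\eps$, so that
\[
    \E[M^\eps_{t+\eps}e^{\g A_\eps}|\F_s]=1+\g\E[M^\eps_{t+\eps}A_\eps|\F_s]+\E[M^\eps_{t+\eps}R_\eps|\F_s]=:1+\g U_\eps .
\]
The plan then has three parts. (i) Show $\eps^{-1}\E[(M^\eps_{t+\eps}-1)A_\eps|\F_s]\to0$ in $L^1$. By H\"older, $\E|(M^\eps_{t+\eps}-1)\eps^{-1}A_\eps|\le\|M^\eps_{t+\eps}-1\|_{L^{r'}}\|\eps^{-1}A_\eps\|_{L^{r}}$ with $r=p/2$. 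Here $\eps^{-1}A_\eps$ is bounded in $L^{p/2}$ by Jensen and Lemma~\ref{LEM:integral_average} at Lebesgue points of $t\mapsto\E|\psi_t|^{p/2}$. Since $M^\eps_{t+\eps}-1\to0$ in every $L^{r'}$ that the moment bounds allow, this term vanishes. The condition $q>p/(p-2)$ is exactly what makes $r'=p/(p-2)<q$ admissible. The $L^q$ bound on $M^\eps$ comes from $(M^\eps)^q=\mathcal E(2q\g\s\tinybullet W)^{1/2}\exp\big(c\,q^2\g^2\int|\s|^2\big)$ and Cauchy--Schwarz; this is where the constant $32q^2\g^2$ comes from. (ii) Show $\eps^{-1}\E[M^\eps_{t+\eps}R_\eps|\F_s]\to0$ in $L^1$. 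Use $|R_\eps|\le \frac{\g^2}{2}A_\eps^2e^{\g|A_\eps|}$, write $\eps^{-1}A_\eps^2=\eps\,(\eps^{-1}A_\eps)^2$, and apply H\"older with the $L^q$ bound on $M^\eps$ and the first condition in~\eqref{EQ:TH:resilience_entropy:HP_expon_integ} for $e^{4q\g|A_\eps|}$. The $|\s|^2$-part of $A_\eps$ is handled by the second condition. (iii) Together these give $\eps^{-1}U_\eps\to\E[\psi_t|\F_s]$ in $L^1$.

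Finally, pass through the logarithm. Write $\g^{-1}\eps^{-1}\ln(1+\g U_\eps)=\eps^{-1}U_\eps-\eps^{-1}\g^{-1}\Phi(\g U_\eps)$, where $\Phi(x)=x-\ln(1+x)\ge0$ on $(-1,\infty)$. The main obstacle is this last term. Since $U_\eps=\eps\cdot O_{L^1}(1)$, $U_\eps\to0$ in probability, and $\Phi(x)\le x^2/(1+x)$ for $x\ge 0$, the positive side is easy to control. The issue is that $1+\g U_\eps$ can be close to $0$, so $L^1$ control is needed, not just convergence in probability. I would handle it with a two-sided sandwich. Conditional Jensen gives the lower bound $\mathfrak e^\g_s(\Delta_\eps\pi_t)\ge\E[\Delta_\eps\pi_t|\F_s]=\E[\int_t^{t+\eps}b|\F_s]$, and the elementary inequality $\ln(1+x)\le x$ gives the upper bound $\g^{-1}\eps^{-1}\ln(1+\g U_\eps)\le\eps^{-1}U_\eps$. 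Both bounds converge in $L^1$, to $\E[b_t|\F_s]$ and $\E[\psi_t|\F_s]$ respectively. So the family $\eps^{-1}\mathfrak e^\g_s(\Delta_\eps\pi_t)$ is uniformly integrable, and it converges in probability to $\E[\psi_t|\F_s]$ because $\eps^{-1}\Phi(\g U_\eps)\le \eps^{-1}(\g U_\eps)^2/(1+\g U_\eps)\to0$ in probability. By Vitali, the convergence holds in $L^1$, which is the claim.
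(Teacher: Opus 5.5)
Your proof is correct in outline but takes a genuinely different route from the paper.

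\textbf{The paper's route.} It applies the general conditional cumulant expansion of Proposition~\ref{PROP:Taylor_exp}, $\ln\E[e^{X_n}|\mathcal G]=\E[X_n|\mathcal G]+\frac12\V(X_n|\mathcal G)+o_{L^1}(\|X_n\|_{L^p}^2)$, to $X_n=\g\Delta_{\eps_n}\pi_t$. The mean term is handled by Lemma~\ref{LEM:integral_average} and the quadratic term by Theorem~\ref{TH:resilience_Variance}. The remainder is handled by showing, via Burkholder--Davis--Gundy, that $\|\Delta_{\eps}\pi_t\|_{L^p}^2/\eps$ stays bounded at Lebesgue points.

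\textbf{Your route.} You factor out the Dol\'eans exponential $M^\eps$, so that the correction $\frac\g2|\s|^2$ appears exactly as its compensator. This is the local change of measure $\mu=\g\s$, i.e.\ the selector $\Theta(\,\cdot\,,\s)=\g\s$ used in Step~5 of the proof of Theorem~\ref{TH:resilience_convex} for the entropic driver. As a result you only need a first-order expansion of the pathwise small quantity $A_\eps$. You then handle the logarithm softly, by the Jensen/$\ln(1+x)\le x$ sandwich plus Vitali, rather than by the explicit estimate on $y-\ln(1+y)$ in Appendix~\ref{SEC:app_taylor}.

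\textbf{Comparison.} The paper's route is modular and quantitative: the Proposition uses no It\^o structure, gives a remainder rate, and displays the entropic evaluation as mean plus $\frac\g2$ times variance. Yours exploits the It\^o structure directly and dispenses with BDG and with the variance theorem, but yields no rate. The Novikov claim, the factorization, part~(i) and the final sandwich/Vitali step all check out.

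\textbf{One step needs repair.} In (ii), the split $\eps^{-1}A_\eps^2=\eps(\eps^{-1}A_\eps)^2$ followed by H\"older requires uniform bounds on $\eps^{-1}A_\eps$ in some $L^{2a}$ with $a\ge1$. But $\psi=b+\frac\g2|\s|^2$ is only in $L^{p/2}_T$, so for $p<4$ you only control $\eps^{-1}A_\eps$ in $L^{p/2}$. The $L^q$ bound on $M^\eps$ alone is also insufficient here.

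Group instead as $(\eps^{-1}|A_\eps|)\cdot\big(|A_\eps|e^{\g|A_\eps|}M^\eps_{t+\eps}\big)$ and apply H\"older with exponents $p/2$ and $p/(p-2)$. The second factor tends to $0$ in probability, since $A_\eps\to0$ a.s.\ and $M^\eps_{t+\eps}\to1$ in probability. It is also bounded in $L^a$ for every $a<4q/3$, for two reasons:
\begin{itemize}
\item your Cauchy--Schwarz trick, together with $2(4q)^2-4q\le 32q^2$ and the second condition in~\eqref{EQ:TH:resilience_entropy:HP_expon_integ}, bounds $M^\eps_{t+\eps}$ in $L^{4q}$;
\item the two conditions together bound $e^{\g|A_\eps|}$ in $L^{2q}$.
\end{itemize}
Since $4q/3>q>p/(p-2)$, the second factor tends to $0$ in $L^{p/(p-2)}$, and the remainder term $\eps^{-1}\E[M^\eps_{t+\eps}R_\eps|\F_s]$ vanishes in $L^1$.
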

    A trivially sufficient assumption for~\eqref{EQ:TH:resilience_entropy:HP_expon_integ} is $b,|\s|\in L^\infty_T$.
To prove this theorem, we first need a preliminary general result, whose proof is postponed to Appendix~\ref{SEC:app_taylor}.
\begin{proposition}
\label{PROP:Taylor_exp}
    Let $(X_n)_{n\in\N}$ be a sequence of random variables satisfying the following assumptions.
    \begin{enumerate}[label=(\roman*)]
        \item\label{IT:PROP:Taylor_exp:conv} There exists $p>2$ such that $X_n\longrightarrow0$ in $L^p$, as $n\to+\infty$.
        \item\label{IT:PROP:Taylor_exp:unif_bound} There exists $q>p/(p-2)$ and $K\geq 0$ such that $\E\left[e^{2q|X_n|}\right]\leq K$ for all $n\in\N$.
    \end{enumerate} 
    Then for any sub-$\s$-algebra $\mathcal G\subseteq \F$, we have 
    \[
        \ln\E\left[\left.e^{X_n}\right|\mathcal G\right] = \E[X_n|\mathcal G] + \frac12\V(X_n|\mathcal G)+R_n, \qquad\forall\,n\in\N,
    \]
    where $R_n\in L^1(\mathcal G)$, for $n\in\N$, and  $\|R_n\|_{L^1}=o\left(\|X_n\|^2_{L^p}\right)$ as $n\to\infty$.
\end{proposition}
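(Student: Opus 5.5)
The plan is a conditional second-order Taylor expansion of $\ln$ around $1$, with all remainders controlled in $L^1$ through H\"older's inequality and the uniform exponential bound~\ref{IT:PROP:Taylor_exp:unif_bound}.

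First expand the exponential. Write $e^{x}=1+x+\tfrac12x^2+r(x)$ with $|r(x)|\le \tfrac16|x|^3e^{|x|}$, and set
\[
    M_n:=\E[e^{X_n}|\mathcal G]=1+A_n,\qquad A_n:=\E[X_n|\mathcal G]+\tfrac12\E[X_n^2|\mathcal G]+\E[r(X_n)|\mathcal G].
\]
Then expand the logarithm. Write $\ln(1+a)=a-\tfrac12a^2+\tilde r(a)$ on $a>-1$. A global bound for $\tilde r$ is not available, since $\ln$ blows up at $0$. Instead I would use
\[
    |\ln(1+a)-a+\tfrac12a^2|\le C|a|^3\bigl(1+(1+a)^{-1}\bigr)\qquad\text{for } a>-1,
\]
and control $(1+A_n)^{-1}=M_n^{-1}$ by conditional Jensen: $M_n\ge e^{\E[X_n|\mathcal G]}$, so $M_n^{-k}\le \E[e^{k|X_n|}|\mathcal G]$. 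This term is bounded in every $L^{r}$ with $r\le 2q/k$, by~\ref{IT:PROP:Taylor_exp:unif_bound}.

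Collecting terms, $\ln M_n=\E[X_n|\mathcal G]+\tfrac12\V(X_n|\mathcal G)+R_n$, where $R_n$ gathers:
\begin{enumerate}[label=(\alph*)]
    \item $\E[r(X_n)|\mathcal G]$;
    \item $-\tfrac12\bigl(A_n^2-(\E[X_n|\mathcal G])^2\bigr)$, which consists of cross terms of order three or more;
    \item $\tilde r(A_n)$.
\end{enumerate}

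Each piece is bounded separately.

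For (a), H\"older with exponents $p/3$ and its conjugate (valid since $p>3$ is not assumed, we need care; see below) gives $\E[|X_n|^3e^{|X_n|}]\le \|X_n\|_{L^p}^{3}\,\|e^{|X_n|}\|_{L^{p/(p-3)}}$.

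For (b), we get terms like $\E[X_n|\mathcal G]\,\E[X_n^2|\mathcal G]$ and $(\E[X_n^2|\mathcal G])^2$, which are $O(\|X_n\|_{L^p}^3)$ when $p\ge 3$ or $p\ge 4$.

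For (c), $|A_n|^3 M_n^{-1}$ is handled the same way.

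The main obstacle is that $p$ may lie in $(2,3]$, so cubic moments in $L^p$ are not available. The fix is to never split $|x|^3$ off fully. Instead use $|r(x)|\le C x^2\min(|x|,1)e^{|x|}$ and apply H\"older with three factors: $X_n^2\in L^{p/2}$, $\min(|X_n|,1)\in L^{s}$ with $s$ arbitrary, and $e^{|X_n|}\in L^{2q}$. The exponent constraint $\tfrac2p+\tfrac1s+\tfrac1{2q}\le 1$ is feasible precisely because $q>p/(p-2)$ gives $\tfrac2p+\tfrac1{q}<1$. Hence
\[
    \|\E[r(X_n)|\mathcal G]\|_{L^1}\le C\|X_n\|_{L^p}^2\,\|\min(|X_n|,1)\|_{L^s}=o(\|X_n\|_{L^p}^2),
\]
since $X_n\to0$ in probability and the minimum is bounded. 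This is where the condition $q>p/(p-2)$ enters.

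The same device handles (b) and (c), using $|\tilde r(a)|\le Ca^2\min(|a|,1)(1+M^{-1})$. In (c), the factor $M_n^{-1}\le\E[e^{|X_n|}|\mathcal G]$ is put in $L^{2q}$ via conditional Jensen, and $A_n^2$ is bounded by a constant times $(\E[X_n^2e^{|X_n|}|\mathcal G])$, whose $L^{p/2}$-type norms are controlled by H\"older. The factor $2q$ in the exponential is what allows two exponential factors to be absorbed. Finally, integrability of $R_n$ in $L^1(\mathcal G)$ follows from these same bounds.
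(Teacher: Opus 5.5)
There is a genuine gap, and it sits in step (c), where you handle the singularity of $\ln$ at $0$ with a global weight $(1+M_n^{-1})$. You then assert $A_n^2\le C\,\E[X_n^2e^{|X_n|}|\mathcal G]$. That inequality is false for nontrivial $\mathcal G$. Take an atom $B\in\mathcal G$ with $\P(B)=e^{-2qx}$, and let $X_n$ equal $x$ or $0$ on $B$ with conditional probability $\tfrac12$ each, and $X_n=0$ off $B$. Then (ii) holds with $K=2$, but on $B$ the ratio $A_n^2/\E[X_n^2e^{|X_n|}|\mathcal G]$ is of order $e^x/x^2$. What does hold is Cauchy--Schwarz:
\[
A_n^2\le\big(\E[|X_n|e^{|X_n|}|\mathcal G]\big)^2\le \E[X_n^2e^{|X_n|}|\mathcal G]\,\E[e^{|X_n|}|\mathcal G].
\]
Combined with your weight $M_n^{-1}\le\E[e^{|X_n|}|\mathcal G]$, the integrand in (c) therefore carries $X_n^2$ and three factors $e^{|X_n|}$, each inside a separate conditional expectation, not two. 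Since only $\E[e^{2q|X_n|}]$ is controlled, the H\"older bookkeeping then needs $\tfrac2p+\tfrac3{2q}<1$. This does not follow from $q>p/(p-2)$; for instance $p=3$, $q=7/2$ satisfies the hypothesis but violates this. So (c) does not close under the stated assumptions. A side remark: (a) only needs $\tfrac2p+\tfrac1{2q}<1$, so (a) is not where the hypothesis $q>p/(p-2)$ is really used.

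The fix is to never apply a negative moment of $M_n$ globally.

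\emph{Within your scheme:} split on $\{M_n<\tfrac12\}$.
\begin{enumerate}
\item Off this set, $|\tilde r(A_n)|\le C A_n^2\min(|A_n|,1)$ with no $M_n^{-1}$ factor. With the corrected bound on $A_n^2$, the exponent budget is exactly $\tfrac2p+\tfrac1q<1$, which the hypothesis gives.
\item On the set, $e^{\E[X_n|\mathcal G]}\le M_n<\tfrac12$ gives $\E[X_n|\mathcal G]\le\ln M_n<-\ln 2$. Hence $|\tilde r(A_n)|\le|\E[X_n|\mathcal G]|+\tfrac32$, and Markov's inequality shows this contributes $O(\|X_n\|_{L^p}^p)=o(\|X_n\|_{L^p}^2)$.
\end{enumerate}

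\emph{The paper's route:} it avoids the issue by centering. Set $\tilde X_n:=X_n-\E[X_n|\mathcal G]$ and $Y_n:=\E[e^{\tilde X_n}|\mathcal G]-1$. Then $\ln\E[e^{X_n}|\mathcal G]=\E[X_n|\mathcal G]+\ln(1+Y_n)$, and $Y_n\ge0$ by conditional Jensen. So the logarithm is only evaluated on $[1,\infty)$, and the first-order bound $0\le Y_n-\ln(1+Y_n)\le Y_n^r$ for $r\in[1,2]$ suffices.
\begin{enumerate}
\item A second-order Taylor lemma for $\exp$, applied to $\tilde X_n$, gives $Y_n=\tfrac12\V(X_n|\mathcal G)+o_{L^1}(\|X_n\|_{L^p}^2)$. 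This lemma needs $\E[e^{q|\tilde X_n|}]\le K$, and that bound is where the factor $2$ in $2q$ is spent.
\item One then shows $\|Y_n\|_{L^r}^r\lesssim\|X_n\|_{L^p}^{2r}$ for some $r>1$ with $rp/(p-2r)<q$.
\end{enumerate}
Centering thus removes both the log singularity and all the cross terms of your (b).
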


\begin{proof}[\textbf{Proof of Theorem~\ref{TH:resilience_entropy}}]
    Fix $t\in[0,T)$, let $(\eps_n)_{n\in\N}$ be an arbitrary positive infinitesimal sequence (such that $t+\eps_n\leq T$ for all $n\in\N$) and denote
    \[
        X_n:=\pi_{t+\eps_n}-\pi_t=\int_t^{t+\eps_n}b_r\d r + \int_t^{t+\eps_n}\s_r\cdot\d W_r, \qquad\forall\, n\in\N.
    \]

\textit{Step $1$.}
    We aim to apply Proposition~\ref{PROP:Taylor_exp} to the sequence $(\g X_n)_{n\in\N}$ with the exponents $p>2$, $q>p/(p-2)$ from the hypotheses of the theorem. 

    Since  $b, |\s|\in L^p_T$, we have
    \begin{align}
        \|X_n\|_{L^p}
        &\leq \left\|\int_t^{t+\eps_n}b_r\d r \right\|_{L^p}    + \left\|\int_t^{t+\eps_n}\s_r\cdot\d W_r \right\|_{L^p}\\
        &\leq \int_t^{t+\eps_n}\left\|b_r\right\|_{L^p}\d r     + C\left\|\left(\int_t^{t+\eps_n}|\s_r|^2\d r\right)^{1/2} \right\|_{L^p},
    \end{align}
    where we resorted to the Burkholder-Davis-Gundy inequality (see, e.g.,\ \cite[Chapter~IV, Theorem~4.1]{Revuz+Yor_1999_Continuous_martingales_Brownian_motion}). 
    The first term converges to $0$ as $n\to\infty$ by continuity of the Lebesgue integral.
    Since  $|\s|\in L^p_T$, $|\s|^2$ is $\P$-a.s. integrable in time, and so $\int_t^{t+\eps_n}|\s_r|^2 \d r\longrightarrow 0$ $\P$-a.s.
    Moreover, since $\int_t^{t+\eps_n}|\s_r|^2 \d r\leq \int_0^T|\s_r|^2 \d r\in L^{p/2}$, the second term converges to $0$ by the dominated convergence theorem.
    This shows that $(\g X_n)_{n\in\N}$ satisfies hypothesis~\itemref{IT:PROP:Taylor_exp:conv} of Proposition~\ref{PROP:Taylor_exp}:
    \[
        \|X_n\|_{L^p}\longrightarrow 0, \qquad\text{as }n\to\infty.
    \]    
    
    As far as the hypothesis~\itemref{IT:PROP:Taylor_exp:unif_bound} of Proposition~\ref{PROP:Taylor_exp} is concerned, let $B_n:=\int_t^{t+\eps_n}b_r\d r$ for $n\in\N$, and $M_h:=\int_t^{t+h}\s_r\cdot\d W_r$ for $h\in[0,T-t]$, so that $X_n= B_n+ M_{\eps_n}$.
    Therefore, by the triangle, and the Cauchy-Schwarz inequalities
    \begin{align}
    \label{EQ:TH:entropy:prop_HP_2}
        \E\left[e^{2q|\g X_n|}\right]
        \leq \E\left[e^{2q\g|B_n|}e^{2q\g|M_{\eps_n}|}\right]
        \leq \left( \E\left[e^{4q\g |B_n|}\right]\right)^{1/2}\left( \E\left[e^{4q\g |M_{\eps_n}|}\right]\right)^{1/2}.
    \end{align}
    Since $|B_n|\leq\int_0^T|b_r|\d r$, the first expectation is bounded uniformly in $n$ by the first assumption in~\eqref{EQ:TH:resilience_entropy:HP_expon_integ}.
    Let us now deal with the second expectation.
    As recalled in Section~\ref{SEC:funct_spaces}, for any continuous $\bm\F$-martingale $N$ with $N_0=0$, $\mathcal E(N)$ is an $\bm\F$-supermartingale, hence $\E[\mathcal E(N)]\leq1$ everywhere in time.
    Combining this fact with the Cauchy-Schwarz inequality, we have for any $\l\in\R$, everywhere in time:
    \begin{equation}
    \label{EQ:TH:entropy:stoch_expon_bound_begin}
    \begin{aligned}
        \E\left[e^{\l M}\right]
        &=\E\left[e^{\l M-\l^2 \langle M\rangle}e^{\l^2 \langle M\rangle}\right]\\
        &\leq \left(\E\left[e^{2\l M-2\l^2 \langle M\rangle}\right]\right)^{1/2}\left(\E\left[e^{2\l^2 \langle M\rangle}\right]\right)^{1/2}\\
        &=\left(\E\left[\mathcal E(2\l M)\right]\right)^{1/2}\left(\E\left[e^{2\l^2 \langle M\rangle}\right]\right)^{1/2}\\
        &\leq \left(\E\left[e^{2\l^2 \langle M\rangle}\right]\right)^{1/2}.
    \end{aligned}
    \end{equation}
    Using the trivial inequality $e^{|x|}\leq e^x+e^{-x}$, for $x\in\R$, we infer
    \begin{equation}
    \label{EQ:TH:entropy:stoch_expon_bound}
        \E\left[e^{\l |M|}\right] 
        \leq \E\left[e^{\l M}\right] + \E\left[e^{-\l M}\right]
        \leq 2\left(\E\left[e^{2\l^2 \langle M\rangle}\right]\right)^{1/2}, \qquad \forall\,\l\in\R.
    \end{equation}
    We can use this estimate, with $\l:=4 q \g$  to bound the last expectation in~\eqref{EQ:TH:entropy:prop_HP_2}.
    We obtain
    \begin{align}
        \E\left[e^{4 q \g |M_{\eps_n}|}\right]
        &\leq 2\left(\E\left[\exp\bigg(32 q^2\g^2 \int_t^{t+\eps_n}|\s_r|^2\d r \bigg)\right]\right)^{\!\!1/2},
    \end{align}
    where the right-hand side is uniformly bounded in $n$ by the second assumption in~\eqref{EQ:TH:resilience_entropy:HP_expon_integ}.
    This proves that the right-hand side in~\eqref{EQ:TH:entropy:prop_HP_2} is uniformly bounded in $n$, hence $(\g X_n)_{n\in\N}$ satisfies the hypothesis~\itemref{IT:PROP:Taylor_exp:unif_bound} of Proposition~\ref{PROP:Taylor_exp}.

    Proposition~\ref{PROP:Taylor_exp} then applies to $(\g X_n)_{n\in\N}$, yielding, for every $s\in[0,t]:$
    \begin{equation}
        \ln\E\left[\left.e^{\g X_n}\right|\F_s\right]
        =\g \E\left[\left.X_n\right|\F_s\right]
        +\frac{\g^2}2\V\left(X_n|\F_s\right)
        +R_n, \qquad \forall\, n\in\N,
    \end{equation}
    where $\|R_n\|_{L^1}=o\big(\|X_n\|^2_{L^p}\big)$ as $n\to\infty$, for any $s\in[0,t]$.

\textit{Step $2$.}
    After dividing by $\g\eps_n>0$, we obtain
    \begin{equation}
    \label{EQ:TH:entropy:step_2}
        \frac1{\eps_n} \mathfrak e^\gamma_s(X_n)
        =\frac{1}{\eps_n}\E\left[\left.\pi_{t+\eps_n}-\pi_t\right|\F_s\right]
        +\frac\g2\,\frac1{\eps_n}\V\left(\left.\pi_{t+\eps_n}-\pi_t\right|\F_s\right)
        +\frac1{\g\eps_n}R_n, \qquad \forall\, n\in\N.
    \end{equation}
    For $\ell_1$-a.e. $t\in[0,T)$ and all $s\in[0,t]$: the first term in the right-hand side converges to $\E[b_t|\F_s]$ in $L^p$ as in Example~\ref{EX:resilience_rate}, while the second term converges to $\g \E[|\s_t|^2|\F_s]/2$ in $L^1$ thanks to Theorem~\ref{TH:resilience_Variance}.
    It only remains to show that the last term converges to $0$ in $L^1$, for $\ell_1$-a.e. $t\in[0,T)$ and all $s\in[0,t]$.

    By \textit{Step $1$}, there exists a deterministic sequence $a_n\to0$ such that $\|R_n\|_{L^1}\leq a_n\|X_n\|_{L^p}^2$.
    Hence
    \[
        \left\|\frac{R_n}{\g \eps_n}\right\|_{L^1}  
        \leq \frac{a_n}\gamma \frac{\|X_n\|^2_{L^p}}{\eps_n}.
    \]
    Thus the proof is complete if we show that $({\|X_n\|^2_{L^p}}/{\eps_n})_{n\in\N}$ is bounded.
    To this aim, we perform the following estimates:
    \begin{align}
        \frac{1}{\eps_n}\|X_n\|^2_{L^p}
        &\leq \frac{2}{\eps_n}\left\|\int_t^{t+\eps_n}b_r \d r \right\|^2_{L^p} + \frac{2}{\eps_n}\left\|\int_t^{t+\eps_n}\s_r \cdot\d W_r \right\|^2_{L^p}\\
        &\leq \frac{2}{\eps_n}\left(\E\left[\left|\int_t^{t+\eps_n}b_r \d r \right|^p\right]\right)^{2/p}+ \frac{2}{\eps_n}\left(C\,\E\left[\left|\int_t^{t+\eps_n}|\s_r|^2 \d r \right|^{p/2}\right]\right)^{2/p}\\
        &\leq 2\eps_n\left(\frac{1}{\eps_n}\int_t^{t+\eps_n}\E\left[\left|b_r \right|^p\right] \d r\right)^{2/p}+ 2C^{2/p}\left(\frac{1}{\eps_n}\int_t^{t+\eps_n}\E\left[\left|\s_r \right|^{p}\right]\d r\right)^{2/p}.
    \end{align}
    Here, we used the Young inequality in the first line and the Burkholder-Davis-Gundy inequality in the second (see, e.g.,\ \cite[Chapter~IV, Theorem~4.1]{Revuz+Yor_1999_Continuous_martingales_Brownian_motion}). 
    Last, we resorted twice to the Jensen inequality with the time integral, once with the convex function $|\cdot|^p$ and once with $|\cdot|^{p/2}$.
    Let us now notice that, under the working assumptions, we have $b,|\s|\in L^p_T$, hence Lemma~\ref{LEM:integral_average} gives
    \[
        \limsup_{n\to\infty}\frac{1}{\eps_n}\|X_n\|^2_{L^p}\leq 2C^{2/p}\|\s_t\|^2_{L^p}, \qquad \ell_1\text{-a.e. }t\in[0,T).
    \]
    This estimate shows that the sequence $({\|X_n\|^2_{L^p}}/{\eps_n})_{n\in\N}$ is bounded, hence the third term on the right-hand side of~\eqref{EQ:TH:entropy:step_2} is infinitesimal in $L^1$. 
    The claim follows by arbitrariness of $\eps_n\to 0^+$.
\end{proof}

\subsection{BSDE-induced convex risk measures}
\label{SEC:resil_convex}
Motivated by the previous subsection, where the restriction of the dynamic entropic risk measure $\mathfrak e^\gamma$ to $L^{\mathrm{exp}}(\F_T)$ admits a representation through the quadratic driver $g(z)=\frac{\gamma}{2}|z|^2$, we now turn to the general BSDE framework.
We study resilience evaluation through an outer normalized, cash-additive, and convex dynamic risk measure induced by a family of BSDEs.

\subsubsection{Convex drivers and dual representation}
Throughout this section, we fix a driver $g$  that satisfies any of the following two conditions.
\begin{enumerate}[label=$\mathbf{(C_L)}$]
    \item 
    \label{IT:g_lip}
        Let $g:\Om\times[0,T]\times\R^m\to \R$ be a driver such that
        \begin{itemize}[label=\scriptsize$\bullet$]
            \item $g$ is $\P\otimes\ell_1$-a.e. convex in $\R^m$.
            \item $g(\,\cdot\,,0)=0$, $\P\otimes\ell_1$-a.e.
            \item There exists $L\geq 0$ such that  $\P\otimes\ell_1$-a.e.,
            \begin{align}
            \label{EQ:g_lip}
                |g(\,\cdot\,,z)-g(\,\cdot\,,z')|\leq L|z-z'|, \qquad \forall\, z,z'\in\R^m.
            \end{align}
        \end{itemize}
\end{enumerate}
\begin{enumerate}[label=$\mathbf{(C_Q)}$]
    \item 
    \label{IT:g_quad}
        Let $g:\Om\times[0,T]\times\R^m\to \R$ be a driver such that
        \begin{itemize}[label=\scriptsize$\bullet$]
            \item $g$ is $\P\otimes\ell_1$-a.e. convex in $\R^m$.
            \item $g(\,\cdot\,,0)=0$, $\P\otimes\ell_1$-a.e.
            \item There exist $k> 0$ such that $\P\otimes\ell_1$-a.e.,
                \begin{align}
                \label{EQ:g_quad}
                    &|g(\,\cdot\,,z)|\leq k(1+|z|^2), \qquad \forall\, z\in\R^m.
                \end{align}
        \end{itemize}
\end{enumerate}

We first present a technical lemma that provides the predictable selectors $g^\ast,\Pi,\Theta$ for the convex conjugate of $g$, for the orthogonal projector onto $\partial_z g(\,\cdot\,,0)$, and for the subgradient $\partial_zg(\,\cdot\,,z)$, respectively.
The proof can be found in Appendix~\ref{SEC:app_duality}.

\begin{lemma}
\label{LEM:g*}
    There exist  $\mathcal P\otimes\mathscr B(\R^m)$-measurable maps 
    \[
        g^\ast:\Om\times[0,T]\times\R^m\to \R\cup\{+\infty\},
        \qquad \qquad  
        \Pi,\Theta:\Om\times[0,T]\times\R^m\to \R^m
    \]
    such that the following properties hold $\P\otimes\ell_1$-a.e. 
    \begin{enumerate}[label=(\roman*), noitemsep,leftmargin=2em]
        \item 
        \label{IT:LEM:g*:def}
            The map $q\mapsto g^\ast(\,\cdot\,, q)$ is the convex conjugate of $z\mapsto g(\,\cdot\,,z)$, namely
                \begin{equation}
                \label{EQ:LEM:g*:def}
                    g^\ast(\,\cdot\,, q)=\sup_{z\in\R^m}\big\{q\cdot z - g(\,\cdot\,,z)\big\}, \qquad \forall\, q\in\R^m.
                \end{equation}
            Moreover:
            \begin{enumerate}[label=(\alph*), noitemsep, leftmargin=1.5em]
                \item   
                \label{IT:LEM:g*:prop_nonnega}
                    It is proper, convex, lower semi-continuous and non-negative.
                    In addition, 
                    \begin{itemize}[leftmargin=*, label=\small$\bullet$]
                        \item if $g$ satisfies~\itemref{IT:g_quad}, then $\ds g^\ast(\,\cdot\,,q)\geq \frac{|q|^2}{4k}-k$ for all $q\in\R^m$.
                        \item if $g$ satisfies~\itemref{IT:g_lip}, then $\ds g^\ast(\,\cdot\,,q)=+\infty$ for all $|q|>L$.
                    \end{itemize}

                \item 
                \label{IT:LEM:g*:g=g**}
                    It holds:
                    \[
                        g^{\ast\ast}(\,\cdot\,,z):=\sup_{q\in\R^m}\big\{ q\cdot z - g^\ast (\,\cdot\,,q)\big\}=g(\,\cdot\,,z), \qquad \forall\, z\in\R^m.
                    \]

                \item 
                \label{IT:LEM:g*:subgrad_0}
                    $\partial_z g(\,\cdot\,,0)$ is a non-empty, closed, convex, $\P\otimes\ell_1$-uniformly bounded set, and it coincides with the zero-set of $q\mapsto g^\ast(\,\cdot\,,q)$, i.e.,
                    \[
                        q\in\partial_z g(\,\cdot\,,0)\qquad \Longleftrightarrow\qquad g^\ast(\,\cdot\,,q)=0.
                    \]
                    In particular, for any $q\in \partial_z g(\,\cdot\,,0)$, we have $|q|\leq L$ if $g$ satisfies~\itemref{IT:g_lip} and $|q|\leq 2k$ if $g$ satisfies~\itemref{IT:g_quad}. 
            \end{enumerate}

        \item 
        \label{IT:LEM:g*:projector}
            The map $q\mapsto \Pi(\,\cdot\,,q)$ is the orthogonal projector onto $\partial_z g (\,\cdot\,,0)$, namely:
            \begin{equation}
            \label{EQ:LEM:g*:def_proj}
                \Pi(\,\cdot\,,q)\in \partial_z g(\,\cdot\,,0),
                \qquad
                \big|q-\Pi(\,\cdot\,,q)\big|
                = \operatorname{dist}\big(q, \partial_z g(\,\cdot\,,0)\big).
            \end{equation}

        \item
        \label{IT:LEM:g*:selector_subgrad}
            For any $z\in\R^m$, we have $\Theta(\,\cdot\,,z)\in\partial_z g(\,\cdot\,,z)$, i.e.,~$g(\,\cdot\,,z) = \Theta(\,\cdot\,,z)\cdot z - g^\ast\big(\,\cdot\,,\Theta(\,\cdot\,,z)\big)$.
            Moreover, 
            $|\Theta(\,\cdot\,,z)|\leq L$ if $g$ satisfies~\itemref{IT:g_lip}, and $|\Theta(\,\cdot\,,z)|\leq 5k(1+|z|)$ if $g$ satisfies~\itemref{IT:g_quad}.
    \end{enumerate}
\end{lemma}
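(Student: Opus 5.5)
The proof splits into a pointwise convex-analytic part and a measurability part. Throughout, fix $(\om,t)$ outside a $\P\otimes\ell_1$-null set where $g(\om,t,\cdot)$ is convex, vanishes at $0$ and satisfies the Lipschitz or quadratic bound.

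\textbf{Pointwise facts.} Since $g(\om,t,\cdot)$ is real-valued and convex on $\R^m$, it is continuous. Hence its Fenchel conjugate is proper, convex and lower semicontinuous, and Fenchel--Moreau gives $g^{\ast\ast}=g$, which is \itemref{IT:LEM:g*:g=g**}. Non-negativity follows by taking $z=0$ in the supremum, since $g(\cdot,0)=0$. For the growth bounds we use the two driver conditions separately.
\begin{itemize}[leftmargin=1.5em]
    \item Under \itemref{IT:g_lip}, take $z=\lambda q$ and let $\lambda\to\infty$. This gives $q\cdot z-g(\cdot,z)\geq \lambda|q|^2-L\lambda|q|\to+\infty$ whenever $|q|>L$.
    \item Under \itemref{IT:g_quad}, take $z=q/(2k)$. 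This gives $g^\ast(\cdot,q)\geq |q|^2/(2k)-k(1+|q|^2/(4k^2))=|q|^2/(4k)-k$.
\end{itemize}
For \itemref{IT:LEM:g*:subgrad_0}, observe that $q\in\partial g(0)$ iff $q\cdot z-g(z)\le 0=q\cdot0-g(0)$ for all $z$, iff $g^\ast(q)\le 0$, iff $g^\ast(q)=0$. The set $\partial g(0)$ is non-empty because $g$ is finite and convex, and it is closed and convex as a sublevel set of $g^\ast$. The bounds on its elements are as follows.
\begin{itemize}[leftmargin=1.5em]
    \item In the Lipschitz case, $|q|\le L$ by the previous step.
    \item In the quadratic case, test $z=\lambda q/|q|$ in the inequality $q\cdot z\le g(z)$. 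This gives $\lambda|q|\le k(1+\lambda^2)$; taking $\lambda=1$ yields $|q|\le 2k$.
\end{itemize}
The bound on $\Theta(\cdot,z)$ uses the same device. If $\theta\in\partial g(z)$, then $g(z+u)\ge g(z)+\theta\cdot u$. Choose $u=\lambda\theta/|\theta|$ with $\lambda=1+|z|$ and apply the quadratic bound to both $g(z+u)$ and $g(z)$:
\[
|\theta|(1+|z|)\le k\bigl(2+|z|^2+(1+2|z|)^2\bigr)\le 5k(1+|z|)^2,
\]
so $|\theta|\le 5k(1+|z|)$. In the Lipschitz case $|\theta|\le L$ directly.

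\textbf{Measurable selections.} The set $\R^m$ is separable and $g$ is continuous in $z$, so the supremum defining $g^\ast$ can be taken over $z\in\mathbb{Q}^m$. This makes $g^\ast$ a countable supremum of $\mathcal P\otimes\mathscr B(\R^m)$-measurable functions, hence jointly measurable. We redefine everything on the exceptional null set, for instance by taking the corresponding objects of $g\equiv0$ there, which is predictable since the exceptional set can be chosen in $\mathcal P$ after modifying $g$.

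For $\Pi$, the set-valued map $(\om,t)\mapsto C(\om,t):=\partial_z g(\om,t,0)$ has closed non-empty values. Its graph is $\{(\om,t,q):g^\ast(\om,t,q)\le0\}\in\mathcal P\otimes\mathscr B(\R^m)$. Because the values are closed, $\{(\om,t): C\cap U\ne\emptyset\}$ for open $U$ can be expressed via the distance function
\[
\operatorname{dist}(q,C)=\inf_{r\in\mathbb{Q}^m}\bigl\{|q-r|+\infty\cdot\1_{\{g^\ast(r)>0\}}\bigr\}.
\]
This formula is delicate, since $C$ may have empty interior. Instead we use the characterization
\[
\operatorname{dist}(q,C)=\sup_{z}\Bigl\{q\cdot z-\sup_{p\in C}p\cdot z\Bigr\}^+ \text{ over }|z|\le1,
\]
where the support function satisfies $\sup_{p\in C}p\cdot z=g'(0;z)=\lim_n n\,g(z/n)$. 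The right-hand side is a countable limit and supremum of measurable maps, so $\operatorname{dist}(q,C)$ is jointly measurable. The graph of the projection, $\{(q,p): p\in C,\ |q-p|=\operatorname{dist}(q,C)\}$, is then measurable and single-valued, because projection onto a closed convex set is unique. The projection is also continuous in $q$. Hence $\Pi$ is measurable: a Carathéodory function with measurable graph, or alternatively use the Kuratowski--Ryll-Nardzewski theorem on $\Om\times[0,T]\times\R^m$.

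For $\Theta$, consider the set-valued map $(\om,t,z)\mapsto\partial_zg(\om,t,z)$. It is compact-valued and non-empty, and its graph $\{\theta:\ g(z)+g^\ast(\theta)=\theta\cdot z\}$ is $\mathcal P\otimes\mathscr B(\R^m)\otimes\mathscr B(\R^m)$-measurable. Since $\P$ is complete, the Aumann selection theorem gives a selector, but only measurable for the completed $\sigma$-algebra.

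\textbf{Main obstacle.} This completion issue is the hardest step: predictability is not preserved under completion. To get a genuinely $\mathcal P\otimes\mathscr B$-measurable selector I would avoid Aumann and use an explicit construction instead. The minimal-norm element of the subdifferential works: it equals $\Pi$-type projections of $0$ onto $\partial g(z)$, and it is the limit of the Moreau--Yosida gradients $\nabla g_\lambda(z)$ as $\lambda\to0$. Each $\nabla g_\lambda(z)$ is continuous in $z$ and measurable in $(\om,t)$ through countable approximations of the proximal map. Because the limit is taken along a countable sequence, measurability is preserved. The same device, applied to projection onto $C$, also gives $\Pi$ without completion.
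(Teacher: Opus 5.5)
Your pointwise convex analysis follows the paper's argument step by step. The only differences are cosmetic. You evaluate the conjugate at the maximiser $z=q/(2k)$ instead of computing $\sup_z\{q\cdot z-k(1+|z|^2)\}$. For the subgradient bound you take $\lambda=1+|z|$ where the paper takes $\lambda^2=2+2|z|^2$; both give $5k(1+|z|)$.

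The measurability part takes a genuinely different route. The paper stays inside the Rockafellar--Wets normal-integrand calculus. It obtains $\Pi$ from their argmin theorem (Theorem~14.37). For $\Theta$, it applies Theorem~14.56 to the Carath\'eodory integrand $g$, which shows that $(\omega,r,z)\mapsto\partial_z g(\omega,r,z)$ is measurable in the sense that preimages of open sets lie in $\mathcal P\otimes\mathscr B(\R^m)$. It then selects via Corollary~14.6, a Kuratowski--Ryll-Nardzewski-type result valid on an arbitrary measurable space. With that notion of measurability no completion ever enters. So the ``main obstacle'' you describe comes only from going through graph measurability and Aumann's theorem.

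Your workaround is nevertheless correct:
\begin{itemize}
\item $g_\lambda(z)=\inf_{y\in\mathbb{Q}^m}\{g(y)+|z-y|^2/(2\lambda)\}$ is jointly measurable;
\item its gradient is a countable limit of difference quotients;
\item $\nabla g_{1/n}(z)$ converges to the minimal-norm element of $\partial_z g(z)$.
\end{itemize}
This buys a self-contained proof and a canonical selector, at the cost of invoking the Yosida-approximation convergence theorem.

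Likewise, set $C=\partial_zg(\cdot,0)$. Your formula $\operatorname{dist}(q,C)=\sup\{q\cdot z-\lim_n n\,g(z/n):\ z\in\mathbb{Q}^m,\ |z|\le1\}$ uses that the support function of $C$ is the directional derivative $g'(0;\cdot)$. It gives an explicit proof that $C$ is a measurable multifunction. The paper's $\Pi$ step uses this fact without comment and obtains it only afterwards from Theorem~14.56 (explicitly in the proof of Lemma~\ref{LEM:h}).

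One step is not justified as written. For $\Pi$, you conclude measurability from the fact that the projection has a single-valued, $\mathcal P\otimes\mathscr B(\R^m)\otimes\mathscr B(\R^m)$-measurable graph. For a $\sigma$-algebra such as $\mathcal P$, graph measurability only expresses preimages as projections of measurable sets. That is exactly the completion problem you flagged for $\Theta$. Kuratowski--Ryll-Nardzewski would instead need weak measurability of the argmin map, which you have not shown.

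The gap closes immediately with what you already have. The function $\tfrac12\operatorname{dist}^2(\cdot,C)$ is $C^1$ with gradient $q-\Pi(\cdot,q)$. Hence $\Pi(\cdot,q)=q-\nabla_q\tfrac12\operatorname{dist}^2(q,C)$ is a countable limit of difference quotients of a jointly measurable function; this is your Moreau--Yosida device applied to the indicator of $C$. Alternatively, measurability of $\operatorname{dist}(\cdot,C)$ makes $C$ a measurable multifunction, after which the paper's appeal to the argmin theorem goes through.
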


The following lemma gathers some integrability estimates under the probability measure $\Q^\mu$ defined in Section~\ref{SEC:funct_spaces}.
The proof is classical, hence postponed to Appendix~\ref{SEC:app_duality}.
\begin{lemma}
\label{LEM:cond_expec_convex}
    For $\mu\in\BMO_T$ and $0\leq a\leq b\leq T$, define
    \[
        \mathcal E_{a,b}^\mu
        := 
        \frac{\mathcal E_b^\mu}{\mathcal E_a^\mu}
        =
        \exp\bigg(
            \int_a^b\mu_r\cdot\d W_r
            -
            \frac12\int_a^b|\mu_r|^2\d r
        \bigg).
    \]
    \begin{enumerate}[label=(\roman*), noitemsep]
        \item
        \label{IT:LEM:cond_expec_convex:Lexp_BMO}
            For every fixed $\mu\in\BMO_T$, and $p\in[1,+\infty)$
            \[
                L^{\mathrm{exp}}
                \subset
                L^{\mathrm{exp}}(\Q^\mu)
                \subset
                L^p(\Q^\mu).
            \]
        
        \item
        \label{IT:LEM:cond_expec_convex:stoch_exponen}
            Let $K>0$ and let $\mathcal U\subseteq\BMO_T$ be such that $|\mu|\leq K$, $\P\otimes\ell_1\text{-a.e.}$, for all $\mu\in\mathcal U$.
            Then, for every $p>1$, and $0\leq a\leq b\leq T$,
            \[
                \E\big[(\mathcal E_{a,b}^\mu)^p\big|\F_a\big]
                \leq
                \exp\Big(\frac p2(p-1)K^2T\Big),
                \qquad \forall\,\mu\in\mathcal U.
            \]

        \item
        \label{IT:LEM:cond_expec_convex:embedding}
            Under the assumptions of~\itemref{IT:LEM:cond_expec_convex:stoch_exponen},
            for every $1\leq u<p$, and every $s\in[0,T]$,
            \[
                \big(\E_\mu[|X|^u|\F_s]\big)^{1/u}
                \leq
                \exp\bigg(\frac{K^2T}{2(p-u)}\bigg)
                \big(\E[|X|^p|\F_s]\big)^{1/p},
                \qquad
                \forall\,\mu\in\mathcal U,\ X\in L^p.
            \]
            In particular, $L^p$ is continuously embedded in $L^u(\Q^\mu)$,
            uniformly in $\mu\in\mathcal U$.
    \end{enumerate}
\end{lemma}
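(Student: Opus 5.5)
We prove the three items in order, each building on the previous one.

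For \itemref{IT:LEM:cond_expec_convex:Lexp_BMO}, fix $\mu\in\BMO_T$ and $X\in L^{\mathrm{exp}}$. By the reverse H\"older inequality recalled in Section~\ref{SEC:funct_spaces} (taking $\tau=0$), there is $q>1$ with $\E[(\mathcal E^\mu_T)^q]\le C_{q,\mu}<+\infty$. Let $q'$ be its conjugate exponent. For every $c>0$, H\"older's inequality gives
\[
    \E_\mu\big[e^{c|X|}\big]=\E\big[\mathcal E^\mu_T e^{c|X|}\big]\le \big(\E[(\mathcal E_T^\mu)^q]\big)^{1/q}\big(\E[e^{cq'|X|}]\big)^{1/q'}<+\infty,
\]
so $X\in L^{\mathrm{exp}}(\Q^\mu)$. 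The second inclusion is elementary: $|x|^p\le C_{p}e^{|x|}$ for all $x\in\R$, which yields $L^{\mathrm{exp}}(\Q^\mu)\subset L^p(\Q^\mu)$ for every $p\in[1,+\infty)$.

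For \itemref{IT:LEM:cond_expec_convex:stoch_exponen}, write
\[
    (\mathcal E^\mu_{a,b})^p=\exp\Big(\int_a^b p\mu\cdot\d W-\tfrac{p^2}{2}\int_a^b|\mu|^2\d r\Big)\exp\Big(\tfrac{p(p-1)}{2}\int_a^b|\mu_r|^2\d r\Big).
\]
Since $|\mu|\le K$, the second factor is bounded by $\exp\big(\tfrac p2(p-1)K^2T\big)$. The first factor is $\mathcal E^{p\mu}_{a,b}$. Because $p\mu$ is bounded, Novikov's criterion makes $\mathcal E^{p\mu}$ a true martingale, so $\E[\mathcal E^{p\mu}_{a,b}|\F_a]=1$; the supermartingale property alone would already give $\le 1$. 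This proves the bound, uniformly in $\mu\in\mathcal U$.

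For \itemref{IT:LEM:cond_expec_convex:embedding}, apply the Bayes formula, $\E_\mu[|X|^u|\F_s]=\E[\mathcal E^\mu_{s,T}|X|^u|\F_s]$, and then conditional H\"older with exponents $r=p/u$ and $r'=p/(p-u)$:
\[
    \E_\mu[|X|^u|\F_s]\le \big(\E[|X|^p|\F_s]\big)^{u/p}\big(\E[(\mathcal E^\mu_{s,T})^{r'}|\F_s]\big)^{1/r'}.
\]
Item \itemref{IT:LEM:cond_expec_convex:stoch_exponen} with exponent $r'$ bounds the last factor by
\[
    \exp\Big(\tfrac{1}{2}(r'-1)K^2T\Big)=\exp\Big(\frac{uK^2T}{2(p-u)}\Big),
\]
since $r'-1=u/(p-u)$. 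Taking the $1/u$-th power of both sides gives exactly the constant $\exp\big(K^2T/(2(p-u))\big)$ in the statement. Taking expectations under $\P$ then gives the uniform continuous embedding $L^p\hookrightarrow L^u(\Q^\mu)$ for $\mu\in\mathcal U$.

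There is no real obstacle here. The points that need some care are the exponent bookkeeping in \itemref{IT:LEM:cond_expec_convex:embedding}, which must produce precisely the stated constant, and, in \itemref{IT:LEM:cond_expec_convex:Lexp_BMO}, using reverse H\"older for an unbounded BMO integrand $\mu$ rather than the bounded-integrand estimate of \itemref{IT:LEM:cond_expec_convex:stoch_exponen}.
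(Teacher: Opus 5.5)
Your proof is correct and follows the paper's argument essentially step by step. The three items match as follows: (i) uses reverse H\"older plus H\"older; (ii) uses the factorization $(\mathcal E^\mu_{a,b})^p=\mathcal E^{p\mu}_{a,b}\exp\big(\tfrac{p(p-1)}{2}\int_a^b|\mu_r|^2\d r\big)$ with Novikov; (iii) uses conditional Bayes plus conditional H\"older with exponents $p/u$ and $p/(p-u)$, with the same constant bookkeeping.

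The only point to tighten is the final embedding. Take $s=0$ (where $\Q^\mu=\P$ on $\F_0$), because for $s>0$ the $\P$-expectation of $\E_\mu[|X|^u|\F_s]$ is not $\E_\mu[|X|^u]$.
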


We now turn to the properties of the dynamic risk measure $\r$ induced by the driver $g$, and on its dual representation in particular.
The assertions of the following proposition are mainly standard.
The only point that needs a separate argument is the extension of the dual representation to unbounded terminal conditions with fixed dual class $\BMO_T$.
Related dual representations for unbounded terminal conditions are available in broader frameworks; see \cite[Theorems~4.5,~4.6]{Drapeau+Kupper+RosazzaGianin+Tangpi_2016_Dual_representation_minimal_supersolutions_convex_BSDEs}
and \cite[Theorem~3.1(i)]{Fan+Hu+Tang_2025_Unbounded_dynamic_concave_utilities_BSDEs}.
These results, however, are formulated over broader dual classes, which may depend on the terminal condition, and therefore do not directly yield  the representation over the fixed class $\BMO_T$ established below.
The proof is in Appendix~\ref{SEC:app_duality}.
\begin{proposition}
\label{PROP:dual_representation}
    Assume that $g$ satisfies either~\itemref{IT:g_lip} or~\itemref{IT:g_quad}.
    Then the risk measure $\r:\mathcal X_T\to\mathcal X_t$ induced by $g$ is normalized, cash-additive, convex and  enjoys the following dual representation for all $s\in[0,T]$: 
    \begin{equation}
    \label{EQ:dual_representation_convex}
        \r_s(X)=\esssup_{\mu\in\mathcal B}\E_{\mu}\left[\left.X-\int_s^Tg^\ast(r,\mu_r)\d r\right|\F_s\right],\qquad 
        \forall\, X\in \mathcal X_T,
    \end{equation}
    where:
    \begin{itemize}
        \item 
            If $g$ satisfies~\ref{IT:g_lip}, we have $\mathcal X_T= L^2(\F_T)$ and $\mathcal B$ is the set of all predictable processes $\mu:\Om\times[0,T]\to\R^m$ such that $|\mu|\leq L$, $\P\otimes\ell_1$-a.e.
        \item 
            If $g$ satisfies~\ref{IT:g_quad}, we have $\mathcal X_T= L^{\mathrm{exp}}(\F_T)$ and
            $\mathcal B:=\BMO_T$.
    \end{itemize}
    The conditional expectation in~\eqref{EQ:dual_representation_convex}
    is understood in the extended sense as a $[-\infty,+\infty)$-valued random variable.
\end{proposition}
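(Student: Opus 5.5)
The normalization, cash-additivity and convexity of $\rho$ follow from the driver properties listed in Section~\ref{SEC:BSDEs}. Indeed, $g(\cdot,0)=0$, $g$ does not depend on $y$, and $g$ is convex in $z$. So the work is the dual representation~\eqref{EQ:dual_representation_convex}. I will prove it in two inequalities, using the Fenchel identity of Lemma~\ref{LEM:g*} and Girsanov's theorem. Throughout, $(Y,Z)$ denotes the solution of the BSDE with parameters $(g,T,X)$, given by~\ref{IT:L_condition} with $p=2$ in the Lipschitz case and by~\ref{IT:exp_condition} in the quadratic case (the latter applies since $|g|\le k(1+|z|^2)$ and $g$ is convex).

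\emph{Upper bound.} Fix $\mu\in\mathcal B$. By Lemma~\ref{LEM:g*}\itemref{IT:LEM:g*:def}, $g(r,Z_r)\ge \mu_r\cdot Z_r-g^\ast(r,\mu_r)$. Rewriting the BSDE under $\Q^\mu$ with $W^\mu=W-\int\mu\,\d r$ gives
\[
Y_s=X+\int_s^T\big(g(r,Z_r)-\mu_r\cdot Z_r\big)\d r-\int_s^T Z_r\cdot \d W^\mu_r\ \ge\ X-\int_s^T g^\ast(r,\mu_r)\d r-\int_s^T Z_r\cdot\d W^\mu_r .
\]
Next I take $\E_\mu[\cdot|\F_s]$. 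The stochastic integral must be a true $\Q^\mu$-martingale, or at least I need localization plus Fatou. In the Lipschitz case $|\mu|\le L$, so Lemma~\ref{LEM:cond_expec_convex}\itemref{IT:LEM:cond_expec_convex:embedding} together with BDG gives $Z\in\mathcal H^1(\Q^\mu)$, and $X\in L^1(\Q^\mu)$. In the quadratic case $\mu\in\BMO_T$. The reverse Hölder inequality gives $\mathcal E^\mu_T\in L^{q}$ for some $q>1$. Since $Z\in\mathcal H^p_T$ and $X\in L^p$ for every $p$ (by~\ref{IT:exp_condition} and Lemma~\ref{LEM:cond_expec_convex}\itemref{IT:LEM:cond_expec_convex:Lexp_BMO}), Hölder yields $\E_\mu[(\int_0^T|Z|^2)^{1/2}]<\infty$, so the integral is a $\Q^\mu$-martingale. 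The penalty term is nonnegative, so its conditional expectation makes sense in the extended sense. This gives $Y_s\ge \E_\mu[X-\int_s^T g^\ast\,\d r|\F_s]$ for every $\mu\in\mathcal B$.

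\emph{Attainment.} I take $\hat\mu_r:=\Theta(r,Z_r)$ from Lemma~\ref{LEM:g*}\itemref{IT:LEM:g*:selector_subgrad}. It is predictable, and the Fenchel inequality above becomes an equality for it, so the argument turns into an identity once $\hat\mu\in\mathcal B$. In the Lipschitz case $|\hat\mu|\le L$, so this is immediate. In the quadratic case $|\hat\mu|\le 5k(1+|Z|)$, so $\hat\mu\in\BMO_T$ exactly when $Z\in\BMO_T$. This holds for bounded $X$ by~\ref{IT:Q_condition}, but fails in general for $X\in L^{\mathrm{exp}}$. This is the main obstacle.

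For unbounded $X$ I will truncate. Set $X^n:=(X\wedge n)\vee(-n)$. The bounded case gives an optimizer $\hat\mu^n\in\BMO_T$ with $Y^n_s=\E_{\hat\mu^n}[X^n-\int_s^T g^\ast(r,\hat\mu^n_r)\d r|\F_s]$. To pass from $X^n$ back to $X$ I need $\E_{\hat\mu^n}[X-X^n|\F_s]\to 0$ or a one-sided version of it. It suffices to control $\E_{\hat\mu^n}[(X-n)^+|\F_s]$ by the entropy-type penalty. Using the Young/entropy inequality $\E_\mu[\xi|\F_s]\le \log\E[e^{\xi}|\F_s]+H(\Q^\mu|\P)_s$ applied to $\xi=c(X-n)^+$, and the bound $g^\ast(q)\ge |q|^2/(4k)-k$ from Lemma~\ref{LEM:g*}\itemref{IT:LEM:g*:prop_nonnega} (so the relative entropy $\tfrac12\E_\mu[\int|\mu|^2|\F_s]$ is dominated by the penalty), I expect to get
\[
\E_{\hat\mu^n}[X|\F_s]-\E_{\hat\mu^n}\Big[\int_s^T g^\ast\,\d r\Big|\F_s\Big]\ \ge\ Y^n_s-\delta_n
\]
with $\delta_n\to 0$ in probability. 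The relevant factor is $\E[e^{c(X-n)^+}|\F_s]\to 1$, because $X\in L^{\mathrm{exp}}$ and $c$ can be taken large. Finally, $Y^n_s\to Y_s$ by stability of quadratic BSDEs with convex generators (Briand--Hu). Hence $\esssup_{\mu\in\BMO_T}\ge Y_s$, which combined with the upper bound proves~\eqref{EQ:dual_representation_convex}.
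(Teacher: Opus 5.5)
Your upper bound is fine and follows the paper's Fenchel-plus-Girsanov argument. The direct attainment via $\hat\mu=\Theta(\cdot,Z)$ is also valid, in the Lipschitz case and for bounded $X$ in the quadratic case. There the paper simply cites Barrieu--El Karoui (Theorem~7.4), and your selector argument is a self-contained substitute, once you note that \ref{IT:g_quad} implies \ref{IT:Q_condition} via $|\Theta(\cdot,z)|\le 5k(1+|z|)$. The gap is in the extension to $X\in L^{\mathrm{exp}}$. With $U_n:=|X-X^n|$, the entropy inequality gives
\[
\E_{\hat\mu^n}[U_n|\F_s]\le \frac1c\log\E\big[e^{cU_n}\big|\F_s\big]+\frac1c H^n_s,\qquad H^n_s:=\frac12\E_{\hat\mu^n}\Big[\int_s^T|\hat\mu^n_r|^2\d r\Big|\F_s\Big].
\]
Letting first $n\to\infty$ and then $c\to\infty$ works only if $\sup_n H^n_s<+\infty$ a.s. The fact that $\E[e^{cU_n}|\F_s]\to1$ does not supply this.

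You observe that $H^n_s\le 2kA^n_s+2k^2T$, where $A^n_s:=\E_{\hat\mu^n}[\int_s^Tg^\ast(r,\hat\mu^n_r)\d r|\F_s]$. That only moves the problem to the penalties of the truncated optimizers, and nothing in your sketch stops $A^n_s\to\infty$. This uniform bound is the key estimate of the paper's proof, and it needs its own two-sided argument.

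From below, compare with the zero-penalty control $q^0:=\Pi(\cdot,0)$: $Y^n_s\ge\E_{q^0}[X^n|\F_s]\ge-\E_{q^0}[X^-|\F_s]$. From above, apply the entropy inequality to $\lambda (X^n)^+$:
\[
Y^n_s=\E_{\hat\mu^n}[X^n|\F_s]-A^n_s\le\frac1\lambda\log\E\big[e^{\lambda X^+}\big|\F_s\big]+\frac{2k^2T}{\lambda}-\Big(1-\frac{2k}{\lambda}\Big)A^n_s .
\]
Choosing $\lambda>2k$ gives $\sup_nA^n_s<+\infty$, and hence $\sup_nH^n_s<+\infty$.

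Your one-sided reduction also controls the wrong tail. Since $X-X^n=(X-n)^+-(-X-n)^+$, the term $\E_{\hat\mu^n}[(X-n)^+|\F_s]$ already has the favourable sign for the bound $\E_{\hat\mu^n}[X-\int_s^Tg^\ast\d r|\F_s]\ge Y^n_s-\delta_n$. What you must control is $\E_{\hat\mu^n}[(-X-n)^+|\F_s]$, or simply $\E_{\hat\mu^n}[U_n|\F_s]$ as the paper does. This is easy to repair because $X\in L^{\mathrm{exp}}$ controls both tails, but the repair again relies on the uniform bound on $H^n_s$. With these two corrections and Briand--Hu stability along a subsequence, your argument becomes the paper's.
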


\subsubsection{Main representation theorem}
\label{SEC:main_result}

We introduce a further condition on $g^\ast$, which will be later required in Theorem~\ref{TH:resilience_convex} for some $0\leq s \leq t<T$.
\begin{enumerate}[label=\textup{(\textbf{H$_{s,t}$})}, noitemsep]
    \item
    \label{IT:TH:resilience_convex:GA}
        There exists a concave function $\psi:[0,+\infty]\to[0,+\infty]$, with $\liminf_{x\to 0^+}\psi(x)=0$, 
        such that $\P\otimes\ell_1\text{-a.e. on }\Om\times(s,t)$
        \begin{equation}
        \label{EQ:TH:resilience_convex:HP_psi}
            \operatorname{dist}^2\big(q, \partial_zg (\,\cdot\,,0)\big)
            \leq \psi\big(g^\ast(\,\cdot\,,q)\big),
            \qquad \forall\, q\in\R^m.
        \end{equation}
\end{enumerate}
\begin{remark}
\label{REM:growth_HP}
    We discuss the role and implications of the assumption~\itemref{IT:TH:resilience_convex:GA}.
    Some detailed examples of drivers satisfying this condition will be given in Section~\ref{SEC:examples}.
    
    \begin{enumerate}[label=($\roman*$), leftmargin=1.5em, itemsep=3pt, topsep=3pt]
    \item
        Let $\psi:[0,+\infty]\to[0,+\infty]$ be a concave function such that $\liminf_{x\to 0^+}\psi(x)=0$.
        Then 
        \begingroup
        \setlength{\multicolsep}{0pt}
        \setlength{\columnsep}{-10em}
        \begin{multicols}{2}
            \begin{enumerate}[ label=($\alph*$), leftmargin=*, itemsep=0pt, topsep=0pt]
            \item
                $\psi$ is non-decreasing,
            \item 
                $\psi(0)=0$,
            \item 
                $\psi$ is continuous in $[0,+\infty)$,
            \item 
                $\a\,\psi(x/\a)\leq (1\vee \a )\psi(x)$ for $x\geq 0$ and $\a> 0$.
            \end{enumerate}
        \end{multicols}
        \endgroup
        
        \medskip
        Indeed, for ($a$), suppose by contradiction that $\psi(x)>\psi(y)$ for some $0\le x<y$ and set $\delta:=\psi(x)-\psi(y)>0$.
        Writing $y=\tfrac{n}{n+1}x+\tfrac{1}{n+1}x_n$ with $x_n:=y+n(y-x)$ and using concavity gives $\psi(x_n)\le\psi(y)-n\delta<0$ for $n$ large, contradicting $\psi\ge 0$.
        For ($b$), by ($a$) and the infinitesimality assumption, there exists $x_n\to 0^+$ with $\psi(0)\le\psi(x_n)\to 0$, hence $\psi(0)\le 0$; combined with $\psi\ge 0$, this gives $\psi(0)=0$.
        For ($c$), monotonicity ensures that $L:=\lim_{x\to 0^+}\psi(x)=\inf_{x>0}\psi(x)$ exists in $[0,+\infty)$, and the infinitesimality assumptions yields the continuity at $0$.
        Continuity in $(0,+\infty)$ follows from concavity.
        For $(d)$, if $\a\ge 1$ then $\a\,\psi(x/\a)\le \a\,\psi(x)= (1\vee \a)\psi(x)$ from ($a$) since $x/\a\le x$.
        If $0<\a<1$, writing $x=\a(x/\a)+(1-\a)\cdot 0$, concavity together with ($b$) yields $(1\vee \a)\psi(x)=\psi(x)\ge \a\,\psi(x/\a)+(1-\a)\psi(0)=\a\,\psi(x/\a)$.
    \item 
        The assumption~\itemref{IT:TH:resilience_convex:GA} is relevant only when $s\neq t$; otherwise, it is tautological.
    \item 
        The inequality~\eqref{EQ:TH:resilience_convex:HP_psi} is binding only on bounded neighborhoods of the zero-set $\partial_z g(\,\cdot\,,0)$ of $g^\ast$. 
        Indeed, we first notice that the condition~\itemref{IT:TH:resilience_convex:GA} is monotone in $\psi$: if it holds for some $\psi_1$ satisfying the assumptions and $\psi_2$ is another such function with $\psi_2\geq\psi_1$, then it holds for $\psi_2$ as well. 
        Therefore,
        \begin{itemize}[leftmargin=*, itemsep=0pt, topsep=0pt, label=\tiny$\bullet$]
        \item 
            under~\itemref{IT:g_lip}, given that $g^\ast=+\infty$ outside the centered ball of radius $L$,~\eqref{EQ:TH:resilience_convex:HP_psi} is automatically satisfied outside this ball as soon as $\psi(+\infty)=+\infty$;
        \item 
            under~\itemref{IT:g_quad}, the quadratic lower bound $g^\ast(\,\cdot\,,q)\geq \tfrac{|q|^2}{4k}-k$ from Lemma~\ref{LEM:g*}\itemref{IT:LEM:g*:def}\itemref{IT:LEM:g*:prop_nonnega} makes~\eqref{EQ:TH:resilience_convex:HP_psi} automatic for $|q|$ large, provided we choose $\psi$ with sufficient slope at infinity (it suffices to choose $\psi$ such that $\psi(x)\geq 4k(x+k)$ for $x$ large enough).
        \end{itemize}
        
    \item 
        Let $\psi^\dagger:[0,+\infty)\to[0,+\infty]$ denote the left-continuous generalized inverse of $\psi$, i.e.,~$\psi^\dagger(y):=\inf\{x\ge 0\,:\,\psi(x)\geq y\}$, with $\inf\varnothing=+\infty$. 
        From the properties in $i$), $\psi^\dagger$ is convex, non-decreasing, with $\psi^\dagger(0)=0$. 
        Then~\eqref{EQ:TH:resilience_convex:HP_psi} is equivalent to a \emph{growth condition} on the penalty function:
        \begin{equation}
        \label{EQ:TH:resilience_convex:HP_psi_inverse}
            g^\ast(\,\cdot\,,q) \,\geq\, \psi^\dagger\big(\operatorname{dist}^2\big(q,\partial_z g(\,\cdot\,,0)\big)\big),
            \qquad \forall\,q\in\R^m,
        \end{equation}
        $\P\otimes\ell_1$-a.e. on $\Om\times(s,t)$.
        By Lemma~\ref{LEM:g*}\itemref{IT:LEM:g*:def}\itemref{IT:LEM:g*:subgrad_0}, $\partial_z g(\,\cdot\,,0)$ coincides with the zero-set of $g^\ast$ in the dual representation~\eqref{EQ:dual_representation_convex} of $\r$.
        Therefore, 
       ~\eqref{EQ:TH:resilience_convex:HP_psi_inverse} requires $g^\ast$ to increase, uniformly in probability and time, as (a function of) the distance from the ``zero-penalty'' dual domain for $\r$.\\
        Assume that $g$ is positively homogeneous (the coherent case), that is, it coincides with the support function of $\partial_z g(\,\cdot\,,0)$ (see Lemma~\ref{LEM:h}).
        Then $g^\ast$ is the (convex) indicator function of $\partial_z g(\,\cdot\,,0)$: it equals $0$ on this set and $+\infty$ outside it.
        The condition~\itemref{IT:TH:resilience_convex:GA} is therefore automatic, for instance for $\psi(x)=x$. 
        In the general convex case, the condition is binding, as it 
        prescribes a minimum deviation of $g$ from its positively homogeneous component.\\
        Although the formulation~\eqref{EQ:TH:resilience_convex:HP_psi_inverse} appears more informative, we opted for the version~\eqref{EQ:TH:resilience_convex:HP_psi} because it will be explicitly used later.
    \item 
        If $g$ is deterministic and constant in time (in which case, we simply write $g:\R^m\to\R$), then the condition~\itemref{IT:TH:resilience_convex:GA} is satisfied for any $0\leq  s\leq  t < T$.
        The precise statement and proof are given in Lemma~\ref{LEM:g_deter} in Appendix~\ref{SEC:app_determ_driver}.
        
    \item
        The condition~\itemref{IT:TH:resilience_convex:GA} is quite mild. 
        The previous point shows that, in order to violate it, $g$ has to genuinely depend on $r\in[0,T]$ or on $\om\in\Om$ (or both).
        For instance, let $g^\ast (r,q)=h(r)|q|$, for $r\in (s,t)$ and $|q|\leq 1$, where $h(r)\to 0^+$ as $r\to t^-$.
        Then $\partial_zg(r,0)=\{0\}$ and a function $\psi$ as in the assumptions cannot exist. 
        If it exists, then the inequality $\operatorname{dist}^2(q, \partial_zg(r,0))=|q|^2\leq \psi(h(r)|q|)$ would lead to a contradiction for any $q\neq 0$ by letting $r\to t^-$.
        A driver $g$ that originates this dual convex is $g(r,z)=(|z|-h(r))^+$, where, for instance, $h(r)=|t-r|$.
    \end{enumerate}
\end{remark}

Having specified the setting and the general assumptions for the dynamic risk measure $\r$, we now state the main theorem that discusses the existence of the resilience evaluation $\mathcal D^\r\pi$
of a suitably regular It\^o process $\pi$.
\begin{theorem}
\label{TH:resilience_convex}
    Assume that $(g,\pi)$ satisfy either of the following:
    \begin{enumerate}[label=(\alph*), noitemsep]
        \item 
        \label{IT:TH:convex:lip}
            $g$ satisfies~\itemref{IT:g_lip} and $b\in L^2_T$;
        \item 
        \label{IT:TH:convex:quad}
            $g$ satisfies~\itemref{IT:g_quad} and $b,|\s|\in L^\infty_T$;
    \end{enumerate}
    and let $\r$ be the dynamic risk measure induced by $g$.
    There exists a Borel set $\mathcal T\subseteq [0,T)$ with full $\ell_1$-measure such that, if $t\in\mathcal T$, $s\in[0,t]$ and~\itemref{IT:TH:resilience_convex:GA} is satisfied, then 
    the resilience evaluation of $\pi_t$ through $\r_s$ is well-defined in $L^1$ and given by the formula:
    \begin{equation}
    \label{EQ:TH:resilience_convex:thesis}
        \mathcal D^\r_s\pi_t=\esssup_{\mu\in\mathcal A^0}\E_{\mu}\left[\left.b_t+g(t,\s_t)\right|\F_s\right],
    \end{equation}
    where 
    \begin{equation}
    \label{EQ:A^0}
        \mathcal A^0:=\big\{
            \mu:\Om\times[0,T]\to\R^m  \ \text{predictable } : \ \mu\in\partial_{z}g(\,\cdot\,,0)\ \P\otimes\ell_1\text{-a.e.}
        \big\}\subseteq\BMO_T,
    \end{equation}
    and $\E_\mu$ is the expectation with respect to the probability measure $\Q^\mu$ defined by~\eqref{EQ:equivalent_mg_measure}.
    Moreover, under~\itemref{IT:TH:convex:lip},  $\mathcal D^\r_s\pi_t\in L^2$ with 
    \begin{equation}
    \label{EQ:TH:convex_lip_bounds_resilience}
        \|\mathcal D^\r_s\pi_t\|_{L^2}
        \leq e^{L^2T/2}\big(\|b_t\|_{L^2}+L\|\s_t\|_{L^2}\big),
    \end{equation}
    while under~\itemref{IT:TH:convex:quad},  $\mathcal D^\r_s\pi_t\in L^\infty$ with 
    \begin{equation}
    \label{EQ:TH:convex_quadratic_bounds_resilience}
        -\|b\|_{L^\infty_T}-2k\|\s\|_{L^\infty_T} 
        \leq \mathcal D^\r_s\pi_t
        \leq \|b\|_{L^\infty_T}+k\big(1+\|\s\|^2_{L^\infty_T} \big).
    \end{equation}
\end{theorem}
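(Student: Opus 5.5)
We prove the two inequalities ``$\le$'' and ``$\ge$'' between $\eps^{-1}\rho_s(\Delta_\eps\pi_t)$ and the right-hand side of~\eqref{EQ:TH:resilience_convex:thesis} separately, up to errors vanishing in $L^1$. Both directions start from the dual representation of Proposition~\ref{PROP:dual_representation} applied to $X=\Delta_\eps\pi_t$. For $\mu\in\mathcal B$, Girsanov's theorem gives
\[
\E_\mu[\Delta_\eps\pi_t|\F_s]=\E_\mu\Big[\int_t^{t+\eps}(b_r+\s_r\cdot\mu_r)\d r\Big|\F_s\Big].
\]
In the quadratic case this uses $\s\in L^\infty_T$ and $\mu\in\BMO_T$, so the $W^\mu$-integral is a true $\Q^\mu$-martingale. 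Splitting the penalty into the parts on $[s,t]$, $[t,t+\eps]$ and $[t+\eps,T]$, and using $g^\ast\ge0$, we obtain
\[
\tfrac1\eps\rho_s(\Delta_\eps\pi_t)=\esssup_{\mu}\E_\mu\Big[\tfrac1\eps\!\int_t^{t+\eps}\!\big(b_r+\s_r\cdot\mu_r-g^\ast(r,\mu_r)\big)\d r-\tfrac1\eps\!\int_s^t\! g^\ast(r,\mu_r)\d r\Big|\F_s\Big].
\]
Here the penalty on $[t+\eps,T]$ can be set to zero by pasting $\mu$ with an element of $\mathcal A^0$ after $t+\eps$. The set $\mathcal T$ is chosen as the set of Lebesgue points of Lemma~\ref{LEM:integral_average}, applied to $b$, to $g(\cdot,\s)$, to $|\s|$ and to countably many auxiliary processes.

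\emph{Lower bound.} Restrict to $\mu$ that lie in $\mathcal A^0$ on $[0,t]$ and equal $\Theta(\cdot,\s)$ on $(t,t+\eps]$, using Lemma~\ref{LEM:g*}\itemref{IT:LEM:g*:selector_subgrad}. Then the $[s,t]$ penalty vanishes and the Fenchel equality turns the integrand into $b_r+g(r,\s_r)$. This gives
\[
\tfrac1\eps\rho_s(\Delta_\eps\pi_t)\ge\esssup_{\mu\in\mathcal A^0}\E_\mu\Big[\tfrac1\eps\int_t^{t+\eps}(b_r+g(r,\s_r))\d r\Big|\F_s\Big],
\]
where we use that $\Q^\mu$ on $\F_t$ depends only on $\mu|_{[0,t]}$ and that $\Q$-conditioning on $\F_t$ of the window integral involves $\mathcal E_{t,t+\eps}$. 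The two sides are reconciled by noting that the $\Q^\mu$-expectation of an $\F_{t+\eps}$-measurable quantity differs from one computed with $\mu$ frozen in $\mathcal A^0$. To handle this, we bound $\E_\mu[|\cdot|\,|\F_s]$ via Lemma~\ref{LEM:cond_expec_convex}\itemref{IT:LEM:cond_expec_convex:embedding}, with the uniform bounds $L$, respectively $2k$ and $5k(1+\|\s\|_\infty)$, and use Lemma~\ref{LEM:integral_average} in $L^p$ to replace the window average by $b_t+g(t,\s_t)$. The error is controlled uniformly in $\mu$ because all dual controls involved are uniformly bounded. Finally, since the target $b_t+g(t,\s_t)$ is $\F_t$-measurable, $\E_\mu[\,\cdot\,|\F_s]$ depends only on $\mu|_{[s,t]}$.

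\emph{Upper bound (main obstacle).} Pointwise Fenchel--Young gives $\s\cdot\mu-g^\ast(\cdot,\mu)\le g(\cdot,\s)$, which handles the window term. The difficulty is the $[s,t]$ segment: for fixed $\eps$ we may take $\mu\notin\mathcal A^0$ there, at a cost of $\eps^{-1}\int_s^t g^\ast$. For each $\mu$ we compare $\Q^\mu$ with $\Q^{\Pi(\cdot,\mu)}$ on $\F_t$, where $\Pi(\cdot,\mu)\in\mathcal A^0$ by Lemma~\ref{LEM:g*}\itemref{IT:LEM:g*:projector}. With $F:=b_t+g(t,\s_t)$ bounded in the relevant $L^p$, we aim for
\[
\E_\mu[F|\F_s]-\E_{\Pi\mu}[F|\F_s]\le C\,\big(\E[\textstyle\int_s^t|\mu-\Pi\mu|^2\d r|\F_s]\big)^{1/2}\le C\sqrt{\psi\big(\E[\textstyle\int_s^t g^\ast\d r|\F_s]\big)},
\]
obtained by a Hellinger/relative-entropy bound on the density ratio, then~\itemref{IT:TH:resilience_convex:GA}, Jensen for the concave $\psi$, and Remark~\ref{REM:growth_HP}(i)(d) to absorb the change of measure. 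Writing $x$ for the penalty, the total contribution is at most $C\sqrt{\psi(x)}-x/\eps$. Its supremum over $x\ge0$ tends to $0$ as $\eps\to0$, since $\psi(x)\to0$ as $x\to0$ and $\psi$ grows at most linearly. In the quadratic case, $\BMO$ duals are unbounded; we would first restrict to $\mu$ with penalty below $\eps\cdot$const, which is justified because the value is bounded using~\eqref{EQ:g_quad} and the quadratic lower bound on $g^\ast$, and then truncate. Measurability of the near-optimizers is handled by Lemma~\ref{LEM:g*} and a pasting argument over a countable partition (the family is upward directed).

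\emph{Bounds and assembly.} Combining both directions yields the $L^1$ limit for $t\in\mathcal T$. The estimate~\eqref{EQ:TH:convex_lip_bounds_resilience} follows from $|g(t,\s_t)|\le L|\s_t|$ together with Lemma~\ref{LEM:cond_expec_convex}\itemref{IT:LEM:cond_expec_convex:embedding} (with $u=1$, $p=2$, or by direct Cauchy--Schwarz with part~\itemref{IT:LEM:cond_expec_convex:stoch_exponen}). The bounds~\eqref{EQ:TH:convex_quadratic_bounds_resilience} follow from $g\le k(1+|z|^2)$ and $g(\cdot,z)\ge q\cdot z\ge-2k|z|$ for any $q\in\partial_zg(\cdot,0)$.
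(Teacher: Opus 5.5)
Your proposal is correct in substance. It rests on the same mechanism as the paper: the dual representation with Girsanov, a zero-penalty tail after $t+\eps$, Fenchel's inequality on the window for the upper bound, the selector $\Theta(\,\cdot\,,\s)$ with Fenchel's equality for the lower bound, and the projection $\Pi$ combined with \itemref{IT:TH:resilience_convex:GA}, Jensen and a Hellinger-type bound to push the controls on $(s,t)$ onto $\partial_zg(\,\cdot\,,0)$. Where you differ is in how you book-keep the errors.

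\emph{Lower bound.} You replace $\langle Y_t\rangle_\eps$ by the $\F_t$-measurable $Y_t$ directly under $\Q^{\bar\mu}$. The conditional form of Lemma~\ref{LEM:cond_expec_convex}\itemref{IT:LEM:cond_expec_convex:embedding} gives $\E_{\bar\mu}[|\langle Y_t\rangle_\eps-Y_t|\,|\F_s]\le e^{K^2T/2}(\E[|\langle Y_t\rangle_\eps-Y_t|^2|\F_s])^{1/2}$, uniformly over controls bounded by $K$. This replaces both the comparison of $\mu$ with $\bar\mu$ in Step~5 and the pasting Lemma~\ref{LEM:pasting} in Step~6.

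\emph{Upper bound, Lipschitz case.} You bound every $V^{s,t}_\eps(\mu)$ by $H_{s,t}+e_\eps+\sup_{x\ge0}\{C\psi(x)^{\alpha}-x/\eps\}$, which does not depend on $\mu$. You therefore never split $\mathcal B^{s,t}_\eps$ into $\mathcal C^{s,t}_\eps$ and its complement. This removes the most delicate part of Appendix~\ref{SEC:app_lipschitz}: the random bound $U^{s,t}_\eps$, Markov's inequality and Vitali's theorem.

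\emph{Upper bound, quadratic case.} Your threshold $c\eps$ is legitimate, because $C_0$ and $m\le H_{s,t}$ are deterministic. In exchange for these economies, the paper gets a single template with deterministic thresholds that handles unbounded $\BMO$ controls directly.

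Two steps need the paper's refinements.

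First, the inequality you ``aim for'', with a constant $C$ and exponent $1/2$, is Lemma~\ref{LEM:mu_vs_nu}\itemref{IT:LEM:mu_vs_nu:bound_infty}, and it requires $F\in L^\infty$.
\begin{itemize}
\item Under \itemref{IT:TH:convex:lip} you only have $F\in L^2$, and $L^2$ does not embed into $L^2(\Q^\mu)$, so the Hellinger bound is unavailable. Use Lemma~\ref{LEM:mu_vs_nu}\itemref{IT:LEM:mu_vs_nu:bound_u} with $u\in(1,2)$ instead, and reduce the $v$-th moment using $|\mu-\Pi(\,\cdot\,,\mu)|\le 2L$.
\item This yields $C_\omega\psi(x)^{1/(2v)}$ with a random $C_\omega\le c\,(\E[F^2|\F_s])^{1/2}\in L^2$. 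Your optimization in $x$ still works pointwise.
\item For $L^1$ convergence you also need a domination. The supremum is nonnegative and monotone in $\eps$, and at a fixed $\eps_0$ it is bounded by $c'(C_\omega+C_\omega^{2v/(2v-1)})\in L^1$, thanks to the affine bound on $\psi$.
\item All conditional expectations in this chain should be under $\Q^\mu$: the Hellinger bound produces $\E_\mu[\int_s^t|\mu-\Pi\mu|^2\d r|\F_s]$, and the penalty is $G_{s,t}(\mu)$. So there is no change of measure to absorb; Remark~\ref{REM:growth_HP}$(i)(d)$ only removes the factor $t-s$ coming from Jensen in time.
\end{itemize}

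Second, under \itemref{IT:TH:convex:quad}, the restriction to small penalty needs three adjustments.
\begin{itemize}
\item It holds only on an $\F_s$-event, so it must be implemented by $\F_s$-pasting with a zero-penalty control.
\item It must include the window penalty. After Fenchel's inequality, $\mu$ is otherwise uncontrolled on $(t,t+\eps)$, and $\langle Y_t\rangle_\eps$ cannot be replaced by $Y_t$ uniformly under $\Q^\mu$.
\item Your ``truncation'' should be the projection $\Pi$ on $(s,t+\eps)$, using $|q-\Pi(\,\cdot\,,q)|^2\le 8k\,g^\ast(\,\cdot\,,q)+16k^2$ on the window. With these changes this case is exactly the paper's Step~4, with $\sqrt\eps$ replaced by $c\eps$.
\end{itemize}
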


\begin{remark}
\label{REM:As_or_Ast}
    For the proof of Theorem~\ref{TH:resilience_convex}, it is convenient to introduce the normalized class
    \begin{equation}
    \label{EQ:A_s}
    \begin{aligned}
        \mathcal A_s := \big\{
            \mu:\Om\times[0,T]\to\R^m
            \text{ predictable}:
            \ &\mu=0
            \ \P\otimes\ell_1\text{-a.e. on }\Om\times[0,s],\\
            \quad
            &\mu\in\partial_zg(\,\cdot\,,0)
            \ \P\otimes\ell_1\text{-a.e. on }\Om\times(s,T]
        \big\}.
    \end{aligned}
    \end{equation}
    By Lemma~\ref{LEM:g*}\itemref{IT:LEM:g*:def}\itemref{IT:LEM:g*:subgrad_0}, every $\mu\in\mathcal A_s$ belongs to $\mathcal B$.
    The condition $\mu=0$ on $\Om\times[0,s]$ is only a normalization convenient for the proof; in general, it does not mean that $\mu$ takes values in the zero-penalty set on this interval.
    Suppose now that $s<t$ and define the local class
    \[
        \mathcal A^{\mathrm{loc}}_{s,t}
        :=
        \big\{
            \theta:\Om\times(s,t)\to\R^m
            \text{ predictable}:
            \ \theta\in\partial_zg(\,\cdot\,,0)
            \quad\P\otimes\ell_1\text{-a.e. on }\Om\times(s,t)
        \big\}.
    \]
    Set $Y_t:=b_t+g(t,\s_t)$. Since $Y_t$ is $\F_t$-measurable, the conditional expectation $\E_\mu[Y_t|\F_s]$ depends on $\mu$ only through its restriction to $\Om\times(s,t)$.
    
    More precisely, if $\mu,\nu\in\mathcal B$ coincide $\P\otimes\ell_1$-a.e. on $\Om\times(s,t)$, then Lemma~\ref{LEM:mu_vs_nu}\itemref{IT:LEM:mu_vs_nu:equality}, applied with $a=t$, yields
    \[
        \E_\mu[Y_t|\F_s]
        =
        \E_\nu[Y_t|\F_s].
    \]
    Here $Y_t$ is integrable under all the measures under consideration: in the Lipschitz case this follows from $Y_t\in L^2$ and Lemma~\ref{LEM:cond_expec_convex}\itemref{IT:LEM:cond_expec_convex:embedding}, while in the quadratic case $Y_t\in L^\infty$.
    
    For $\theta\in\mathcal A^{\mathrm{loc}}_{s,t}$, where $\theta\1_{(s,t)}$ denotes its extension by zero outside $(s,t)$, define
    \[
        \widehat\theta^{\,s}
        :=
        \theta\1_{(s,t)}
        +
        \Pi(\,\cdot\,,0)\1_{[t,T]},
        \qquad 
        \widehat\theta^{\,0}
        :=
        \Pi(\,\cdot\,,0)\1_{[0,s]}
        +
        \theta\1_{(s,t)}
        +
        \Pi(\,\cdot\,,0)\1_{[t,T]}.
    \]
    Lemma~\ref{LEM:g*}\itemref{IT:LEM:g*:projector} and
   ~\itemref{IT:LEM:g*:def}\itemref{IT:LEM:g*:subgrad_0} imply that $\widehat\theta^{\,s}\in\mathcal A_s$ and $\widehat\theta^{\,0}\in\mathcal A^0$.
    We may therefore set
    \[
        \E_\theta[Y_t|\F_s]
        :=
        \E_{\widehat\theta^{\,s}}[Y_t|\F_s].
    \]
    The preceding equality shows that the same value is obtained using $\widehat\theta^{\,0}$, or any other extension in $\mathcal B$ that coincides with $\theta$ on $\Om\times(s,t)$.
    
    Conversely, the restriction to $\Om\times(s,t)$ of every process in $\mathcal A_s$ or $\mathcal A^0$ belongs to $\mathcal A^{\mathrm{loc}}_{s,t}$. Consequently,
    \[
        \esssup_{\mu\in\mathcal A^0}
            \E_\mu[Y_t|\F_s]
        =
        \esssup_{\mu\in\mathcal A_s}
            \E_\mu[Y_t|\F_s]
        =
        \esssup_{\theta\in\mathcal A^{\mathrm{loc}}_{s,t}}
            \E_\theta[Y_t|\F_s].
    \]
    Thus $\mathcal A^{\mathrm{loc}}_{s,t}$ is the minimal class needed to determine the value, whereas $\mathcal A_s$ is the normalized global class used in the proof.
\end{remark}
\begin{remark}
\label{REM:role_of_psi}
    The assumption~\itemref{IT:TH:resilience_convex:GA} drives the localization of the dual representation~\eqref{EQ:dual_representation_convex} onto the zero-penalty set $\partial_zg(\,\cdot\,,0)$ defining $\mathcal A^0$.
    More precisely, the factor $1/\eps$ makes the contribution of controls with a non-negligible conditional penalty on either $(s,t)$ or $(t,t+\eps)$ asymptotically negligible, as shown in \textit{Step~3} of the proof.
    On $(s,t)$, condition~\eqref{EQ:TH:resilience_convex:HP_psi} then converts the smallness of the penalty into proximity to $\partial_zg(\,\cdot\,,0)$.
    This leads to the zero-penalty class $\mathcal A^0$ in the statement.
    The additional requirement that the auxiliary controls in $\mathcal A_s$ vanish on $[0,s]$ is unrelated to this localization and is used only to simplify the change-of-measure estimates and the pasting argument in the proof.
\end{remark}

\begin{remark}
\label{REM:s=t}
    Consistently with Remark~\ref{REM:growth_HP}$(ii)$, both~\eqref{EQ:TH:resilience_convex:HP_psi} and the essential supremum in~\eqref{EQ:TH:resilience_convex:thesis} are informative only when $s\neq t$. 
        Namely, it follows directly from the statement that, under either~\itemref{IT:TH:convex:lip} or~\itemref{IT:TH:convex:quad}, the resilience evaluation $\mathcal D_t^\r\pi_t$ is well-defined in $L^1$ for $\ell_1$-a.e. $t\in[0,T)$, and we have
        \[
            \mathcal D_t^\r\pi_t=b_t + g(t,\s_t), \qquad\text{ in }L^1, \quad\ell_1\text{-a.e. }t\in[0,T).
        \]
\end{remark}

By specializing the driver $g$ to the positively-homogeneous case (thus inducing a coherent risk measure, see Section~\ref{SEC:BSDEs}), we obtain a more explicit representation for the resilience evaluation.
The proof is given in the next subsection, after the proof of the main theorem.
\begin{corollary}[Coherent outer risk measure]
\label{COR:from_convex_to_coherent}
    Assume that $g$ satisfies~\itemref{IT:g_lip}, that $b\in L^2_T$, and that, $\P\otimes\ell_1$-a.e., the map $z\mapsto g(\omega,r,z)$ is positively homogeneous.
    Let $\r$ be the dynamic risk measure induced by $g$.
    Then there exists a Borel set $\mathcal T\subseteq[0,T)$ with full $\ell_1$-measure such that, for every $t\in\mathcal T$ and $s\in[0,t]$, the resilience evaluation of $\pi_t$ through $\r_s$ is well-defined in $L^2$ and is given by
    \begin{equation}
    \label{EQ:COR:resilience_coherent}
        \mathcal D^\r_s\pi_t
        =
        \esssup_{\mu\in\mathcal A_t^0}
        \E_\mu\left[
            b_t+\mu_t\cdot\s_t
            \,\big|\,\F_s
        \right],
    \end{equation}
    where
    \begin{equation}
    \label{EQ:COR:admissible_set_A}
        \mathcal A_t^0
        :=
        \big\{
            \mu\in\A^0 \ : \ 
            \mu_t\in\partial_zg(t,0)
            \ \P\text{-a.s.}
        \big\}.
    \end{equation}
\end{corollary}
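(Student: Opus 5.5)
We will derive the corollary from Theorem~\ref{TH:resilience_convex} under case~\itemref{IT:TH:convex:lip}; the work is in rewriting the effective drift in the coherent case.

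\textbf{Step 1: checking the hypotheses of the theorem.} Since $g$ satisfies \itemref{IT:g_lip} and $b\in L^2_T$, hypothesis~\itemref{IT:TH:convex:lip} holds. It remains to verify \itemref{IT:TH:resilience_convex:GA} for every $0\le s\le t<T$. When $g$ is positively homogeneous, it is the support function of $\partial_z g(\,\cdot\,,0)$. Hence $g^\ast$ is the convex indicator of this set, as noted in Remark~\ref{REM:growth_HP}$(iv)$. Take $\psi(x)=x$ on $[0,+\infty]$. If $q\in\partial_zg(\,\cdot\,,0)$, both sides of \eqref{EQ:TH:resilience_convex:HP_psi} vanish. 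Otherwise the right-hand side is $+\infty$. So \itemref{IT:TH:resilience_convex:GA} holds for all $s,t$, and the theorem gives, for $t$ in a full-measure Borel set $\mathcal T$ and all $s\in[0,t]$,
\[
\mathcal D^\r_s\pi_t=\esssup_{\mu\in\mathcal A^0}\E_\mu\big[b_t+g(t,\s_t)\,\big|\,\F_s\big].
\]
The bound \eqref{EQ:TH:convex_lip_bounds_resilience} shows that this is an $L^2$ element. Convergence in $L^2$, rather than only in $L^1$, is a separate point addressed below.

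\textbf{Step 2: rewriting the effective drift.} We must show that the right-hand side above equals the one in \eqref{EQ:COR:resilience_coherent}.

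For the inequality ``$\geq$'': if $\mu\in\mathcal A^0_t$, then $\mu_t\in\partial_zg(t,0)$. The support function representation gives $\mu_t\cdot\s_t\le g(t,\s_t)$, and since $\mathcal A^0_t\subseteq\mathcal A^0$ this inequality passes to the essential suprema.

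For the inequality ``$\leq$'': fix $t$ and let $\Theta$ be the selector from Lemma~\ref{LEM:g*}\itemref{IT:LEM:g*:selector_subgrad}. In the coherent case, $\Theta(t,\s_t)\in\partial_z g(t,\s_t)\subseteq\partial_zg(t,0)$ and $\Theta(t,\s_t)\cdot\s_t=g(t,\s_t)$, because $g^\ast$ vanishes on this set. Given $\mu\in\mathcal A^0$, we modify it at the single time $t$ by setting $\tilde\mu_t:=\Theta(t,\s_t)$ and $\tilde\mu_r:=\mu_r$ for $r\neq t$. This change is on a $\P\otimes\ell_1$-null set, so $\tilde\mu$ is still predictable, $\tilde\mu\in\mathcal A^0$, and $\Q^{\tilde\mu}=\Q^\mu$. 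Moreover $\tilde\mu\in\mathcal A^0_t$, and
\[
\E_{\tilde\mu}[b_t+\tilde\mu_t\cdot\s_t|\F_s]=\E_\mu[b_t+g(t,\s_t)|\F_s].
\]
Taking the essential supremum over $\mu$ gives equality.

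\textbf{Main obstacle: measurability and the $L^2$ claim.} Two points need care. First, the pointwise modification $\tilde\mu$ must remain predictable. This holds because $\Theta(t,\s_t)$ is $\F_t$-measurable, and $\s_t$ can be taken with a predictable version, which suffices for $\{t\}\times\Om$. We also possibly need to shrink $\mathcal T$ so that the slice $\partial_zg(t,0)$ is well defined $\P$-a.s. and $g(t,\cdot)$ is the support function of it at that fixed $t$. This follows from Fubini applied to the $\P\otimes\ell_1$-a.e. properties.

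Second, we must upgrade $L^1$-convergence to $L^2$-convergence. We would use positive homogeneity: $\eps^{-1}\r_s(\Delta_\eps\pi_t)=\r_s(\eps^{-1}\Delta_\eps\pi_t)$. Then we would invoke Lipschitz $L^2$-stability of the BSDE-induced $\r$, or the uniform $\Q^\mu$-integrability of Lemma~\ref{LEM:cond_expec_convex}\itemref{IT:LEM:cond_expec_convex:embedding}, to get uniform $L^{2+\delta}$ or $L^2$-integrability of the quotients. Combined with $L^1$ convergence, this yields $L^2$ convergence. If this route fails, we would fall back on showing that the family of quotients $\{|\eps^{-1}\r_s(\Delta_\eps\pi_t)|^2\}$ is uniformly integrable, using the dual bound $|\r_s(X)|\le\esssup_{|\mu|\le L}\E_\mu[|X||\F_s]$ together with Lemma~\ref{LEM:integral_average} applied to $b$ and $|\s|^2$.
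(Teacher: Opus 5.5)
Your Steps~1 and~2 reproduce the paper's proof. The paper also uses the following four ingredients:
\begin{itemize}
\item \ref{IT:TH:resilience_convex:GA} holds with $\psi(x)=x$, because $g^\ast$ is the indicator of $\partial_zg(\,\cdot\,,0)$;
\item $\mathcal T$ is shrunk via Fubini, so that at the fixed $t$ the support-function representation holds and $q^\s:=\Theta(\,\cdot\,,\s)$ satisfies $q^\s_t\in\partial_zg(t,0)$ and $q^\s_t\cdot\s_t=g(t,\s_t)$;
\item the inequality $\mu_t\cdot\s_t\le g(t,\s_t)$ gives one direction;
\item redefining $\mu$ on the predictable set $\Om\times\{t\}$ by $q^\s$, which leaves $\Q^\mu$ unchanged, gives the other.
\end{itemize}

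Your last paragraph, however, is unnecessary, and as written it would not work.

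\textbf{Why it is unnecessary.} The paper's proof does nothing beyond invoking Theorem~\ref{TH:resilience_convex}. There, ``well-defined in $L^2$'' means that the $L^1$-limit of Definition~\ref{DEF:res_eval} exists and belongs to $L^2$. That is exactly the bound~\eqref{EQ:TH:convex_lip_bounds_resilience}, which you already cite in Step~1.

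\textbf{Why your routes fail.} Both of them estimate $\r_s$ at $\eps^{-1}\Delta_\eps\pi_t$. Its martingale part $\eps^{-1}\int_t^{t+\eps}\s_r\cdot\d W_r$ has $L^2$-norm of order $\eps^{-1/2}$; take $\pi=W$ to see this.
\begin{itemize}
\item $L^2$-stability of $\r$, or the crude bound $|\r_s(X)|\le\esssup_{|\mu|\le L}\E_\mu[|X||\F_s]$, therefore only yields estimates of order $\eps^{-1/2}$.
\item Applying Lemma~\ref{LEM:integral_average} to $|\s|^2$ reproduces the same scale.
\item No $L^{2+\delta}$-integrability is available either, since you only have $b\in L^2_T$ and $\s\in\mathcal H^2_T$.
\end{itemize}

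\textbf{A route that does work.} If you want $L^2$-convergence anyway, first remove the martingale part by Girsanov, as in Step~1 of the theorem's proof. Then combine $g^\ast\ge0$, a zero-penalty control such as $\Pi(\,\cdot\,,0)$ for the lower bound, $|\mu|\le L$, and Lemma~\ref{LEM:cond_expec_convex}\itemref{IT:LEM:cond_expec_convex:embedding} with $u=1$, $p=2$. This gives
\[
\big|\eps^{-1}\r_s(\Delta_\eps\pi_t)\big|\le e^{L^2T/2}\big(\E\big[\langle\widetilde Y_t\rangle_\eps^2\,\big|\,\F_s\big]\big)^{1/2},\qquad \langle\widetilde Y_t\rangle_\eps:=\frac1\eps\int_t^{t+\eps}\big(|b_r|+L|\s_r|\big)\d r .
\]
For $\ell_1$-a.e.\ $t$ these averages converge in $L^2$ by Lemma~\ref{LEM:integral_average}. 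Hence the squared quotients are uniformly integrable, and Vitali's theorem upgrades the $L^1$-convergence to $L^2$-convergence. Positive homogeneity plays no role in this step.
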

The condition imposed on $\mu_t$ in
\eqref{EQ:COR:admissible_set_A} only selects a predictable representative of $\mu$.
Indeed, changing a predictable process at the deterministic time $t$ does not alter its stochastic integral and hence does not alter the measure $\Q^\mu$.
Therefore, recalling Remark~\ref{REM:As_or_Ast}, the restriction of $\mu$ to $\Om\times(s,t)$ determines the zero-penalty scenario, whereas its value at $\Om\times\{t\}$ can be chosen independently to attain the maximum of the sublinear driver $g$.

\subsubsection{Proof of the main theorem}
Before proceeding to the proof of Theorem~\ref{TH:resilience_convex}, we need a few preliminary lemmas. 
The following lemma quantifies the sensitivity of conditional expectations under $\Q^\mu$ to perturbations of the dual control $\mu$. 
It provides the key stability estimate for the localization argument. 
Its technical proof is deferred to Appendix~\ref{SEC:app_duality} to preserve the flow of the exposition.

\begin{lemma}
\label{LEM:mu_vs_nu}
    \begin{enumerate}[label=(\roman*), noitemsep]
        \item
        \label{IT:LEM:mu_vs_nu:bound_u}
            Let $K>0$ and \(\mu,\nu\in\BMO_T\) satisfy
            \[
                |\mu|\vee|\nu|\leq K, \qquad \P\otimes\ell_1\text{-a.e.}
            \]
            Then for every $u>1$ there exists a constant \(C_{u,K,T}>0\) such that, for every
            \(X\in L^u(\Q^\mu)\) and every \(s\in[0,T]\),
            \[
                \big|\E_\mu[X|\F_s]-\E_\nu[X|\F_s]\big|^u
                \leq
                C_{u,K,T}\,
                \E_\mu[|X|^u|\F_s]\,
                \Bigg(
                    \E_\mu\Bigg[
                        \bigg(\int_s^T|\mu_r-\nu_r|^2\d r\bigg)^v
                        \Bigg|\F_s
                    \Bigg]
                \Bigg)^{u/(2v)},
            \]
            where \(v:=u/(u-1)\) is the conjugate exponent of \(u\).
        \item
        \label{IT:LEM:mu_vs_nu:bound_infty}
            Assume \(s\in[0,T]\) and let \(\mu,\nu\in\BMO_T\) satisfy
            \[
                \mu=\nu=0
                \qquad
                \text{on }\Om\times[0,s].
            \]
            Then, for every \(X\in L^\infty\),
            \[
                \big|\E_\mu[X|\F_s]-\E_\nu[X|\F_s]\big|
                \leq
                \sqrt2\,\|X\|_{L^\infty}
                \bigg(
                    \E_\mu\bigg[
                        \int_s^T|\mu_r-\nu_r|^2\d r
                        \bigg|\F_s
                    \bigg]
                \bigg)^{1/2}.
            \]
        \item
        \label{IT:LEM:mu_vs_nu:equality}
            Assume \(\mu,\nu\in\BMO_T\) satisfy
            \[
                \mu=\nu
                \qquad
                \text{on }\Om\times(s,a),
            \]
            for some \(0\leq s<a\leq T\).
            Let \(X\) be a \([-\infty,+\infty)\)-valued, \(\F_a\)-measurable random variable such that
            \(X^+\in L^1(\Q^\mu)\cap L^1(\Q^\nu)\).
            Then
            \[
                \E_\mu[X|\F_s]=\E_\nu[X|\F_s],
            \]
            the equality holding \(\P\)-a.s.\ as \([-\infty,+\infty)\)-valued random variables.
    \end{enumerate}
\end{lemma}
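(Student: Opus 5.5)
The statement to prove is Lemma~\ref{LEM:mu_vs_nu}. All three parts rest on writing the conditional expectations under $\Q^\mu$ and $\Q^\nu$ through Bayes' formula. For (iii) I would argue as follows. Since $X$ is $\F_a$-measurable,
\[
\E_\mu[X|\F_s]=\E\big[\mathcal E^\mu_{s,a}X\,\big|\,\F_s\big].
\]
Because $\mu=\nu$ on $(s,a)$, the stochastic integrals $\int_s^a\mu\cdot\d W$ and $\int_s^a\nu\cdot\d W$ agree a.s., and so do the Lebesgue integrals. Hence $\mathcal E^\mu_{s,a}=\mathcal E^\nu_{s,a}$ a.s. For the extended-valued case I would split $X=X^+-X^-$. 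The $X^+$ part is finite by assumption, and the $X^-$ part is defined in $[0,+\infty]$ by monotone convergence. This gives equality as $[-\infty,+\infty)$-valued variables.

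For (i), set $D:=\mathcal E^\nu_{s,T}/\mathcal E^\mu_{s,T}$, so that $\E_\nu[X|\F_s]=\E_\mu[XD|\F_s]$ and $\E_\mu[D|\F_s]=1$. Then
\[
\E_\mu[X|\F_s]-\E_\nu[X|\F_s]=\E_\mu\big[X(1-D)\,\big|\,\F_s\big],
\]
and conditional Hölder bounds it by $\E_\mu[|X|^u|\F_s]^{1/u}\,\E_\mu[|1-D|^v|\F_s]^{1/v}$. The remaining work is to control $\E_\mu[|1-D|^v|\F_s]$ by the quadratic variation of $\mu-\nu$. Under $\Q^\mu$ with Brownian motion $W^\mu$, the density $D$ equals $\mathcal E\big(\int(\nu-\mu)\cdot\d W^\mu\big)_{s,T}$. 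Thus $D_r-1=\int_s^r D\,(\nu-\mu)\cdot\d W^\mu$, and conditional BDG gives
\[
\E_\mu\big[|1-D|^v\,\big|\,\F_s\big]\le C\,\E_\mu\Big[\Big(\int_s^T D_r^2|\mu_r-\nu_r|^2\d r\Big)^{v/2}\Big|\F_s\Big].
\]
I would then bound $\sup_r D_r$ by Hölder once more. Its conditional moments of every order are bounded by a constant depending on $K,T$: this follows from Lemma~\ref{LEM:cond_expec_convex}\itemref{IT:LEM:cond_expec_convex:stoch_exponen} applied to the bounded process $\nu-\mu$, together with Doob's inequality. That Hölder step raises the exponent on $\int|\mu-\nu|^2$ from $v/2$ to $v$, and taking the appropriate root yields the exponent $u/(2v)$ in the statement. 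Getting this exponent bookkeeping to match the displayed form exactly is the main obstacle, and I would try first to arrange the Hölder splitting as $(\sup D)^{v}\cdot(\int|\mu-\nu|^2)^{v/2}$ with Cauchy--Schwarz.

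For (ii), $X$ is bounded, so
\[
\big|\E_\mu[X|\F_s]-\E_\nu[X|\F_s]\big|\le \|X\|_{L^\infty}\,\E_\mu\big[|1-D|\,\big|\,\F_s\big].
\]
Here no boundedness of $\mu,\nu$ is available, so I would instead use that $\E_\mu[|1-D||\F_s]$ is twice the conditional total variation distance between $\Q^\nu$ and $\Q^\mu$. Pinsker's inequality then bounds it by $\sqrt{2\,\mathrm{KL}}$. The conditional relative entropy of $\Q^\mu$ with respect to $\Q^\nu$ given $\F_s$ equals $\tfrac12\E_\mu[\int_s^T|\mu-\nu|^2\d r\,|\,\F_s]$, since $\log(1/D)=\int(\mu-\nu)\cdot\d W^\mu+\tfrac12\int|\mu-\nu|^2$ and the stochastic integral is a true $\Q^\mu$-martingale thanks to the BMO property. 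The normalization $\mu=\nu=0$ on $[0,s]$ ensures that the conditional laws start from the same point, and combining the two bounds gives the constant $\sqrt2$ after tracking the factor $2$.
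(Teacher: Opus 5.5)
Your proposal is correct, and parts (i) and (iii) follow the paper; part (ii) takes a genuinely different route.

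For (i), the paper also writes $\E_\nu[X|\F_s]=\E_\mu[DX|\F_s]$ with $D$ the stochastic exponential of $(\nu-\mu)\tinybullet W^\mu$ on $[s,T]$. It then uses conditional H\"older, conditional BDG for $D-1$, Cauchy--Schwarz to separate $(\sup_r D_r)^v$ from $(\int_s^T|\mu_r-\nu_r|^2\d r)^{v/2}$, and Doob together with Lemma~\ref{LEM:cond_expec_convex}(ii) under $\Q^\mu$. This is exactly the splitting you anticipate. For (iii), the paper takes $\Q^\mu$ as reference measure and uses $R_s=R_a$. That is your identity $\mathcal E^\mu_{s,a}=\mathcal E^\nu_{s,a}$ read under $\P$, followed by the same truncation and positive/negative split.

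In (ii) the paper does not call on Pinsker. It factors $|R_T-1|=|\sqrt{R_T}-1|\,(\sqrt{R_T}+1)$ and applies conditional Cauchy--Schwarz. It bounds $\E_\mu[(\sqrt{R_T}+1)^2|\F_s]\le 4$ via $\E_\mu[R_T|\F_s]=1$ and uses the pointwise inequality $(\sqrt{x}-1)^2\le x-1-\ln x$. This gives $\E_\mu\big[|R_T-1|\,\big|\,\F_s\big]\le 2\sqrt{\mathrm{KL}}$, where $\mathrm{KL}:=-\E_\mu[\ln R_T|\F_s]=\tfrac12\E_\mu[\int_s^T|\mu_r-\nu_r|^2\d r|\F_s]$ (here $R_T=D$). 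This Hellinger-type argument is elementary and conditional from the outset, but it is what produces the constant $\sqrt2$.

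Pinsker instead gives $\E_\mu\big[|1-D|\,\big|\,\F_s\big]\le\sqrt{2\,\mathrm{KL}}=\big(\E_\mu[\int_s^T|\mu_r-\nu_r|^2\d r|\F_s]\big)^{1/2}$, i.e.\ constant $1$. So your closing remark that the bookkeeping yields $\sqrt2$ undersells your own bound, harmlessly since $1\le\sqrt2$.

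The price is that you need Pinsker conditionally on $\F_s$, which you only assert. It does hold. One way is through regular conditional distributions of the Brownian path given $\F_s$. More simply, rerun the classical proof with conditional expectations: for $x:=1/D$ one has $\E_\nu[x|\F_s]=1$, $\E_\nu\big[|x-1|\,\big|\,\F_s\big]=\E_\mu\big[|1-D|\,\big|\,\F_s\big]$ and $\E_\nu[x\ln x|\F_s]=\mathrm{KL}$. Conditional Cauchy--Schwarz applied to the pointwise bound $3(x-1)^2\le(2x+4)(x\ln x-x+1)$ then gives the claim.

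Finally, with your normalization $D=\mathcal E^\nu_{s,T}/\mathcal E^\mu_{s,T}$, the hypothesis $\mu=\nu=0$ on $[0,s]$ is never used in (ii). The paper needs it only to have $R_s=1$ and to drop the contribution of $[0,s]$ in $-\E_\mu[\ln R_T|\F_s]$.
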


The next ingredient is a pasting lemma that allows us to control the $L^1$-norm of an essential supremum of conditional expectations.
\begin{lemma}
\label{LEM:pasting}
    Let $\mathcal U\subseteq\BMO_T$ be non-empty.
    Assume that there exist $K>0$ and $s\in[0,T]$ such that, for every
    $\mu\in\mathcal U$,
    \[
        |\mu|\leq K
        \quad \P\otimes\ell_1\text{-a.e.},
        \qquad\qquad 
        \mu=0
        \quad\text{on }\Om\times[0,s],
    \]
    and that $\mathcal U$ is stable under $\F_s$-pasting, namely,
    \begin{equation}
    \label{EQ:LEM:pasting:pasting_U}
        \mu^1\1_A+\mu^2\1_{A^c}\in\mathcal U,
        \qquad
        \forall\,\mu^1,\mu^2\in\mathcal U,\quad A\in\F_s.
    \end{equation}
    Then for every non-negative $X\in L^2$,
    \[
       \left\|\esssup_{\mu\in\mathcal U }\E_\mu[X|\F_s]\right\|_{L^1}
       \leq e^{K^2T/2}\|X\|_{L^2}.
    \]
\end{lemma}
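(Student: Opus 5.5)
The plan is to use the $\F_s$-pasting property to rewrite the essential supremum as an increasing limit, and then to bound each conditional expectation by Hölder's inequality under $\P$, with Lemma~\ref{LEM:cond_expec_convex}\itemref{IT:LEM:cond_expec_convex:stoch_exponen} supplying the bound on the density.

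First, I will show that the family $\{\E_\mu[X|\F_s]:\mu\in\mathcal U\}$ is upward directed. Take $\mu^1,\mu^2\in\mathcal U$ and set $A:=\{\E_{\mu^1}[X|\F_s]\geq\E_{\mu^2}[X|\F_s]\}\in\F_s$, and define $\mu:=\mu^1\1_A+\mu^2\1_{A^c}\in\mathcal U$. Since $A\in\F_s$ and both controls vanish on $[0,s]$, we have $\mathcal E^\mu_{s,T}=\1_A\mathcal E^{\mu^1}_{s,T}+\1_{A^c}\mathcal E^{\mu^2}_{s,T}$. By the Bayes formula, $\E_\mu[X|\F_s]=\E[\mathcal E^\mu_{s,T}X|\F_s]$ because $\mathcal E^\mu_s=1$. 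Consequently
\[
\E_\mu[X|\F_s]=\E_{\mu^1}[X|\F_s]\vee\E_{\mu^2}[X|\F_s].
\]
By the standard property of essential suprema of directed families, there is then a sequence $(\mu^n)\subseteq\mathcal U$ such that $\E_{\mu^n}[X|\F_s]\uparrow\esssup_{\mu\in\mathcal U}\E_\mu[X|\F_s]$ $\P$-a.s.

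Second, for each fixed $\mu\in\mathcal U$ I will bound the conditional expectation. Since $X\geq0$, the conditional Cauchy--Schwarz inequality gives
\[
\E_\mu[X|\F_s]=\E[\mathcal E^\mu_{s,T}X|\F_s]\leq \big(\E[(\mathcal E^\mu_{s,T})^2|\F_s]\big)^{1/2}\big(\E[X^2|\F_s]\big)^{1/2}.
\]
Lemma~\ref{LEM:cond_expec_convex}\itemref{IT:LEM:cond_expec_convex:stoch_exponen} with $p=2$ bounds the first factor by $e^{K^2T/2}$. This estimate is uniform in $\mu$, so it passes to the essential supremum:
\[
0\leq\esssup_{\mu\in\mathcal U}\E_\mu[X|\F_s]\leq e^{K^2T/2}\big(\E[X^2|\F_s]\big)^{1/2}.
\]

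Finally, I take expectations and apply Jensen's inequality, $\E\big[(\E[X^2|\F_s])^{1/2}\big]\leq\|X\|_{L^2}$, which gives the claim. Given the second step, the directedness of the first step is actually not needed for the inequality, because the pointwise bound already holds for the essential supremum. I will still record it, since it also shows that the supremum is measurable and finite as a monotone limit.

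The only delicate point is the identity $\E_\mu[X|\F_s]=\E[\mathcal E^\mu_{s,T}X|\F_s]$ for nonnegative $X$ that may not be $\Q^\mu$-integrable a priori. This is handled by monotone convergence applied to $X\wedge n$.
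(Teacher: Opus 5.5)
Your proof is correct, but it takes a different route from the paper's.

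\textbf{The paper's route.} It uses Lemma~\ref{LEM:cond_expec_convex}\itemref{IT:LEM:cond_expec_convex:embedding} (with $u=1$, $p=2$) only in its unconditional form, $\|\E_\mu[X|\F_s]\|_{L^1}\le e^{K^2T/2}\|X\|_{L^2}$ for each fixed $\mu$. A bound on the $L^1$-norms that is uniform in $\mu$ does not transfer to the essential supremum by itself. The paper therefore needs the pasting property to invoke \cite[Lemma~3.4]{Nutz+Soner_2012_Superhedging_dynamic_risk_measures_volatility_uncertainty}, which is exactly your upward-directedness argument. This yields an increasing sequence $\E_{\mu^n}[X|\F_s]\uparrow\esssup_{\mu\in\mathcal U}\E_\mu[X|\F_s]$, and monotone convergence concludes.

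\textbf{Your route.} You keep the estimate conditional: $\E_\mu[X|\F_s]\le e^{K^2T/2}\big(\E[X^2|\F_s]\big)^{1/2}$. The right-hand side is an $\F_s$-measurable bound independent of $\mu$, so it dominates the essential supremum directly, and Jensen's inequality finishes.

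\textbf{What each buys.} As you note, your first step is then superfluous. In fact your argument uses neither the $\F_s$-pasting stability nor the normalization $\mu=0$ on $\Om\times[0,s]$. The Bayes identity $\E_\mu[X|\F_s]=\E[\mathcal E^\mu_{s,T}X|\F_s]$ and the bound $\E[(\mathcal E^\mu_{s,T})^2|\F_s]\le e^{K^2T}$ hold for every predictable $\mu$ with $|\mu|\le K$. You also get the stronger pointwise conditional estimate rather than just the $L^1$ bound. The paper itself takes this shortcut in the Lipschitz appendix, when bounding $U^{s,t}_\eps$. The paper's directedness-plus-monotone-convergence route is what one would need if only a per-$\mu$ unconditional estimate were available.

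\textbf{Minor point.} The ``delicate point'' you flag is not delicate. Since $|\mu|\le K$, we have $\mathcal E^\mu_T\in L^2$, so $X\in L^1(\Q^\mu)$ by Cauchy--Schwarz, and no truncation is required.
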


\begin{proof}
    By Lemma~\ref{LEM:cond_expec_convex}\itemref{IT:LEM:cond_expec_convex:embedding}, applied with $u=1$ and $p=2$, 
    \[
        \left\|\E_\mu[X|\F_s]\right\|_{L^1} \leq e^{K^2T/2}\|X\|_{L^2}, \qquad \forall\,\mu\in\mathcal U. 
    \]
    
    Let us denote by $\mathcal Q$ the set of all probabilities $\Q^\mu$ on $\mathcal F$ as $\mu$ varies in $\mathcal U$.
    Since $\mu=0$ on $\Om\times[0,s]$, we have $\Q^\mu=\P$ on $\F_s$ for all $\mu\in\mathcal U$.
    Moreover, we show that the set $\mathcal Q$ satisfies the following pasting property:
    \begin{align}
    \label{EQ:LEM:pasting_quadratic:pasting_Q}
        \left(\frac{\d \Q^1}{\d \P}\1_A + \frac{\d \Q^2}{\d \P}\1_{A^c}\right)\P\in\mathcal Q, \qquad \forall\, \Q^1,\Q^2\in\mathcal Q, \ A\in\mathcal \F_s.
    \end{align}
    Indeed, for $i=1,2$, if $\Q^i\in\mathcal Q$, there exists $\mu^i \in\mathcal U$ such that $\Q^i=\Q^{\mu^i}$. 
    Fix $A\in\mathcal F_s$. 
    The property~\eqref{EQ:LEM:pasting:pasting_U} in the hypothesis yields $\mu:=\mu^1\1_A+\mu^2\1_{A^c}\in\mathcal U$. 
    Hence $\Q^\mu\in\mathcal Q$, where the density of $\Q^\mu$ is exactly
    \[
        \frac{\d \Q^\mu}{\d \P}=\frac{\d \Q^1}{\d \P}\1_A + \frac{\d \Q^2}{\d \P}\1_{A^c},
    \]
    which shows the pasting property~\eqref{EQ:LEM:pasting_quadratic:pasting_Q} for the set $\mathcal Q$.

    These properties allow us to apply \cite[Lemma~3.4]{Nutz+Soner_2012_Superhedging_dynamic_risk_measures_volatility_uncertainty} and infer the existence of a sequence $(\mu^n)_{n\in\N}\in\mathcal U^\N$ such that
    \[
        Y_n:=\E_{\mu^n}[X|\F_s] \longrightarrow \esssup_{\mu\in\mathcal U}\E_{\mu}[X|\F_s]:=Y, \qquad \P\text{-a.s.},
    \]
    and $Y_n\leq Y_{n+1}$ for all $n\in\N$.
    Since $X\geq0$, the monotone convergence theorem yields 
    \[
        \|Y\|_{L^1}
        = \lim_{n\to\infty} \left\|\E_{\mu^n}[X|\F_s]\right\|_{L^1} 
        \leq e^{K^2T/2}\|X\|_{L^2}. 
    \]
\end{proof}

We are now ready for the proof of the main Theorem~\ref{TH:resilience_convex}.
We will give the details only under its assumption~\itemref{IT:TH:convex:quad}, postponing to Appendix~\ref{SEC:app_lipschitz} the case~\itemref{IT:TH:convex:lip}.
Although the representation in Theorem~\ref{TH:resilience_convex} is stated over the global zero-penalty class $\mathcal A^0$, throughout the proof we work with the normalized class $\mathcal A_s$ defined in~\eqref{EQ:A_s}. This normalization is needed for the change-of-measure estimates and the pasting argument. 
Remark~\ref{REM:As_or_Ast} shows that the two classes yield the same essential supremum for the $\F_t$-measurable random variables considered below.

In order to provide an easy reference to the reader, let us introduce the notations that will be used throughout the proof.
For any $0\leq s\leq t<t+\eps\leq T$, we define
\begin{align}
    \mathcal B_\eps^{s,t}&:=\big\{\mu\in\mathcal B \ : \ \mu=0\text{ on }\Om\times[0,s], \ \mu\in\partial_zg(\,\cdot\,,0) \text{ on }\Om\times[t+\eps,T]\big\},\\
    V_\eps^{s,t}(\mu)
    &:=
    \begin{cases}
        \ds \E_\mu\bigg[\frac1\eps\int_t^{t+\eps}(b_r + \s_r\cdot\mu_r)\d r - \frac1\eps\int_s^Tg^\ast(r,\mu_r)\d r \bigg|\F_s\bigg]\quad \text{ if }\mu\in\mathcal B\\
        \ds \E_\mu\bigg[\frac1\eps\int_t^{t+\eps}(b_r + \s_r\cdot\mu_r)\d r - \frac1\eps\int_s^{t+\eps}g^\ast(r,\mu_r)\d r \bigg|\F_s\bigg] \ \, \text{ if } \mu\in\mathcal B_\eps^{s,t}
    \end{cases},\\
     G_{s,t}(\mu)&:=\E_{\mu}\left[\left.\int_s^{t}g^\ast(r,\mu_r)\d r \right|\F_s\right],\qquad 
    \text{ for }\mu\in\mathcal B_\eps^{s,t},\\
    G^s_{t,t+\eps}(\mu)&:=\E_{\mu}\left[\left.\int_t^{t+\eps}g^\ast(r,\mu_r)\d r \right|\F_s\right],
    \qquad \text{ for }\mu\in\mathcal B_\eps^{s,t},\\
    \mathcal C^{s,t}_\eps&:=\big\{\mu\in\mathcal B_\eps^{s,t} \ : \  G_{s,t}(\mu)\leq \sqrt \eps, \ G^s_{t,t+\eps}(\mu)\leq \sqrt\eps \big\},\\
    Y_t&:=b_t + g(t,\s_t),\\
    \langle Y_t\rangle_\eps &:=\frac1\eps\int_t^{t+\eps}\big(b_r + g(r,\s_r)\big)\d r,\\
    H_{s,t}&:=\esssup_{\mu\in\mathcal A_s}\E_\mu[Y_t|\F_s].
\end{align}
By Remark~\ref{REM:As_or_Ast}, if $s<t$, $H_{s,t}
    =
    \esssup_{\mu\in\mathcal A^0}\E_\mu[Y_t|\F_s]$.
If $s=t$, both essential suprema are equal to $Y_t$, since $Y_t$ is $\F_t$-measurable. Hence, for every $s\in[0,t]$, $H_{s,t}$ coincides with the right-hand side of~\eqref{EQ:TH:resilience_convex:thesis}.

Let us now outline the proof, which is organized in seven steps.

In \textit{Step $1$},  we rewrite the increments of the process $\pi$ via Girsanov's theorem, then use the dual representation~\eqref{EQ:dual_representation_convex} for the risk measure $\r$, obtaining
\[
    \frac1\eps\r_s(\pi_{t+\eps}-\pi_t)=\esssup_{\mu\in\mathcal B}V_\eps^{s,t}(\mu).
\]

In \textit{Step $2$}, we show that the essential supremum above can be computed equivalently over $\mathcal B$ or $\mathcal B_\eps^{s,t}$:
\[
    \esssup_{\mu\in\mathcal B}V_\eps^{s,t}(\mu)
    =\esssup_{\mu\in\mathcal B_\eps^{s,t}}V_\eps^{s,t}(\mu).
\]
If $\mu\in\mathcal B_\eps^{s,t}$, then $g^\ast(\,\cdot\,,\mu)=0$ on $\Om\times[t+\eps,T]$.
Therefore, this step allows us to restrict the time integration in the penalty term in~\eqref{EQ:dual_representation_convex} from $[s,T]$ to $[s,t+\eps]$.

In \textit{Step $3$}, we perform the main asymptotic localization.
We split the penalty term into its contribution on $[s,t]$ and on $[t,t+\eps]$, and we show that the conditional payoff function $V_\eps^{s,t}(\mu)$ diverges to $-\infty$, as $\eps\to 0^+$, as soon as any of these two contributions is larger than $\sqrt\eps$.
This is due to the normalization factor $1/\eps$ appearing in front of the penalty term.
Formally, we prove
\begin{equation}
    \esssup_{\mu\in\mathcal B_\eps^{s,t}\setminus \mathcal C_\eps^{s,t}}V_{\eps}^{s,t}(\mu)
    \leq C_0-\frac1{2\sqrt{\eps}}.
\end{equation}
This means that, asymptotically, the leading contribution to the essential supremum comes from processes $\mu\in\mathcal C_\eps^{s,t}$.

In \textit{Step 4}, we derive an upper bound for the essential supremum of $V_\eps^{s,t}(\mu)$, for $\mu\in\mathcal C_\eps^{s,t}$.
First, we control $V_\eps^{s,t}(\mu)$ with the $\Q^\mu$-conditional expectation of  $\langle Y_t\rangle_\eps$.
Then we estimate how this conditional expectation varies when $\mu$ is projected onto $\partial_z g(\,\cdot\,,0)$ on $(s,t+\eps)$, thus obtaining a drift $\tilde\mu\in\mathcal A_s$.
For this comparison, both Lemma~\ref{LEM:mu_vs_nu}\itemref{IT:LEM:mu_vs_nu:bound_infty} and the assumption~\eqref{EQ:TH:resilience_convex:HP_psi} play a key role in proving that the error vanishes as $\eps\to 0^+$.
Specifically,
\[
    \esssup_{\mu\in\mathcal C_\eps^{s,t}} V_\eps^{s,t}(\mu)
    \leq \esssup_{\mu\in\mathcal C_\eps^{s,t}}
        \E_{\mu}\big[\langle Y_t\rangle_\eps \big|\F_s\big]
    \leq  \esssup_{\mu\in\mathcal A_s}\E_{\mu}\big[\langle Y_t\rangle_\eps \big|\F_s\big] 
        + C_1\,\varpi(\eps),
\]
where $\varpi(\eps)\to 0$ as $\eps\to 0^+$.

In \textit{Step 5}, we prove the complementary lower bound.
Starting from an arbitrary $\mu\in\mathcal A_s$, we replace it on the short interval
$(t,t+\eps)$ by $\Theta(\,\cdot\,,\sigma)$.
By Lemma~\ref{LEM:g*}\itemref{IT:LEM:g*:selector_subgrad}, this selector satisfies $\sigma\cdot q^\sigma-g^\ast(\,\cdot\,,q^\sigma) = g(\,\cdot\,,\sigma)$.
This produces a drift $\bar\mu\in\mathcal B_\eps^{s,t}$, such that $V_\eps^{s,t}(\bar\mu) = \E_{\bar\mu}[\langle Y_t\rangle_\eps|\F_s]$.
Estimating the difference between the conditional expectations of
$\langle Y_t\rangle_\eps$ under $\mu$ and $\bar\mu$, we obtain
\[
    \esssup_{\mu\in\mathcal B_\eps^{s,t}}V_\eps^{s,t}(\mu)
    \geq
    \esssup_{\mu\in\mathcal A_s}
    \E_\mu[\langle Y_t\rangle_\eps|\F_s]
    -
    C_2\sqrt\eps.
\]

In \textit{Step 6}, we identify the limit of the common term $\esssup_{\mu\in\mathcal A_s}\E_\mu[\langle Y_t\rangle_\eps |\F_s]$ appearing in both the upper and the lower bounds.
For $\ell_1$-a.e. $t$, the time averages $\langle Y_t\rangle_\eps$ converge in $L^2$ to $Y_t$.
Using the pasting stability of $\mathcal A_s$ for Lemma~\ref{LEM:pasting}, we show that
\[
    \esssup_{\mu\in\mathcal A_s}\E_\mu[\langle Y_t\rangle_\eps\mid\F_s]
    \longrightarrow \esssup_{\mu\in\mathcal A_s}\E_\mu[Y_t\mid\F_s], 
    \qquad\text{in }L^1, \text{ as }\eps\to 0^+.
\]

Finally, in \textit{Step 7}, we control separately the convergence of $\eps^{-1}\rho_s(\Delta_\eps\pi_t)$ to the candidate limit $\esssup_{\mu\in\mathcal A_s}\E_\mu[Y_t\mid\F_s]$ from above and below. 
The positive part of the difference is controlled by means of the asymptotic restriction to $\mathcal C_\eps^{s,t}$ from \textit{Step 3} and the upper bound from \textit{Step 4}.
The negative part is estimated by the lower bound from \textit{Step 5}.
Finally, the convergence of the common term from \textit{Step 6} ensures that 
\[
    \frac1\eps\rho_s(\Delta_\eps\pi_t)\longrightarrow \esssup_{\mu\in\mathcal A_s}\E_\mu[Y_t\mid\F_s],\qquad\text{in }L^1, \text{ as }\eps\to 0^+. 
\]
\begin{proof}[\textbf{Proof of Theorem~\ref{TH:resilience_convex} under~\itemref{IT:TH:convex:quad}.}]
Assume henceforth that $g$ satisfies~\ref{IT:g_quad}, 
that $b,\s\in L^\infty_T$, and that the risk measure $\r$ induced by $g$ has dual representation~\eqref{EQ:dual_representation_convex} with $\mathcal B:=\BMO_T$.

\textit{Step $1$ (Girsanov's theorem)}.
    For every $\mu\in\BMO_T$, the stochastic exponential process $\mathcal E^\mu:=\mathcal E(\mu\tinybullet W)$ is a (uniformly integrable) $\P$-martingale by \cite[Theorem~2.3]{Kazamaki_1994_Continuous_exponential_martingales_BMO}.
    Hence, the Girsanov theorem yields that
    \[
        W^\mu_t:=W_t-\int_0^t\mu_r\d r, \qquad t\in[0,T],
    \]
    is a $\Q^\mu$-Brownian motion, where $\Q^\mu$ is defined in~\eqref{EQ:equivalent_mg_measure}.
    
    For all $t\in[0,T)$ and $\eps\in(0,T-t]$, we denote by $\Delta_\eps\pi_t$ the increment $\pi_{t+\eps}-\pi_t$, and we notice that $\Delta_\eps\pi_t\in L^{\mathrm{exp}}(\F_T)$.
    Indeed, it is $\F_{t+\eps}$-measurable, hence $\F_T$-measurable, and
    \begin{align}
        \E\big[\exp(c|\Delta_\eps\pi_t|)\big]
        \leq 
        \exp\big(\eps c\|b\|_{L^\infty_T}\big)
        2\exp\left(\frac12c^2\eps\|\s\|_{L^\infty_T}^2\right)<+\infty,\qquad\forall\, c>0,
        \end{align}
    where we estimated the martingale part of the increment as in~\eqref{EQ:TH:entropy:stoch_expon_bound_begin},~\eqref{EQ:TH:entropy:stoch_expon_bound}.
        
    Using $W^\mu$, we can rewrite the increment:
    \begin{align}
        \Delta_\eps \pi_t
        =\int_t^{t+\eps}b_r\d r +\int_t^{t+\eps}\s_r \cdot\d W_r
        =\int_t^{t+\eps}(b_r+\s_r\cdot \mu_r)\d r +\int_t^{t+\eps}\s_r \cdot\d W^\mu_r.
    \end{align}
    
    Therefore, by Proposition~\ref{PROP:dual_representation}, and because the It\^o integral in the last line has vanishing $\F_s$-conditional expectation under $\Q^\mu$, we have for all $s\in[0,t]$:
    \begin{equation}
    \label{EQ:TH:convex:step1_quadratic}
        \frac1\eps\r_s(\Delta_\eps\pi_t)
        =\esssup_{\mu\in\mathcal B}\E_\mu\left[\left.\frac1\eps\int_t^{t+\eps}(b_r + \s_r\cdot\mu_r)\d r - \frac1\eps\int_s^Tg^\ast(r,\mu_r)\d r \right|\F_s\right],
    \end{equation}
    where the conditional expectation will be henceforth denoted by $V_\eps^{s,t}(\mu)$.
    
\textit{Step $2$ (Restriction from $\mathcal B$ to $\mathcal B_\eps^{s,t}$).}
    Let us fix $0\leq s\leq t<t+\eps\leq T$, define
    \[
        \mathcal B_\eps^{s,t}:=\big\{\mu\in\mathcal B \ : \ \mu=0\text{ on }\Om\times[0,s], \ \mu\in\partial_zg(\,\cdot\,,0)\text{ on }\Om\times[t+\eps,T]\big\},
    \]
    and show that the essential supremum in~\eqref{EQ:TH:convex:step1_quadratic} can be restricted to $\mathcal B_\eps^{s,t}$.
    Clearly, we have $\mathcal B_\eps^{s,t}\subseteq \mathcal B$, hence
    \begin{equation}
    \label{EQ:TH:convex:step2_quadratic}
            \esssup_{\mu\in\mathcal B}V_\eps^{s,t}(\mu)\geq \esssup_{\mu\in\mathcal B_\eps^{s,t}}V_\eps^{s,t}(\mu).
    \end{equation}
    
    Let us now fix $\mu\in\mathcal B$ and consider its predictable projection $\Pi(\,\cdot\,,\mu)$ onto $\partial_z g (\,\cdot\,,0)$ via the $\mathcal P\otimes\mathscr B(\R^m)$-measurable map $\Pi$ from Lemma~\ref{LEM:g*}\itemref{IT:LEM:g*:projector}.
    Define 
    \[
        \tilde\mu:=\mu\1_{(s,t+\eps)} + \Pi(\,\cdot\,,\mu)\1_{[t+\eps,T]},
    \]
    so that $\tilde \mu\in\mathcal B_\eps^{s,t}$
    Then
    \begin{align}
        V_\eps^{s,t}(\mu)
        &=\E_\mu\left[\left.\frac1\eps\int_t^{t+\eps}(b_r + \s_r\cdot\tilde \mu_r)\d r - \frac1\eps\int_s^{t+\eps}g^\ast(r,\tilde \mu_r)\d r  - \frac1\eps\int_{t+\eps}^Tg^\ast(r,\mu_r)\d r \right|\F_s\right]\\
        &\leq \E_\mu\left[\left.\frac1\eps\int_t^{t+\eps}(b_r + \s_r\cdot\tilde \mu_r)\d r - \frac1\eps\int_s^{t+\eps}g^\ast(r,\tilde \mu_r)\d r\right|\F_s\right].
    \label{EQ:one_more_day}
    \end{align}
    For the equality we split the second integral and used the fact that $\tilde \mu=\mu$ on $\Om\times(s,t+\eps)$.
    For the inequality, we used the non-negativity of $g^\ast$.
    
    We aim to apply Lemma~\ref{LEM:mu_vs_nu}\itemref{IT:LEM:mu_vs_nu:equality} to~\eqref{EQ:one_more_day}.
    The difference between the two Lebesgue integrals defines an $\F_{t+\eps}$-measurable random variable with values in $[-\infty,+\infty)$ and integrable positive part.
    Indeed, using  $\int_s^{t+\eps}g^\ast(r,\tilde\mu_r)\d r \geq \int_t^{t+\eps}g^\ast(r,\tilde\mu_r)\d r$ by non-negativity of $g^\ast$ and $\s\cdot\tilde\mu-g^\ast(\,\cdot\,,\tilde\mu)\leq g(\,\cdot\,,\s)$ by definition of $g^\ast$, we get
    \begin{align}
        \int_t^{t+\eps}(b_r + \s_r\cdot\tilde \mu_r)\d r - \int_s^{t+\eps}g^\ast(r,\tilde \mu_r)\d r
        \leq \int_t^{t+\eps}\big(b_r + g(r,\s_r)\big)\d r.
    \end{align}
    Take the positive part to both members, use $x^+\leq |x|$ for $x\in\R$, then the Jensen and the triangle inequalities to obtain
    \begin{equation}
    \label{EQ:one_day_mooore}
        \left(\int_t^{t+\eps}(b_r + \s_r\cdot\tilde \mu_r)\d r - \int_s^{t+\eps}g^\ast(r,\tilde \mu_r)\d r\right)^+
        \leq \int_t^{t+\eps}\big(|b_r|+k(1+|\s_r|^2)\big)\d r,
    \end{equation}
    where we used the quadratic bound~\eqref{EQ:g_quad} for $g$.
    Thanks to the regularity of $b,\s$, the right-hand side is in $L^\infty$, hence the left hand-side is integrable with respect to both $\Q^\mu$ and $\Q^{\tilde\mu}$.
    Therefore,  we can apply Lemma~\ref{LEM:mu_vs_nu}\itemref{IT:LEM:mu_vs_nu:equality} to~\eqref{EQ:one_more_day} with $a:=t+\eps$ and $\nu:=\tilde \mu$:
    \[
        V_\eps^{s,t}(\mu)
        \leq \E_{\tilde \mu}\left[\left.\frac1\eps\int_t^{t+\eps}(b_r + \s_r\cdot\tilde \mu_r)\d r - \frac1\eps\int_s^{t+\eps}g^\ast(r,\tilde \mu_r)\d r\right|\F_s\right] 
        =V_\eps^{s,t}(\tilde\mu),
    \]
    where we implicitly used that $g^\ast(\,\cdot\,,\tilde\mu)=g^\ast(\,\cdot\,,\Pi(\,\cdot\,,\mu))=0$ on $\Om\times[t+\eps,T]$ because $\Pi(\,\cdot\,,\mu)\in\partial_zg(\,\cdot\,,0)$.
    Control the right-hand side with $\esssup_{\nu\in\mathcal B_\eps^{s,t}}V_\eps^{s,t}(\nu)$, then take the essential supremum with respect to $\mu\in\mathcal B$ yielding
    \[
            \esssup_{\mu\in\mathcal B}V_\eps^{s,t}(\mu)\leq \esssup_{\mu\in\mathcal B_\eps^{s,t}}V_\eps^{s,t}(\mu).
    \]
    Combining with~\eqref{EQ:TH:convex:step2_quadratic} yields the equality:
    \begin{equation}
    \label{EQ:TH:convex:step2_bis_quadratic}
        \frac1\eps\r_s(\Delta_\eps\pi_t)
        =\esssup_{\mu\in\mathcal B}V_\eps^{s,t}(\mu)
        =\esssup_{\mu\in\mathcal B_\eps^{s,t}}V_\eps^{s,t}(\mu).
    \end{equation}

\textit{Step $3$ (Asymptotic restriction from $\mathcal B^{s,t}_{\eps}$ to $\mathcal C^{s,t}_{\eps}$).}
    Fix $\mu\in\mathcal B_{\eps}^{s,t}$.
    Rewrite $V_\eps^{s,t}(\mu)$ by splitting the penalty term into an integral over $[s,t]$ and two equal contributions over $[t,t+\eps]$.
    \begin{equation}
    \label{EQ:TH:convex_quadratic:step_J_1}
    \begin{aligned}
        V_{\eps}^{s,t}(\mu)
        &=\E_\mu\left[\left.\frac1\eps\int_{t}^{t+\eps}\Big(b_r + \s_r\cdot \mu_r-\frac12g^\ast(r,\mu_r)\Big)\d r\right|\F_s\right]\\
        &\qquad-\frac1{2\eps}\E_\mu\left[\left.\int_{t}^{t+\eps}g^\ast(r,\mu_r)\d r\right|\F_s\right]
        -\frac1{\eps}\E_\mu\left[\left.\int_{s}^{t}g^\ast(r,\mu_r)\d r\right|\F_s\right]
    \end{aligned}
    \end{equation}

    The integrand of the first conditional expectation can be estimated as follows by recalling the lower bound from Lemma~\ref{LEM:g*}\itemref{IT:LEM:g*:def}\itemref{IT:LEM:g*:prop_nonnega}:
    \begin{align}
        b+\s\cdot \mu - \frac12g^\ast(\,\cdot\,,\mu)
        &\leq \|b\|_{L^\infty_T} + \|\s\|_{L^\infty_T}|\mu| - \frac12\Big(\frac1{4k}|\mu|^2 -k\Big)\\
        &\leq \|b\|_{L^\infty_T} + \sup_{x>0}\left\{\|\s\|_{L^\infty_T}x - \frac1{8k}x^2\right\} + \frac k2\\
        &= \|b\|_{L^\infty_T} + 2k\|\s\|^2_{L^\infty_T}+ \frac k2,
    \end{align}
    where we computed the supremum for the last equality.
    Consequently, the first conditional expectation in~\eqref{EQ:TH:convex_quadratic:step_J_1} is essentially bounded by $\|b\|_{L^\infty_T}+2k\|\s\|^2_{L^\infty_T}+ k/2$.
    
    Denote by $G^s_{t,t+\eps}(\mu)$ and $ G_{s,t}(\mu)$ the second and third conditional expectations in~\eqref{EQ:TH:convex_quadratic:step_J_1}, respectively.
    Hence the estimate
    \begin{align}
        V_{\eps}^{s,t}(\mu)\leq C_0-\frac1{2\eps}G^s_{t,t+\eps}(\mu)-\frac1\eps G_{s,t}(\mu),
    \label{EQ:TH:convex_quadratic:step_J_2}
    \end{align}
    where $C_0:=\|b\|_{L^\infty_T}+2k\|\s\|^2_{L^\infty_T}+ k/2>0$ is a constant that depends only on data.

    We now define the asymptotically relevant class
    \begin{equation}
    \label{EQ:TH:convex_quadratic:def_K}
        \mathcal C_\eps^{s,t}
        :=\Big\{\mu\in\mathcal B_\eps^{s,t}\ :\   G_{s,t}(\mu)\le \sqrt\eps,\ \  G^s_{t,t+\eps}(\mu)\le \sqrt\eps\Big\}.
    \end{equation}
    If $\mu\in\mathcal B_\eps^{s,t}\setminus \mathcal C_\eps^{s,t}$, then either $ G_{s,t}(\mu)>\sqrt\eps$ or $G^s_{t,t+\eps}(\mu)>\sqrt\eps$.
    In the latter case,~\eqref{EQ:TH:convex_quadratic:step_J_2} yields
    \[
        V_{\eps}^{s,t}(\mu)
        \leq C_0-\frac1{2\sqrt{\eps}}-\frac1\eps G_{s,t}(\mu)
        \leq C_0-\frac1{2\sqrt{\eps}},
    \]
    while in the former case
    \[
        V_{\eps}^{s,t}(\mu)
        \leq C_0-\frac1{2\eps}G^s_{t,t+\eps}(\mu)-\frac1{\sqrt{\eps}}
        \leq C_0-\frac1{\sqrt{\eps}}
        \leq C_0-\frac1{2\sqrt{\eps}}.
    \]
    Hence
    \begin{equation}
    \label{EQ:TH:convex_quadratic:step_J_2_bis}
        \esssup_{\mu\in\mathcal B_\eps^{s,t}\setminus \mathcal C_\eps^{s,t}}V_{\eps}^{s,t}(\mu)
        \leq C_0-\frac1{2\sqrt{\eps}}.
    \end{equation}
    The right-hand side diverges to $-\infty$ as $\eps\to 0^+$, thus the contribution of $\mathcal B_\eps^{s,t}\setminus\mathcal C_\eps^{s,t}$ to the essential supremum is asymptotically negligible.
    Equivalently, for the purpose of the limit as $\eps\to0^+$, the essential supremum can be restricted to $\mathcal C_\eps^{s,t}$.
    Therefore, for the moment, we shall focus only on $\mathcal C_\eps^{s,t}$.

\textit{Step $4$ (Upper bound on $\mathcal C_\eps^{s,t}$).}
    Thanks to the previous step, we fix $\mu\in\mathcal C_\eps^{s,t}$.
    Rewrite once again $V_\eps^{s,t}(\mu)$ by splitting the penalty term into an integral over $[s,t]$ and an integral over $[t,t+\eps]$. 
    While the first can be included with the other terms integrated over $[t,t+\eps]$, the second leads to $ G_{s,t}(\mu)$.
    \begin{equation}
        V_\eps^{s,t}(\mu)
        = \E_\mu\left[\left.\frac1\eps\int_t^{t+\eps}\big(b_r + \mu_r\cdot\s_r - g^\ast(r,\mu_r)\big)\d r\right|\F_s\right]-\frac1\eps G_{s,t}(\mu).
    \label{EQ:TH:convex_quadratic:upper_1}
    \end{equation}
    The definition of $g^\ast$ yields
    \begin{equation}
    \label{EQ:TH:convex_quadratic:upper_1_bis}
        g^\ast(\,\cdot\,,\mu)
        =\sup_{z\in\R^m}\big\{\mu\cdot z-g(\,\cdot\,,z)\big\}
        \geq \mu\cdot\s-g(\,\cdot\,,\s).
    \end{equation}
    Use this in~\eqref{EQ:TH:convex_quadratic:upper_1}, together with the non-negativity of $ G_{s,t}(\mu)$,  to obtain
    \begin{equation}
        V_\eps^{s,t}(\mu)
        \leq  \E_\mu\left[\left.\frac1\eps\int_t^{t+\eps}\big(b_r + g(r,\s_r)\big)\d r\right|\F_s\right]
        =:\E_\mu[\langle Y_t\rangle_\eps |\F_s].
    \label{EQ:TH:convex_quadratic:upper_2}
    \end{equation}

    Let us use the map $\Pi$ from Lemma~\ref{LEM:g*}\itemref{IT:LEM:g*:projector} to define 
    \begin{equation}
    \label{EQ:TH:convex_quadratic:def_tilde_mu}
        \tilde\mu=\Pi(\,\cdot\,,\mu)\1_{(s,t+\eps)}+\mu\1_{[t+\eps,T]},
    \end{equation}
    so that $\tilde\mu\in\mathcal A_s$.
    Observe that $\langle Y_t\rangle_\eps$ belongs to $L^\infty$.
    Indeed, by the quadratic bound on $g$,
    \begin{align}
    \label{EQ:TH:convex:step3_quadratic_bis}
        \big|\langle Y_t\rangle_\eps\big|
        \leq
        \frac1\eps\int_t^{t+\eps}
            \big(|b_r|+|g(r,\s_r)|\big)\d r
        \leq
        \|b\|_{L^\infty_T}
        +k\big(1+\|\s\|_{L^\infty_T}^2\big)
        =:M.
    \end{align}
    Consequently, $\|\langle Y_t\rangle_\eps\|_{L^\infty}\leq M$.
    Therefore, we estimate the difference between the conditional expectations of $\langle Y_t\rangle_\eps $ with respect to $\mu$ and $\tilde\mu$ by means of Lemma~\ref{LEM:mu_vs_nu}\itemref{IT:LEM:mu_vs_nu:bound_infty} with $\nu:=\tilde\mu$ and $X=\langle Y_t\rangle_\eps $.
    \begin{align}
        \E_{\mu}\big[\langle Y_t\rangle_\eps \big|\F_s\big]
        -\E_{\tilde\mu}\big[\langle Y_t\rangle_\eps \big|\F_s\big]
        &\leq \big|\E_{\tilde\mu}\big[\langle Y_t\rangle_\eps \big|\F_s\big]
        -\E_{\mu}\big[\langle Y_t\rangle_\eps \big|\F_s\big]\big|\\
        &\leq\sqrt 2 \|\langle Y_t\rangle_\eps \|_{L^\infty} \left(\E_{\mu}\left[\left.\int_s^{t+\eps}|\mu_r-\tilde\mu_r|^2\d r \right|\F_s\right]\right)^{1/2}.\quad 
    \label{EQ:TH:convex_quadratic:step_upperbound_1}
    \end{align}
    We now estimate the conditional quadratic variation in the right-hand side by splitting the integral over $(s,t)$ and $(t,t+\eps)$.
    First, recall that $\tilde\mu=\Pi(\,\cdot\,,\mu)$ on $\Om\times(s,t+\eps)$.
    Suppose now that  the property~\eqref{EQ:TH:resilience_convex:HP_psi} is satisfied $\P\otimes\ell_1$-a.e. on $\Om\times(s,t)$ for some $\psi:[0,+\infty)\to[0,\infty)$ concave and infinitesimal around $0$.
    Then 
    \[
        |\mu-\tilde\mu|^2
        =|\mu-\Pi(\,\cdot\,,\mu)|^2
        =\operatorname{dist}^2\big(\mu,\partial_z g(\,\cdot\,,0)\big)
        \leq \psi\big(g^\ast(\,\cdot\,,\mu)\big),
        \qquad \P\otimes\ell_1\text{-a.e. on }\Om\times(s,t).
    \]
    When $s\neq t$, by  Jensen's inequality applied to $\psi$, first on $(s,t)$, and then conditionally with respect to $\mathcal F_s$, we reach
    \begin{align}
        \E_\mu\left[\left.\int_s^t|\mu_r-\tilde\mu_r|^2\d r\right|\F_s\right]
        &\leq(t-s)\E_\mu\left[\left.\frac1{t-s}\int_s^t\psi\big(g^\ast(r,\mu_r)\big)\d r\right|\F_s\right]\\
        &\leq (t-s)\,\psi\left(\frac{G_{s,t}(\mu)}{t-s}\right)\\
        &\leq (1\vee T)\,\psi\big(G_{s,t}(\mu)\big),
    \end{align}
    where, in the last line, we used $\a\,\psi(x/\a)\le (1\vee\a)\psi(x)$ from Remark~\ref{REM:growth_HP}$(i)$ with $\a:=t-s\leq T$ and $x:=G_{s,t}(\mu)$.
    The same estimate is trivially satisfied when $s=t$.
    On $(t,t+\eps)$, we use instead the estimates from Lemma~\ref{LEM:g*}, in particular $g^\ast(\,\cdot\,,q)\geq |q|^2/(4k)-k$ and $|\Pi(\,\cdot\,,q)|\leq 2k$ for $q\in\R^m$, obtaining
    \[
        |\mu-\tilde\mu|^2
        =|\mu-\Pi(\,\cdot\,,\mu)|^2
        \leq 2|\mu|^2+2|\Pi(\,\cdot\,,\mu)|^2
        \leq 8k\,g^\ast(\,\cdot\,,\mu)+16k^2.
    \]
    Therefore,
    \begin{align}
        \E_{\mu}\left[\left.
            \int_s^{t+\eps}|\mu_r-\tilde\mu_r|^2\d r
        \right|\F_s\right]
        &\leq (1\vee T)\,\psi\big(G_{s,t}(\mu)\big)+8k\,G^s_{t,t+\eps}(\mu)+16k^2\eps.
    \label{EQ:TH:convex_quadratic:step_upperbound_2}
    \end{align}
    Since $\mu\in\mathcal C_\eps^{s,t}$, both $ G_{s,t}(\mu)$ and $G^s_{t,t+\eps}(\mu)$ are bounded by $\sqrt\eps$, hence by  increasing monotonicity of $\psi$,
    \[
        \E_{\mu}\left[\left.
            \int_s^{t+\eps}|\mu_r-\tilde\mu_r|^2\d r
        \right|\F_s\right]
        \leq
        C_{k,T}\big(\psi(\sqrt\eps)+\sqrt\eps+\eps\big),
    \]
    where $C_{k,T}:=1\vee T\vee (8k) \vee (16k^2)>0$.
    Combining this with~\eqref{EQ:TH:convex_quadratic:step_upperbound_1} and with $\|\langle Y_t\rangle_\eps \|_{L^\infty} \leq M$, we find a constant $C_1:=M \sqrt{2C_{k,T}}>0$, depending only on $k,T,\|b\|_{L^\infty_T}$ and $\|\s\|_{L^\infty_T}$, such that
    \begin{equation}
        \E_{\mu}\big[\langle Y_t\rangle_\eps \big|\F_s\big]
        \leq  \E_{\tilde\mu}\big[\langle Y_t\rangle_\eps \big|\F_s\big] 
        + C_1\,\varpi(\eps)
        \leq  \esssup_{\nu\in\mathcal A_s}\E_{\nu}\big[\langle Y_t\rangle_\eps \big|\F_s\big] 
        + C_1\,\varpi(\eps),
    \end{equation}
    where $\varpi(\eps):=\big(\psi(\sqrt\eps)+\sqrt\eps+\eps\big)^{1/2}$ satisfies $\varpi(\eps)\to 0$ as $\eps\to 0^+$ thanks to Remark~\ref{REM:growth_HP}$(i)$.
    Take the essential supremum over $\mu\in \mathcal C_\eps^{s,t}$ and use~\eqref{EQ:TH:convex_quadratic:upper_2} to get
    \begin{equation}
    \label{EQ:TH:convex_quadratic:upper_3}
        \esssup_{\mu\in\mathcal C_\eps^{s,t}} V_\eps^{s,t}(\mu)
        \leq \esssup_{\mu\in\mathcal C_\eps^{s,t}}
            \E_{\mu}\big[\langle Y_t\rangle_\eps \big|\F_s\big]
        \leq  \esssup_{\mu\in\mathcal A_s}\E_{\mu}\big[\langle Y_t\rangle_\eps \big|\F_s\big] 
        +  C_1\,\varpi(\eps)
    \end{equation}

\textit{Step $5$ (Lower bound on $\mathcal B_\eps^{s,t}$).}
    Set $q^\sigma:=\Theta(\,\cdot\,,\sigma)$.
    By Lemma~\ref{LEM:g*}\itemref{IT:LEM:g*:selector_subgrad},
    \[
        \sigma\cdot q^\sigma-g^\ast(\,\cdot\,,q^\sigma)
        =
        g(\,\cdot\,,\sigma),
        \qquad \P\otimes\ell_1\text{-a.e.}
    \]
    Moreover, under~\itemref{IT:TH:convex:quad}, we have $|q^\sigma|\leq 5k(1+\|\sigma\|_{L^\infty_T})=:R$.
    Fix $\mu\in\mathcal A_s$ and define
    \[
        \bar\mu
        :=
        \mu\1_{[s,t]\cup[t+\eps,T]}
        +
        q^\sigma\1_{(t,t+\eps)}.
    \]
    Since $\mu\in\partial_z g(\,\cdot\,,0)$ on $\Omega\times(s,T]$,
    we have $g^\ast(\,\cdot\,,\mu)=0$ on $\Omega\times(s,T]$.
    Moreover, $\bar\mu$ is bounded, hence $\bar\mu\in\BMO_T$, and
    $\bar\mu\in\mathcal B_\eps^{s,t}$.
    Since $\bar\mu=q^\sigma$ on $\Omega\times(t,t+\eps)$, we obtain
    \begin{equation}
    \label{EQ:TH:convex_quadratic:step_lowerbound_1_new}
        V_\eps^{s,t}(\bar\mu)
        =
        \E_{\bar\mu}\left[
            \left.
            \frac1\eps\int_t^{t+\eps}
            \big(b_r+g(r,\sigma_r)\big)\d r
            \right|\F_s
        \right]
        =
        \E_{\bar\mu}[\langle Y_t\rangle_\eps|\F_s].
    \end{equation}

    As in \textit{Step 4}, Lemma~\ref{LEM:mu_vs_nu}\itemref{IT:LEM:mu_vs_nu:bound_infty}
    gives
    \begin{align}
        \E_{\mu}[\langle Y_t\rangle_\eps|\F_s]
        -\E_{\bar\mu}[\langle Y_t\rangle_\eps|\F_s]
        &\leq
        \sqrt2 \|\langle Y_t\rangle_\eps\|_{L^\infty}
        \left(
            \E_\mu\left[
                \left.
                \int_s^{t+\eps}|\mu_r-\bar\mu_r|^2\d r
                \right|\F_s
            \right]
        \right)^{1/2} \\
        &\leq
        M(R+2k)\sqrt{2\eps}.
    \end{align}
    Therefore, for a constant $C_2>0$ depending only on
    $\|b\|_{L^\infty_T}$, $\|\sigma\|_{L^\infty_T}$ and $k$,
    \[
        \E_\mu[\langle Y_t\rangle_\eps|\F_s]
        \leq
        V_\eps^{s,t}(\bar\mu)+C_2\sqrt\eps
        \leq
        \esssup_{\nu\in\mathcal B_\eps^{s,t}}V_\eps^{s,t}(\nu)
        +C_2\sqrt\eps.
    \]
    Taking the essential supremum over $\mu\in\mathcal A_s$ gives
    \begin{equation}
    \label{EQ:TH:convex_quadratic:step_lowerbound_2_new}
        \esssup_{\nu\in\mathcal B_\eps^{s,t}}V_\eps^{s,t}(\nu)
        \geq
        \esssup_{\mu\in\mathcal A_s}
        \E_\mu[\langle Y_t\rangle_\eps|\F_s]
        -
        C_2\sqrt\eps.
    \end{equation}

\textit{Step $6$ (Limit of the upper and lower bounds).}
    Since $Y\in L^2_T$, Lemma~\ref{LEM:integral_average} yields a Borel set $\mathcal T_1\subseteq[0,T)$ with full $\ell_1$-measure such that, for all $t\in\mathcal T_1$
    \begin{equation}
        \langle Y_t\rangle_\eps \longrightarrow Y_t, \qquad \text{in }L^2, \text{ as }\eps\to 0^+.
    \end{equation}
    Fix $0\leq s\leq t<t+\eps\leq T$ with $t\in\mathcal T_1$.
    We prove that the term $\esssup_{\mu\in\mathcal A_s}\E_\mu[\langle Y_t\rangle_\eps |\F_s]$ appearing in the right-hand sides of both the upper bound~\eqref{EQ:TH:convex_quadratic:upper_3} and the lower bound~\eqref{EQ:TH:convex_quadratic:step_lowerbound_2_new} converges in $L^1$, as $\eps\to 0^+$, to 
    \[
        H_{s,t}:=\esssup_{\mu\in\mathcal A_s}\E_\mu[Y_t|\F_s].
    \]
    
    Recall this general property: if $I$ is an arbitrary index set and $(x_i)_{i\in I}$, $(y_i)_{i\in I}$ are families of real-valued random variables, then
    \[
        |\esssup_{i\in I} x_i - \esssup_{i\in I} y_i|\leq \esssup_{i\in I}|x_i-y_i|.
    \]
    Using this:
    \begin{align}
        \left|
            \esssup_{\mu\in\mathcal A_s}
            \E_\mu[\langle Y_t\rangle_\eps\mid\F_s]
            -
            H_{s,t}
        \right|
        &\leq
        \esssup_{\mu\in\mathcal A_s}
        \E_\mu[
            |\langle Y_t\rangle_\eps-Y_t|
            \mid\F_s
        ].
    \end{align}
    We now apply Lemma~\ref{LEM:pasting} to the set $\mathcal A_s\subset \BMO_T$.
    First, every $\mu\in\mathcal A_s$ satisfies $\mu=0$ on $\Omega\times[0,s]$ by definition.
    Let us verify the pasting property~\eqref{EQ:LEM:pasting:pasting_U}.
    Fix $\mu^1,\mu^2\in\mathcal A_s$ and $A\in\mathcal F_s$, and define $\mu:=\mu^1\1_A+\mu^2\1_{A^c}$.
    Then $\mu=0$ on $\Omega\times[0,s]$.
    Moreover, since $\mu^1,\mu^2\in\partial_zg(\,\cdot\,,0)$ on $\Omega\times(s,T]$, the same holds for $\mu$.
    Thus $\mu\in\mathcal A_s$.
    Consequently, Lemma~\ref{LEM:pasting} applied with $X=|\langle Y_t\rangle_\eps-Y_t|$ gives
    \[
        \left\|
            \esssup_{\mu\in\mathcal A_s}
            \E_\mu[\langle Y_t\rangle_\eps\mid\F_s]
            -
            H_{s,t}
        \right\|_{L^1}
        \leq
        e^{2k^2T}
        \|\langle Y_t\rangle_\eps-Y_t\|_{L^2}.
    \]
    Since $t\in\mathcal T_1$, the right-hand side converges to zero as
    $\eps\to0^+$.
    Hence
    \begin{equation}
    \label{EQ:TH:convex_quadratic:step_limitupperbound_4}
        \esssup_{\mu\in\mathcal A_s}\E_\mu\big[\langle Y_t\rangle _{\eps}\big|\F_s\big]
        \longrightarrow H_{s,t}, \qquad \text{in }L^1, \text{ as }\eps\to 0^+.
    \end{equation}

\textit{Step $7$ (Convergence in $L^1$).}  
    We now employ~\eqref{EQ:TH:convex_quadratic:step_limitupperbound_4} to infer the convergence in $L^1$ of $\esssup_{\mu\in\mathcal B_\eps^{s,t}}V^{s,t}_\eps(\mu)$, by means of the upper and lower bounds~\eqref{EQ:TH:convex_quadratic:upper_3},~\eqref{EQ:TH:convex_quadratic:step_lowerbound_2_new}.

    Fix $t\in\mathcal T_1$ and $s\in[0,t]$.
    Then for all $\eps\in(0,T-t]$, by decreasing monotonicity of the negative part, we have
    \begin{align}
        \left(
            \esssup_{\mu\in\mathcal B_\eps^{s,t}}V_\eps^{s,t}(\mu)-H_{s,t}
        \right)^-
        &\leq \left(
            \esssup_{\mu\in\mathcal A_s}\E_\mu\big[\langle Y_t\rangle_\eps \big|\F_s\big]
            -C_2\sqrt{\eps}-H_{s,t}
        \right)^-\\
        &\leq
        \left|
            \esssup_{\mu\in\mathcal A_s}\E_\mu\big[\langle Y_t\rangle_\eps \big|\F_s\big]
            -H_{s,t}
        \right|
        +C_2\sqrt{\eps}.
    \end{align}
    Here, we used the lower bound~\eqref{EQ:TH:convex_quadratic:step_lowerbound_2_new} for the first inequality, then the relation $x^-\leq |x|$ together with the triangle inequality.
    Taking expectations and using the convergence~\eqref{EQ:TH:convex_quadratic:step_limitupperbound_4} proved in the previous step, we conclude that
    \begin{equation}
    \label{EQ:TH:convex_quadratic:L1_1}
        \left(
        \esssup_{\mu\in\mathcal B_\eps^{s,t}}V_\eps^{s,t}(\mu)-H_{s,t}
        \right)^-
        \longrightarrow0,
        \qquad \text{in }L^1,\text{ as }\eps\to0^+.
    \end{equation}
    
    Before showing a similar convergence for the positive part, we need to restrict the set of times $\mathcal T_1$.
    Let us use the process $\Pi$ from Lemma~\ref{LEM:g*}\itemref{IT:LEM:g*:projector} to define the predictable process $q^0:=\Pi(\,\cdot\,,0)$.
    Then also $q^0\in\partial_z g (\,\cdot\,,0)$ and $|q^0|\leq 2k$, $\P\otimes\ell_1$-a.e.
    From these properties and from the assumptions~\ref{IT:g_quad} on $g$, Fubini's theorem yields a Borel set $\mathcal T_2\subseteq[0,T]$ with full $\ell_1$-measure such that, for $t\in\mathcal T_2$
    \[
        q^0_t\in\partial_z g (t,0), \quad |q^0_t|\leq 2k, \quad g(t,0)=0, \quad |g(t,\s_t)|\leq k(1+|\s_t|^2), \qquad \P\text{-a.s.}
    \]
    Hence, for $t\in\mathcal T_2$, the subgradient inequality yields
    \[
        -2k\|\s\|_{L^\infty_T}
        \leq \s_t\cdot q^0_t
        =\s_t\cdot q^0_t - g(t,0)
        \leq  g(t,\s_t)
        \leq k(1+|\s_t|^2)
        \leq k(1+\|\s\|_{L^\infty_T}^2).
    \]
    Consequently, $Y_t\in L^\infty$ with bounds
    \[
        m:=-\|b\|_{L^\infty_T}-2k\|\s\|_{L^\infty_T} \leq Y_t \leq \|b\|_{L^\infty_T} + k(1+\|\s\|_{L^\infty_T}^2)=:M.
    \]
    Then for all $s\in[0,t]$ and $\mu\in\mathcal A_s$ we have $m\leq \E_\mu[Y_t|\F_s]\leq M$.
    Take the essential supremum over $\mu\in\mathcal A_s$ to infer $H_{s,t}\in L^\infty$ with 
    \begin{equation}
    \label{EQ:TH:convex_quadratic:L1_1_b}
        m\leq H_{s,t}\leq M.
    \end{equation}
    
    Define $\mathcal T:=\mathcal T_1\cap \mathcal T_2$, so that $\mathcal T\subseteq[0,T)$, is Borel and has full $\ell_1$-measure.
    Fix $t\in\mathcal T$ and $s\in[0,t]$.
    Split the essential supremum of $V_\eps^{s,t}(\mu)$:
    \begin{equation}
        \esssup_{\mu\in\mathcal B_\eps^{s,t}}V_\eps^{s,t}(\mu)-H_{s,t}
        =\bigg(
            \esssup_{\mu\in\mathcal C_\eps^{s,t}}V_\eps^{s,t}(\mu)-H_{s,t}
        \bigg) \vee \bigg(
            \esssup_{\mu\in\mathcal B_\eps^{s,t}\setminus\mathcal C_\eps^{s,t}}V_\eps^{s,t}(\mu)-H_{s,t}
        \bigg),
    \end{equation}
    and use $(x\vee y)^+\leq x^++y^+$, for $x,y\in\R$, to obtain
    \begin{equation}
    \label{EQ:TH:convex_quadratic:L1_2}
        \left(
            \esssup_{\mu\in\mathcal B_\eps^{s,t}}V_\eps^{s,t}(\mu)-H_{s,t}
        \right)^+
        \leq 
            \left(
                \esssup_{\mu\in\mathcal C_\eps^{s,t}}V_\eps^{s,t}(\mu)-H_{s,t}
            \right)^+
            +
            \left(
                \esssup_{\mu\in\mathcal B_\eps^{s,t}\setminus\mathcal C_\eps^{s,t}}V_\eps^{s,t}(\mu)-H_{s,t}
            \right)^+.
    \end{equation}
    Recall~\eqref{EQ:TH:convex_quadratic:step_J_2_bis} from \textit{Step 3}.
    Since $-1/(2\sqrt\eps)\to-\infty$, there exists $\eps_0>0$ such that, for all $\eps\in(0,\eps_0\wedge(T-t)]$
    \begin{equation}
    \label{EQ:TH:convex_quadratic:L1_2_bis}
        \esssup_{\mu\in\mathcal B_\eps^{s,t}\setminus\mathcal C_\eps^{s,t}}V_\eps^{s,t}(\mu)\leq C_0-\frac1{2\sqrt\eps}\leq m\leq H_{s,t},
    \end{equation}
    where the last inequality comes from~\eqref{EQ:TH:convex_quadratic:L1_1_b}.
    Therefore, the last positive part in~\eqref{EQ:TH:convex_quadratic:L1_2} is identically $0$ for sufficiently small $\eps$, and we infer
    \begin{equation}
        \left(
            \esssup_{\mu\in\mathcal B_\eps^{s,t}}V_\eps^{s,t}(\mu)-H_{s,t}
        \right)^+
        \leq\left(
            \esssup_{\mu\in\mathcal C_\eps^{s,t}}V_\eps^{s,t}(\mu)-H_{s,t}
        \right)^+, \qquad \forall\,\eps\in \big(0,\eps_0\wedge(T-t)\big].
    \end{equation}
    As previously done for the negative part, let us now use the increasing monotonicity of the positive part with the upper bound~\eqref{EQ:TH:convex_quadratic:upper_3}:
    \begin{align}
        \left(
            \esssup_{\mu\in\mathcal C_\eps^{s,t}}V_\eps^{s,t}(\mu)-H_{s,t}
        \right)^+
        &\leq \left(
            \esssup_{\mu\in\mathcal A_s}\E_{\mu}\big[\langle Y_t\rangle_\eps \big|\F_s\big] 
            + C_1\,\varpi(\eps)-H_{s,t}
        \right)^+\\
        &\leq
        \left|
            \esssup_{\mu\in\mathcal A_s}\E_\mu\big[\langle Y_t\rangle_\eps \big|\F_s\big]
            -H_{s,t}
        \right|
        +C_1\,\varpi(\eps).
    \end{align}
    For the second inequality we used $x^+\leq |x|$ for all $x\in\R$ and the triangle inequality.
    Taking expectations and using the convergence~\eqref{EQ:TH:convex_quadratic:step_limitupperbound_4} proved in the previous step, we conclude that
    \begin{equation}
    \label{EQ:TH:convex_quadratic:L1_3}
        \left(
        \esssup_{\mu\in\mathcal B_\eps^{s,t}}V_\eps^{s,t}(\mu)-H_{s,t}
        \right)^+
        \longrightarrow0,
        \qquad \text{in }L^1,\text{ as }\eps\to0^+.
    \end{equation}

    Finally, recalling~\eqref{EQ:TH:convex:step2_bis_quadratic} from \textit{Step 2} and using $|x|=x^++x^-$, we have
    \begin{equation}
    \label{EQ:TH:convex_quadratic:L1_3_bis}
    \begin{aligned}
        \left|
            \frac1\eps\r_s(\Delta_\eps\pi_t)-H_{s,t}
        \right|
        &=\left|
            \esssup_{\mu\in\mathcal B_\eps^{s,t}}V_\eps^{s,t}(\mu)-H_{s,t}
        \right|\\
        &=
        \left(
            \esssup_{\mu\in\mathcal B_\eps^{s,t}}V_\eps^{s,t}(\mu)-H_{s,t}
        \right)^+
        +
        \left(
            \esssup_{\mu\in\mathcal B_\eps^{s,t}}V_\eps^{s,t}(\mu)-H_{s,t}
        \right)^-.
    \end{aligned}
    \end{equation}
    Taking expectations and using the convergences~\eqref{EQ:TH:convex_quadratic:L1_3} and~\eqref{EQ:TH:convex_quadratic:L1_1} of the positive and negative parts, we infer
    \begin{equation}
    \label{EQ:TH:convex:fine}
        \lim_{\eps\to 0^+}\E\left[\left|
            \frac1\eps\r_s(\Delta_\eps\pi_t)-H_{s,t}
        \right|\right]
        =0.
    \end{equation}
    Therefore $\mathcal D_s^\r\pi_t=H_{s,t}$. 
    By Remark~\ref{REM:As_or_Ast}, this proves the representation~\eqref{EQ:TH:resilience_convex:thesis} over $\mathcal A^0$. 
    The bounds~\eqref{EQ:TH:convex_quadratic_bounds_resilience} follow from~\eqref{EQ:TH:convex_quadratic:L1_1_b}.
\end{proof}

\begin{proof}[\textbf{Proof of Corollary
\ref{COR:from_convex_to_coherent}}]
    Since $g$ is finite, convex, and positively homogeneous in $z$,
    it is sublinear in $z$.
    Hence, by \cite[Theorem~8.24, Corollary~8.25]{Rockafellar+Wets_1998_Variational_analysis}, for $\P\otimes\ell_1$-a.e. $(\omega,r)$, and $z,q\in\R^m$
    \begin{equation}
    \label{EQ:COR:sublinear_representation}
        g(\omega,r,z)
        =
        \sup_{q\in\partial_z g(\omega,r,0)} q\cdot z,
        \qquad \qquad 
        g^\ast(\omega,r,q)
        =
        \begin{cases}
            0,
            & q\in\partial_z g(\omega,r,0),\\
            +\infty,
            & q\notin\partial_z g(\omega,r,0).
        \end{cases}
    \end{equation}
    Therefore~\itemref{IT:TH:resilience_convex:GA} is automatically satisfied, for instance with $\psi(x)=x$, using the convention $\psi(+\infty)=+\infty$.

    Set $q^\s:=\Theta(\,\cdot\,,\s)$, where $\Theta$ is the predictable selector introduced in Lemma~\ref{LEM:g*}.
    By Lemma~\ref{LEM:g*}\itemref{IT:LEM:g*:selector_subgrad} and by~\eqref{EQ:COR:sublinear_representation},  $q^\s\in\partial_zg(\,\cdot\,,0)$ and $q^\s\cdot\s=g(\,\cdot\,,\s)$, $\P\otimes\ell_1$-a.e.
    By Fubini's theorem, after possibly reducing the full-measure Borel set provided by Theorem~\ref{TH:resilience_convex}, we may
    choose $\mathcal T$ so that, for every $t\in\mathcal T$, $\P$-a.s., for all $z\in\R^m$:
    \begin{equation} \label{EQ:COR:pointwise_selector}
        g(t,z) = \sup_{q\in\partial_zg(t,0)}q\cdot z, \qquad
        q^\s_t\in\partial_zg(t,0), \qquad q^\s_t\cdot\s_t
        =g(t,\s_t).
    \end{equation}

    Theorem~\ref{TH:resilience_convex} now yields, for every $t\in\mathcal T$ and $s\in[0,t]$,
    \begin{equation}
    \label{EQ:COR:from_theorem}
        \mathcal D^\r_s\pi_t
        =
        \esssup_{\mu\in\mathcal A^0}
        \E_\mu\left[
            b_t+g(t,\s_t)
            \,\big|\,\F_s
        \right],
    \end{equation}
    where $\mathcal A^0$ is defined in~\eqref{EQ:A^0}.

    Fix $\mu\in\mathcal A_t^0$.
    Since $\mu_t\in\partial_zg(t,0)$, the pointwise support-function representation~\eqref{EQ:COR:pointwise_selector} yields $\mu_t\cdot\s_t\leq g(t,\s_t)$, $\P$-a.s.
    Moreover, $\mathcal A_t^0\subseteq\mathcal A^0$. 
    Consequently, $\E_\mu\left[ b_t+\mu_t\cdot\s_t \big|\F_s \right] \leq \mathcal D^\r_s\pi_t$.
    Taking the essential supremum over $\mu\in\mathcal A_t^0$ yields
    \begin{equation}
    \label{EQ:COR:coherent_upper}
        \esssup_{\mu\in\mathcal A_t^0}
        \E_\mu\left[
            b_t+\mu_t\cdot\s_t
            \,\big|\,\F_s
        \right]
        \leq
        \mathcal D^\r_s\pi_t.
    \end{equation}

    Conversely, fix $\mu\in\mathcal A^0$ and define the predictable representative
    \[
        \widetilde\mu
        :=
        \mu\1_{[0,T]\setminus\{t\}}
        +
        q^\s\1_{\{t\}}.
    \]
    Since $\Om\times\{t\}$ is predictable,
    $\widetilde\mu$ is predictable.
    Moreover, $\widetilde\mu=\mu$,
    $\P\otimes\ell_1$-a.e., while
   ~\eqref{EQ:COR:pointwise_selector} gives $\widetilde\mu_t=q^\s_t\in\partial_zg(t,0)$, $\P$-a.s.
    Thus $\widetilde\mu\in\mathcal A_t^0$.
    Since $\widetilde\mu$ and $\mu$ differ only on
    $\Om\times\{t\}$, their stochastic integrals coincide and hence $\Q^{\widetilde\mu}=\Q^\mu$.
    Therefore, by~\eqref{EQ:COR:pointwise_selector},
    \[
        \E_\mu\left[
            b_t+g(t,\s_t)
            \,\big|\,\F_s
        \right]
        =
        \E_{\widetilde\mu}\left[
            b_t+\widetilde\mu_t\cdot\s_t
            \,\big|\,\F_s
        \right]
        \leq
        \esssup_{\nu\in\mathcal A_t^0}
        \E_\nu\left[
            b_t+\nu_t\cdot\s_t
            \,\big|\,\F_s
        \right].
    \]
    Taking the essential supremum over $\mu\in\mathcal A^0$ and using
   ~\eqref{EQ:COR:from_theorem} proves the reverse inequality to
   ~\eqref{EQ:COR:coherent_upper}, and hence
   ~\eqref{EQ:COR:resilience_coherent}.
\end{proof}

\subsubsection{Attainment of the essential supremum}
\label{SEC:attainment}
This section is devoted to the attainment of the essential supremum in formula~\eqref{EQ:TH:resilience_convex:thesis}.
We first need a technical lemma to construct the support function of the zero-penalty set $\partial_z g(\,\cdot\,,0)$ together with a measurable maximizer.
Since the proof relies on standard techniques in convex analysis, we postponed it to Appendix~\ref{SEC:app_duality}.
\begin{lemma}
\label{LEM:h}
    There exist $\mathcal P\otimes\mathscr B(\R^m)$-measurable maps 
    \[
        h:\Om\times[0,T]\times\R^m\to \R, 
        \qquad\qquad 
        \bar\mu:\Omega\times[0,T]\times\R^m\to\R^m,
    \]
    that satisfy the following properties $\P\otimes\ell_1$-a.e. 
    \begin{enumerate}[label=(\roman*), noitemsep]
        \item 
        \label{IT:LEM:h:def}
            $z\mapsto h(\,\cdot\,, z)$ is the support function of $\partial_z g (\,\cdot\,,0)$, namely
            \begin{equation}
            \label{EQ:LEM:h:def}
                h(\,\cdot\,, z)
                =\sup_{q\in \partial_z g (\,\cdot\,,0)}q\cdot z,
                \qquad \forall\, z\in\R^m.
            \end{equation}
        \item
        \label{IT:LEM:h:selector}
            For every $z\in\R^m$,
            \[
                \bar\mu(\,\cdot\,,z)\in\argmax_{q\in \partial_z g (\,\cdot\,,0)}q\cdot z,
            \]
            namely $\bar\mu(\,\cdot\,,z)\in \partial_z g(\,\cdot\,,0)$ and $h(\,\cdot\,,z)=\bar\mu(\,\cdot\,,z)\cdot z$.
        \item   
        \label{IT:LEM:h:lip}
            $z\mapsto h(\,\cdot\,,z)$ is sublinear and Lipschitz continuous, with Lipschitz constant $L$ if $g$ satisfies~\ref{IT:g_lip} and $2k$ if $g$ satisfies~\ref{IT:g_quad}.
    \end{enumerate}
\end{lemma}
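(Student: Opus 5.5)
The plan is to build $h$ and $\bar\mu$ directly from the predictable objects already provided by Lemma~\ref{LEM:g*}, namely $g^\ast$ and the projector $\Pi$. For the support function, I will not take a supremum over an uncountable random set. Instead I will use a countable dense family. Let $D\subset\R^m$ be countable and dense, and set $q_n:=\Pi(\,\cdot\,,d_n)$ for an enumeration $(d_n)$ of $D$. Each $q_n$ is $\mathcal P$-measurable and lies in $\partial_zg(\,\cdot\,,0)$. Since $\Pi(\,\cdot\,,\cdot)$ is the metric projection onto a closed convex set, it is 1-Lipschitz and onto that set (every $q$ in the set satisfies $\Pi(q)=q$). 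Hence $\{q_n\}$ is dense in $\partial_zg(\,\cdot\,,0)$, $\P\otimes\ell_1$-a.e. I then define
\[
  h(\,\cdot\,,z):=\sup_{n\in\N} q_n\cdot z .
\]
This is a countable supremum of $\mathcal P\otimes\mathscr B(\R^m)$-measurable functions, so it is measurable. By density it equals $\sup_{q\in\partial_zg(\cdot,0)}q\cdot z$, which proves \itemref{IT:LEM:h:def}. Finiteness comes from the uniform bounds $|q|\le L$ or $|q|\le 2k$ in Lemma~\ref{LEM:g*}\itemref{IT:LEM:g*:def}\itemref{IT:LEM:g*:subgrad_0}. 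On the exceptional null set, $h$ is redefined to be $0$.

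Item \itemref{IT:LEM:h:lip} is immediate from this construction. A supremum of linear maps is sublinear. Moreover, $|h(z)-h(z')|\le\sup_n|q_n\cdot(z-z')|\le C|z-z'|$, where $C$ is the bound on the set, that is, $L$ or $2k$.

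For the selector in \itemref{IT:LEM:h:selector}, the argmax set
\[
  F(\om,t,z):=\{q\in\partial_zg(\om,t,0):\ q\cdot z=h(\om,t,z)\}
\]
is nonempty and compact, since it is a maximization of a continuous function over a compact set. I would obtain a measurable selection by one of two routes.

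The first route uses Kuratowski--Ryll-Nardzewski. The correspondence $\partial_zg(\cdot,0)=\{q: g^\ast(\cdot,q)=0\}$ has graph in $\mathcal P\otimes\mathscr B(\R^m)\otimes\mathscr B(\R^m)$. Then $F$ is the zero set of the Carathéodory-type function $(\om,t,z,q)\mapsto h(\om,t,z)-q\cdot z\ge 0$ restricted to that graph, so $F$ is a closed-valued measurable correspondence. It remains to check weak measurability of $F$ with respect to $\mathcal P\otimes\mathscr B(\R^m)$, for example via the distance functions.

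The second route is more constructive, and I would try it first. I would select the lexicographic maximizer explicitly. Define $F_0:=F$, and for $j=1,\dots,m$ let $F_j$ be the subset of $F_{j-1}$ maximizing the $j$-th coordinate. Each step maximizes a linear functional over a compact convex set. The optimal values can be written as countable suprema using the dense projected points $\Pi(\,\cdot\,,\cdot)$ onto $F_{j-1}$; equivalently one uses penalized approximations like $\Pi$ applied to $d_n+\lambda z$ as $\lambda\to\infty$. The final set $F_m$ is a singleton, and its element is a pointwise limit of measurable maps.

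The main obstacle is exactly this measurable-selection step. Joint measurability in $(\om,t,z)$ of the argmax correspondence is not automatic, because the constraint set is random and the objective value is only known through $h$. I expect to resolve it most cleanly with the measurable-selection theorem for normal integrands in \cite[Theorem~14.37]{Rockafellar+Wets_1998_Variational_analysis}. It applies to the normal integrand $q\mapsto -q\cdot z+\iota_{\{g^\ast(\cdot,q)=0\}}$, which is jointly measurable in the parameters $(\om,t,z)$ since $g^\ast$ is a normal integrand by Lemma~\ref{LEM:g*}. Its argmin mapping is then measurable and admits a measurable selection $\bar\mu$, which can be redefined on null sets as $\Pi(\,\cdot\,,0)$.
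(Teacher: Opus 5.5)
Your proof is correct, and for $h$ it takes a different, more elementary route than the paper. The paper first shows that $(\omega,r)\mapsto\partial_z g(\omega,r,0)$ is a measurable multifunction by applying \cite[Theorem~14.56]{Rockafellar+Wets_1998_Variational_analysis} to a Carath\'eodory version of $g$. It then obtains $h$ as the support function of this multifunction, which is a normal integrand by Example~14.51 there and hence jointly measurable by Corollary~14.34. You instead reuse the projector $\Pi$ of Lemma~\ref{LEM:g*}: $q_n=\Pi(\,\cdot\,,d_n)$ is a Castaing representation of $\partial_zg(\,\cdot\,,0)$. So $h=\sup_n q_n\cdot z$ is measurable as a countable supremum, and sublinearity and the Lipschitz constant can be read off directly. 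For $\bar\mu$, your final route is the paper's own: take the argmin of $-q\cdot z$ plus the indicator of $\partial_zg(\,\cdot\,,0)$, using Theorem~14.37 and Corollary~14.6.

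One justification needs repair. Lemma~\ref{LEM:g*} does not make $g^\ast$ a normal integrand. It only gives joint measurability and lower semicontinuity, and the converse direction of Corollary~14.34 needs a complete $\sigma$-algebra, which $\mathcal P$ is not. The same issue is why graph measurability of $F$ in your first route is not enough.

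Your own construction supplies the fix. Off the exceptional predictable null set, $\{\partial_zg(\,\cdot\,,0)\cap O\neq\varnothing\}=\bigcup_n\{q_n\in O\}$ for every open $O$. So the constraint multifunction is measurable, and the integrand is normal by Example~14.32, exactly as in the paper.

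Alternatively, your penalization idea already finishes the proof, with no lexicographic refinement and no selection theorem. Note that $\Pi(\,\cdot\,,\lambda z)$ maximizes $q\cdot z-|q|^2/(2\lambda)$ over $\partial_zg(\,\cdot\,,0)$. Comparing it with the minimal-norm maximizer $q^\dagger$ gives $|\Pi(\,\cdot\,,\lambda z)|\le|q^\dagger|$ and $h(\,\cdot\,,z)-\Pi(\,\cdot\,,\lambda z)\cdot z\le|q^\dagger|^2/(2\lambda)$. Every cluster point is then a maximizer of norm at most $|q^\dagger|$, i.e.\ equals $q^\dagger$. Hence $\bar\mu:=\lim_{n\to\infty}\Pi(\,\cdot\,,nz)$ exists and is $\mathcal P\otimes\mathscr B(\R^m)$-measurable as a pointwise limit.
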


\begin{theorem}
\label{TH:attainment}
    Assume the hypotheses of Theorem~\ref{TH:resilience_convex}, namely either~\itemref{IT:TH:convex:lip} or~\itemref{IT:TH:convex:quad}.
    Let $t\in\mathcal T$ and $s\in[0,t]$ be such that~\ref{IT:TH:resilience_convex:GA} holds.
    Then the essential supremum in~\eqref{EQ:TH:resilience_convex:thesis} is attained: there exists $\mu^\ast\in\mathcal A^0$ such that
    \begin{equation}
    \label{EQ:TH:attainment:thesis}
        \mathcal D^\r_s\pi_t
        =\E_{\mu^\ast}\!\left[\left.b_t+g(t,\s_t)\right|\F_s\right],
        \qquad \P\text{-a.s.}
    \end{equation}
\end{theorem}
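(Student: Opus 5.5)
The plan is to identify the essential supremum in~\eqref{EQ:TH:resilience_convex:thesis} with the value of a Lipschitz BSDE on $[s,t]$ and read off the optimal control from its $Z$-component. This is a standard verification argument. The case $s=t$ is trivial: $Y_t:=b_t+g(t,\s_t)$ is $\F_t$-measurable, so any $\mu\in\mathcal A^0$ (e.g.\ $\Pi(\,\cdot\,,0)$) attains. So I assume $s<t$. By Theorem~\ref{TH:resilience_convex} and Remark~\ref{REM:As_or_Ast}, $\mathcal D^\r_s\pi_t=\esssup_{\theta\in\mathcal A^{\mathrm{loc}}_{s,t}}\E_\theta[Y_t|\F_s]$. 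Moreover $Y_t\in L^2$ in case~\itemref{IT:TH:convex:lip} and $Y_t\in L^\infty$ in case~\itemref{IT:TH:convex:quad}, for $t\in\mathcal T$.

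\textbf{Step 1 (auxiliary BSDE).} Take $h$ and $\bar\mu$ from Lemma~\ref{LEM:h}. By Lemma~\ref{LEM:h}\itemref{IT:LEM:h:lip}, $h$ is a predictable driver, Lipschitz in $z$ with constant $L$ or $2k$, with $h(\,\cdot\,,0)=0$. Hence condition~\ref{IT:L_condition} with $p=2$ applies on the horizon $t$. This gives a unique $(U,Z)\in\mathcal S^2_t\times\mathcal H^2_t$ with
\[
    U_r=Y_t+\int_r^t h(u,Z_u)\d u-\int_r^t Z_u\cdot\d W_u,\qquad r\in[s,t].
\]
I then define the candidate control
\[
    \mu^\ast:=\Pi(\,\cdot\,,0)\1_{[0,s]\cup[t,T]}+\bar\mu(\,\cdot\,,Z)\1_{(s,t)}.
\]
This process is predictable, because $\bar\mu$ is $\mathcal P\otimes\mathscr B(\R^m)$-measurable and $Z$ is predictable. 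It takes values in $\partial_zg(\,\cdot\,,0)$ by Lemma~\ref{LEM:h}\itemref{IT:LEM:h:selector} and Lemma~\ref{LEM:g*}\itemref{IT:LEM:g*:projector}. It is bounded by Lemma~\ref{LEM:g*}\itemref{IT:LEM:g*:def}\itemref{IT:LEM:g*:subgrad_0}. Hence $\mu^\ast\in\mathcal A^0$.

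\textbf{Step 2 (verification).} Fix any $\mu\in\mathcal A^0$; it is bounded by $L$ or $2k$. Under $\Q^\mu$ I rewrite
\[
    U_s=Y_t+\int_s^t\big(h(u,Z_u)-\mu_u\cdot Z_u\big)\d u-\int_s^t Z_u\cdot\d W^\mu_u .
\]
Since $\mu_u\in\partial_zg(u,0)$, the support-function property~\eqref{EQ:LEM:h:def} gives $h(u,Z_u)-\mu_u\cdot Z_u\ge 0$. For $\mu=\mu^\ast$ this integrand vanishes identically on $(s,t)$, by Lemma~\ref{LEM:h}\itemref{IT:LEM:h:selector}.

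Taking $\E_\mu[\,\cdot\,|\F_s]$ then yields $U_s\ge\E_\mu[Y_t|\F_s]$, with equality for $\mu^\ast$. This holds provided the stochastic integral $\int Z\cdot\d W^\mu$ is a true $\Q^\mu$-martingale on $[s,t]$. Taking the essential supremum over $\mu\in\mathcal A^0$ then gives
\[
    U_s=\esssup_{\mu\in\mathcal A^0}\E_\mu[Y_t|\F_s]=\E_{\mu^\ast}[Y_t|\F_s].
\]
Combined with Theorem~\ref{TH:resilience_convex}, this is~\eqref{EQ:TH:attainment:thesis}.

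\textbf{Main obstacle.} The delicate point is the martingale property under $\Q^\mu$ when $Z$ is only in $\mathcal H^2_t$. I would handle it with the BDG inequality under $\Q^\mu$. Since $W^\mu$ is a $\Q^\mu$-Brownian motion, BDG gives
\[
    \E_\mu\Big[\sup_{r\in[s,t]}\Big|\int_s^r Z\cdot\d W^\mu\Big|\Big]\le C\,\E\Big[\mathcal E^\mu_T\Big(\int_s^t|Z|^2\d u\Big)^{1/2}\Big].
\]
The right-hand side is at most $C\,\|\mathcal E^\mu_T\|_{L^2}\|Z\|_{\mathcal H^2_t}<\infty$ by Cauchy--Schwarz. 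Here $\|\mathcal E^\mu_T\|_{L^2}<\infty$ because $\mu$ is uniformly bounded, by Lemma~\ref{LEM:cond_expec_convex}\itemref{IT:LEM:cond_expec_convex:stoch_exponen}. A local martingale with integrable running supremum is a true martingale, which closes the argument.

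A similar estimate, using Lemma~\ref{LEM:cond_expec_convex}\itemref{IT:LEM:cond_expec_convex:embedding}, shows that $Y_t$ and $\int_s^t(h-\mu\cdot Z)\d u$ are $\Q^\mu$-integrable, so all conditional expectations above are finite. In the quadratic case one could alternatively use that $Y_t\in L^\infty$ implies $Z\in\BMO_t$, but the $\mathcal H^2$ argument covers both cases uniformly.
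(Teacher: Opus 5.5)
Your proposal is correct and follows essentially the same route as the paper. The paper also solves the auxiliary BSDE with parameters $(h,t,Y_t)$, where $h$ is the support function of $\partial_z g(\,\cdot\,,0)$. It sets $\mu^\ast:=\bar\mu(\,\cdot\,,\Xi)$ on $(s,t)$ and $\bar\mu(\,\cdot\,,0)$ elsewhere; your choice $\Pi(\,\cdot\,,0)$ off $(s,t)$ works equally well. It then runs the same Girsanov verification to get $V_s\geq\E_\mu[Y_t|\F_s]$, with equality at $\mu^\ast$. Your BDG argument with $\mathcal E^\mu_T\in L^2$ for the true martingale property of $\int Z\cdot\d W^\mu$ under $\Q^\mu$, and your separate treatment of $s=t$, make explicit details that the paper leaves implicit.
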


\begin{proof}
    Set $Y_t:=b_t+g(t,\s_t)$.
    Under~\itemref{IT:TH:convex:lip}, we have $Y_t\in L^2(\F_t)$ for $t\in\mathcal T$;
    under~\itemref{IT:TH:convex:quad}, we have $Y_t\in L^\infty(\F_t)$.
    
    Let us consider the BSDE with parameters $(h,t,Y_t)$, where $h$ is the function introduced in Lemma~\ref{LEM:h}:
    \begin{equation}
    \label{EQ:BSDE_h}
        V_r
        = Y_t+\int_r^t h(u,\Xi_u)\,\d u-\int_r^t \Xi_u\cdot\d W_u,
        \qquad r\in[0,t].
    \end{equation}
    Thanks to Lemma~\ref{LEM:h}\itemref{IT:LEM:h:lip}, this BSDE is well-posed and admits a unique solution $(V,\Xi)$ in the class $\mathcal S^2_t\times\mathcal H^2_t$ under~\itemref{IT:TH:convex:lip} and in $\mathcal S^\infty_t\times\BMO_t$ under~\itemref{IT:TH:convex:quad}.
    
    Let $\bar\mu$ be the measurable selector given by Lemma~\ref{LEM:h}\itemref{IT:LEM:h:selector}, and define
    \begin{equation}
    \label{EQ:TH:attainment:def_mu_ast}
        \mu^\ast
        :=
        \bar\mu(\,\cdot\,,\Xi)\1_{(s,t)}
        +
        \bar\mu(\,\cdot\,,0)\1_{[0,s]\cup[t,T]}.
    \end{equation}
    Since $\Xi$ is predictable and $\bar\mu$ is
    $\mathcal P\otimes\mathscr B(\R^m)$-measurable, $\mu^\ast$ is predictable.
    Moreover, Lemma~\ref{LEM:h}\itemref{IT:LEM:h:selector} yields
    $\mu^\ast\in\partial_zg(\,\cdot\,,0)$,
    $\P\otimes\ell_1$-a.e., and therefore $\mu^\ast\in\mathcal A^0$.
    
    Fix an arbitrary $\mu\in\mathcal A^0$, and define $W^\mu_r:=W_r-\int_0^r \mu\d\ell_1$ for $r\in[0,T]$.
    Then $W^\mu$ is a Brownian motion under $\Q^\mu$.
    Evaluating~\eqref{EQ:BSDE_h} at $r=s$ and rewriting the stochastic integral in terms of $W^\mu$, we obtain
    \[
        V_s
        =
        Y_t
        +\int_s^t
            \big(h(u,\Xi_u)-\mu_u\cdot\Xi_u\big)\,\d u
        -\int_s^t\Xi_u\cdot\d W^\mu_u.
    \]
    Taking the $\Q^\mu$-conditional expectation with respect to $\F_s$ gives
    \begin{equation}
    \label{EQ:PROP:attainment:upperbound}
        V_s
        =
        \E_\mu\!\left[
            Y_t+
            \int_s^t
                \big(h(u,\Xi_u)-\mu_u\cdot\Xi_u\big)\,\d u
            \,\bigg|\,\F_s
        \right].
    \end{equation}
    Since $\mu\in\partial_zg(\,\cdot\,,0)$,
    $\P\otimes\ell_1$-a.e. on $\Om\times(s,t)$,
    Lemma~\ref{LEM:h}\itemref{IT:LEM:h:def} implies
    \[
        h(\,\cdot\,,\Xi)-\mu\cdot\Xi\geq0
        \qquad
        \P\otimes\ell_1\text{-a.e. on }\Om\times(s,t).
    \]
    Hence $V_s\geq\E_\mu[Y_t|\F_s]$.
    Taking the essential supremum over $\mu\in\mathcal A^0$ and using
   ~\eqref{EQ:TH:resilience_convex:thesis}, we obtain
    \begin{equation}
    \label{EQ:PROP:attainment:Vs_geq_D}
        V_s
        \geq
        \esssup_{\mu\in\mathcal A^0}
            \E_\mu[Y_t|\F_s]
        =
        \mathcal D^\r_s\pi_t.
    \end{equation}
    
    For $\mu=\mu^\ast$, Lemma~\ref{LEM:h}\itemref{IT:LEM:h:selector} and the definition~\eqref{EQ:TH:attainment:def_mu_ast} give
    \[
        h(\,\cdot\,,\Xi)-\mu^\ast\cdot\Xi=0
        \qquad
        \P\otimes\ell_1\text{-a.e. on }\Om\times(s,t).
    \]
    Therefore, $V_s=\E_{\mu^\ast}[Y_t|\F_s]$.
    Since $\mu^\ast\in\mathcal A^0$, it follows that
    \[
        V_s
        =
        \E_{\mu^\ast}[Y_t|\F_s]
        \leq
        \esssup_{\mu\in\mathcal A^0}
            \E_\mu[Y_t|\F_s]
        =
        \mathcal D^\r_s\pi_t
        \leq V_s.
    \]
    This proves~\eqref{EQ:TH:attainment:thesis}.
\end{proof}

\section{Examples of Resilience Evaluations}
\label{SEC:examples}
In this section we consider several examples for the abstract formula obtained in Theorem~\ref{TH:resilience_convex}, where the outer dynamic risk measure $\r$ is induced by a driver $g$ and the inner process $\pi$ follows Itô dynamics.
We first discuss the case in which $\rho$ is kept general, while $\pi$ is generated by a BSDE.
We then specialize the driver $g$, which leads to several relevant examples for the outer coherent and convex risk measure $\rho$.
Finally, we explain the importance of having a BSDE-induced risk measure by showing that Value at Risk and Expected Shortfall fail to generate well-defined resilience evaluations.

\subsection{BSDE-induced inner processes}

We consider a general driver $g$ satisfying~\itemref{IT:g_lip} and the associated dynamic risk measure $\rho$ as in the setting of Theorem~\ref{TH:resilience_convex}.
We now specify the dynamics for the inner process $\pi$, assuming that it is generated by a BSDE.
Let $X\in L^2(\F_T)$ and let $(\pi,Z^\pi)$ be the solution of the BSDE
\[
    \pi_u
    =
    X+\int_u^T g^\pi(r,\pi_r,Z^\pi_r)\d r
    -\int_u^T Z^\pi_r\cdot \d W_r,
    \qquad u\in[0,T],
\]
for some driver $g^\pi$ satisfying~\itemref{IT:L_condition} for $p=2$. 
Then $b\in L^2_T$ 
with
\[
    b=-g^\pi(\,\cdot\,,\pi,Z^\pi),
    \qquad
    \s=Z^\pi.
\]

Then Theorem~\ref{TH:resilience_convex} states that for $\ell_1$-a.e. $t\in[0,T)$ and all $s\in[0,t]$, we have 
\[
    \mathcal D^\rho_s\pi_t
    =
    \essmax_{\mu\in\mathcal A^0}
    \E_\mu\big[
        -g^\pi(t,\pi_t,Z^\pi_t)
        + g(t,Z^\pi_t)
        \big|\F_s
    \big],
\]
where the equality holds in $L^2$ and $\mathcal A^0$ was defined in~\eqref{EQ:A^0}.
In particular, for $s=t$,
\[
    \mathcal D^\rho_t\pi_t
    =
    -g^\pi(t,\pi_t,Z^\pi_t)
    + g(t,Z^\pi_t),
    \qquad  \text{ in }L^2(\F_t), \ \ell_1\text{-a.e. }t\in[0,T).
\]

This formula shows explicitly the interaction between the two BSDE drivers.
The driver $g^\pi$ determines the local drift of the inner risk process $\pi$, whereas the outer driver $g$ controls the exposure $Z^\pi_t$ to the Brownian motion through the external risk measure $\rho$.

The same formulas remain valid when the outer driver $g$ satisfies the quadratic assumption~\itemref{IT:g_quad}, provided that the BSDE-induced process $\pi$ has bounded It\^o coefficients. This boundedness is not a consequence of the standard quadratic BSDE estimates alone, since these usually yield $Z^\pi \in\BMO_T$ rather than $Z^\pi\in L^\infty_T$.
Nonetheless, sufficient conditions are given by the Markovian Brownian FBSDE setting of \cite[Appendix~B]{Laeven+Ferrari+Rosazza+Zullino_2025_measuringfinancialresilienceusing}.

\begin{exampletext}[Black--Scholes replicating portfolio]
\label{EX:BeS}
Assume for simplicity that $m=1$ and consider a Black--Scholes market with numéraire asset $B_t=\exp\big(\int_0^t r_u\d u\big)$ and risky asset
\[
    S_t = S_0 + \int_0^t \alpha_uS_u\d u+\int_0^t\varsigma_uS_u\d W_u, \qquad t\geq 0,
\]
where $S_0>0$ and $r,\alpha,\varsigma$ are bounded predictable processes, with $\varsigma$ bounded away from zero. 
Set $\vartheta:=(r-\alpha)/\varsigma$.
Then $\vartheta$ is bounded and $W^\vartheta:=W-\int_0^{\cdot}\vartheta_u\d u$ is a $\Q^\vartheta$-Brownian motion by the Girsanov theorem.
Let $X\in L^2(\F_T)$ be the payoff of a European option with exercise time $T>0$. 
Since the market is complete, the replicating price process is
\[
    \pi_t
    =
    \E_\vartheta\big[B_tB_T^{-1}X\big|\F_t\big],
    \qquad t\in[0,T].
\]
By the martingale representation theorem, there exists a predictable process $Z^\pi$ such that $(\pi,Z^\pi)$ is the solution of the BSDE with parameters $(g^\pi,T,X)$, where the driver is
\[
    g^\pi(\om,u,y,z)
    =
    -r_u(\om) y+\vartheta_u(\om) z, \qquad (\om,u,y,z)\in\Om\times[0,T]\times\R\times\R,
\]
and satisfies~\itemref{IT:L_condition} for $p=2$.
The unique self-financing replicating strategy is given by $\Delta:=Z^\pi / (\varsigma S)$, $\eta:=(\pi-\Delta S) / B$, so that $\pi=\eta B+\Delta S$.

Therefore, if the outer dynamic risk measure $\rho$ is induced by a driver $g$ satisfying~\itemref{IT:g_lip}, Theorem~\ref{TH:resilience_convex} yields, for $\ell_1$-a.e. $t\in[0,T)$ and all $s\in[0,t]$,
\begin{equation}
\label{EQ:EX:BeS}
\begin{aligned}
    \mathcal D^\rho_s\pi_t
    &=
    \essmax_{\mu\in\mathcal A^0}
    \E_\mu\big[
        r_t\pi_t-\vartheta_t Z^\pi_t
        +g(t,Z^\pi_t)
        \big|\F_s
    \big]\\
    &=
    \essmax_{\mu\in\mathcal A^0}
    \E_\mu\big[
        r_t\eta_tB_t + \a_t \Delta_t S_t + g(t,\varsigma_t\Delta_t S_t)
        \big|\F_s
    \big].
\end{aligned}
\end{equation}
It is clear from the last formula that the resilience evaluation well captures the risk-free rate $r$ for the non-risky fraction $\eta B$ of the portfolio, the local rate of evolution $\a$ for the risky portion $\Delta S$ of the portfolio,  and an additional term that combines the volatility $\varsigma$ of the risky portion, with the investor's preferences encoded in the driver $g$.

For computational purposes, assume $r,\a,\varsigma$ constant in time and deterministic, so that $B_t=e^{rt}$ and $S_t=S_0\exp[(\a-\varsigma^2/2)t+\varsigma W_t]$.
Also, let $X=(K-S_T)^+$ be the payoff of a European put option with strike price $K>0$.
Then 
\[
    \pi_t=
    Ke^{-r(T-t)}\mathcal N(d^-_t)-S_t\mathcal N(d^+_t), \qquad \Delta_t=-\mathcal N(d^+_t), \qquad \eta_t=Ke^{-rT}\mathcal N(d^-_t),
\]
where $\mathcal N$ is the standard normal cumulative distribution function and $d_t^{\pm}:=\big[\log(K/S_t) - \left(r\pm\varsigma^2/2\right)(T-t)\big]/\big[\varsigma\sqrt{T-t}\big]$.
Suppose that $\rho=\mathfrak e^\gamma$ is the entropic risk measure with risk aversion coefficient $\gamma>0$, so that $g(z)=\gamma z^2/2$ (see Section~\ref{SEC:entropic_RM}).
It follows from the non-negativity of the put price $\pi$ that $b=-g^\pi(\,\cdot\,,\pi,Z^\pi)$ and $\sigma=Z^\pi$ are essentially bounded processes, which motivates the application of Theorem~\ref{TH:resilience_convex} under~\itemref{IT:TH:convex:quad} for the driver $g(z)=\gamma z^2/2$.

We then can compute explicitly the conditional expectation appearing in~\eqref{EQ:EX:BeS}, obtaining
\begin{align} 
    \mathcal D_s^{\mathfrak e^\gamma} \pi_t = 
    -\alpha S_s e^{\alpha(t-s)}\mathcal N(D_{s,t}^+) + rKe^{-r(T-t)}\mathcal N(D_{s,t}^-) + \frac{\gamma}{2}\varsigma^2 S_s^2 e^{(2\alpha+\varsigma^2)(t-s)} \mathcal N^{\,2}_{\varrho} \bigl(D^{++}_{s,t}, D^{++}_{s,t}\bigr),
\end{align}
where
\[ 
    D_{s,t}^{\pm} 
    := d_s^\pm + \frac{t-s} {\sqrt{T-s}}\vartheta, 
    \qquad 
     D^{++}_{s,t} 
     := d_s^+ + \frac{t-s} {\sqrt{T-s}}(\vartheta-\varsigma),
     \qquad 
     \varrho 
     := \frac{t-s}{T-s},
\]
and $\mathcal N^2_\varrho$ is the bivariate standard normal cumulative distribution function with correlation $\varrho$.
In case $s=t$, the formula specializes to
\begin{align}
    \mathcal D_t^{\mathfrak e^\gamma}\pi_t
    &=\a (\Delta_tS_t)+r(\eta_tB_t)+ \frac\g2\varsigma^2(\Delta_tS_t)^2\\
    &=-\a S_t\mathcal N(-d^+_t)+rKe^{-r(T-t)}\mathcal N(-d^-_t) + \frac\g2\varsigma^2S^2_t\big[\mathcal N(-d^+_t)\big]^2.
\end{align}

\end{exampletext}

\subsection{Coherent outer risk measures}
We here provide some examples of sublinear drivers $g$ satisfying the assumptions of Theorem~\ref{TH:resilience_convex}.
First, we notice that a sublinear driver is convex and positively homogeneous, thus it induces a coherent risk measure $\rho$.
In addition, a uniformly Lipschitz sublinear driver satisfies the assumption~\itemref{IT:g_lip}, thus, in the following, we fix $b\in L^2_T$ so that the increments of $\pi$ are in $L^2(\F_T)$.
\begin{exampletext}[Linear driver]
    The easiest case is obtained by  a linear driver,
    \[
        g(\om,r,z)=\beta_r(\om)\cdot z,\qquad (\om,r,z)\in\Om\times[0,T]\times\R^m,
    \]
    where $\beta:\Om\times[0,T]\to\R^m$ is a predictable and $\P\otimes\ell_1$-essentially bounded process, and the induced risk measure is a straightforward generalization of Example~\ref{EX:resilience_rate}, i.e.,~the conditional $\Q^\beta$-expectation 
    \[
        \rho_r(X)=\E_\beta[X|\F_r], \qquad X\in L^2(\F_T), \ r\in[0,T].
    \]
    For $\P\otimes\ell_1$-a.e. $(\om,r)\in\Om\times[0,T]$, $g^\ast(\om,r,\beta_r(\om))=0$, while $g^\ast(\om,r,q)=+\infty$ for all $q\neq \beta_r(\om)$.
    Therefore, $\partial_zg(\om,r,0)=\{\beta_r(\om)\}$ and the condition~\itemref{IT:TH:resilience_convex:GA} is trivially satisfied for any $0\leq s \leq t <T$, with $\psi(x)=x$ for $x\in[0,+\infty]$.
    Then $\A^0=\{\beta\}$, hence Theorem~\ref{TH:resilience_convex} states that, for $\ell_1$-a.e. $t\in[0,T)$ and all $s\in[0,t]$, 
    \[
        \mathcal D^\r_s\pi_t 
        = \E_{\beta}[b_t+\beta_t\cdot \s_t|\F_s].
    \]
    It is worth mentioning that, in this very special situation, the delicate construction of Theorem~\ref{TH:resilience_convex} is not needed.
    Indeed, the same result can be proved by using Girsanov's theorem to rewrite the It\^o dynamics for $\pi$ in terms of a $\Q^\beta$-Brownian motion, which then simplifies in the computation of the $\Q^\beta$-conditional expectation that defines $\r$.
    Lemma~\ref{LEM:integral_average} then gives the convergence as $\eps\to 0^+$.
\end{exampletext}
\begin{exampletext}[Sublinear driver on $\R$]
    Another trivial example is the general sublinear driver $g:\R\to\R$, which has the analytic expression
    \[
        g(z)
        =az^++bz^-
        =
        \begin{cases}
            az\qquad &\text{ if } z\geq 0,\\
            -bz &\text{ if } z< 0,
        \end{cases}
    \]
    for some $a,b\geq 0$, and generates the coherent risk measure
    \begin{gather}
        \rho_r(X)
        =
        \essmax_{\mu\in\A^{[-b,a]}}\E_\mu[X|\F_r],\qquad r\in[0,T], \ X\in L^2(\F_T),\\
        \A^{[-b,a]}:=\big\{
            \mu:\Om\times[0,T]\to\R
            \text{ predictable},
            \ -b\leq \mu\leq a
        \big\},
    \end{gather}
    In this case $\partial g(0)=[-b,a]$, hence  $\A^0= \mathcal A^{[-b,a]}$.
    Set for every $t\in[0,T)$, 
    \[ 
        \A_t^{[-b,a]} := \big\{ \mu\in\mathcal A^{[-b,a]} \ : \  -b\leq\mu_t\leq a \quad\P\text{-a.s.} \big\}. 
    \]
    The condition~\itemref{IT:TH:resilience_convex:GA} is satisfied thanks to Remark~\ref{REM:growth_HP}$(v)$, for any $0\leq s \leq t < T$.
    Therefore, for $\ell_1$-a.e. $t\in[0,T)$ and all $s\in[0,t]$
    \[
        \mathcal D^\r_s\pi_t 
        = \essmax_{\mu\in\A^{[-b,a]}}\E_{\mu}[b_t+a\s_t^++b\s_t^-|\F_s]
        = \essmax_{\mu\in\A_t^{[-b,a]}} \E_\mu\left[ b_t+\s_t\mu_t \big|\F_s \right],
    \]
    where the last equality follows from Corollary~\ref{COR:from_convex_to_coherent}.
\end{exampletext}
\begin{exampletext}[General sublinear uniformly Lipschitz driver]
\label{EX:sublinear}
    Consider the most general sublinear driver satisfying assumption~\itemref{IT:g_lip}
    \[
        g(\om,r,z)
        =
        \sup_{q\in C(\om,r)} q\cdot z,\qquad (\om,r,z)\in\Om\times[0,T]\times\R^m,
    \]
    where $C(\om,r)$ is a closed, non-empty, convex, and uniformly bounded subset of $\R^m$ such that $\{(\om,r,q)\ : \ q\in C(\om,r)\}$ is $\mathcal P\otimes\mathscr B(\R^m)$-measurable.
    Then $\partial_z g(\om,r,0)=C(\om,r)$ by the representation of sublinear functions as support functions (see, e.g., \cite[Corollary~13.1.1]{Rockafellar_1997_Convex_analysis}).
    In this case the convex conjugate is $0$ on $C(\om,r)$ and $+\infty$ outside it, and therefore the associated dynamic risk measure is coherent and admits the dual representation
    \begin{gather}
        \rho_r(X)
        =
        \essmax_{\mu\in\A^C}
            \E_\mu[X|\F_r],\qquad r\in[0,T], \ X\in L^2(\F_T),\\
        \A^C:=\Bigl\{
            \mu:\Om\times[0,T]\to\R^m
            \text{ predictable},
            \ \mu_u(\om)\in C(\om,u), \P\otimes\ell_1\text{-a.e. } (\om,u)
        \Bigr\},
    \label{EQ:def_AC}
    \end{gather}
    Moreover, for $0\leq s\leq t<T$, the condition
   ~\itemref{IT:TH:resilience_convex:GA} is satisfied with, for instance,
    $\psi(x)=x$ for $x\in[0,+\infty]$, and $\mathcal A^0 = \mathcal A^C$.
    For every $t\in[0,T)$, also set $ \A_t^C
        :=
        \big\{
            \mu\in\A^C:
            \ \mu_t(\om)\in C(\om,t)
            \quad\P\text{-a.e. }\om
        \big\}$.
    Then Theorem~\ref{TH:resilience_convex} yields
    \[
        \mathcal D^\r_s\pi_t
        =
        \essmax_{\mu\in\A^C}
        \E_\mu\bigg[
            b_t+\sup_{q\in C(t)}q\cdot\sigma_t
            \,\bigg|\,\F_s
        \bigg]
        =
        \essmax_{\mu\in\A_t^C}
        \E_\mu\left[
            b_t+\sigma_t\cdot\mu_t
            \big|\F_s
        \right],
    \]
    for $\ell_1$-a.e. $t\in[0,T)$ and every $s\in[0,t]$, where the last equality follows from
    Corollary~\ref{COR:from_convex_to_coherent}.
\end{exampletext}

\subsection{Convex outer risk measures}
We now give some examples of genuinely convex, non-coherent risk measures, induced by drivers satisfying~\itemref{IT:g_quad} that are not positively homogeneous.
In what follows, we work with a fixed $\pi$ with $b,|\s|\in L^\infty_T$.
\begin{exampletext}[Entropy]
\label{EX:entropy}
    Consider the entropic driver
    $
        g(z)=\frac{\gamma}{2}|z|^2,
    $
    for some $\gamma>0$ and all $z\in\R^m$.
    Its convex conjugate is $g^\ast(q)=\frac{1}{2\gamma}|q|^2$ for all $q\in\R^m$ and the associated risk measure $\rho=\mathfrak e^\gamma$ is the usual dynamic entropy with risk-aversion coefficient $\g>0$, see Section~\ref{SEC:entropic_RM}. 
    In this case $\partial_z g(0)=\{0\}$ and the condition~\itemref{IT:TH:resilience_convex:GA} is satisfied by Remark~\ref{REM:growth_HP}$(v)$, for all $0\leq s \leq t <T$.
    The admissible set appearing in Theorem~\ref{TH:resilience_convex} reduces to $\A^0=\{0\}$ and consequently 
    \begin{equation}
    \label{EX:ent_pi}
        \mathcal D^{\mathfrak e^\gamma}_s\pi_t
        =
        \E\Big[b_t+\frac\g2|\s_t|^2\Big|\F_s\Big], \qquad \ell_1\text{-a.e. }t\in[0,T), \ \forall\,s\in[0,t].
    \end{equation}
    Thus, under the stronger boundedness assumptions imposed in this subsection, Theorem~\ref{TH:resilience_convex} recovers the formula established in Theorem~\ref{TH:resilience_entropy}.

    Suppose in addition that the inner process $\pi$ is the dynamic entropic risk evaluation with risk aversion coefficient $\delta>0$, possibly different from $\gamma$, of a certain position $X\in L^{\exp}(\F_T)$.
    In this case, we write $\pi=\mathfrak e^\delta(X)$, we denote the solution of the BSDE inducing $\mathfrak e^\delta$ by $(\mathfrak e^\delta(X),Z^\delta)$, and the coefficients $b,\s$ are given by 
    $\sigma=Z^\delta$ and $b=-\delta|Z^\delta|^2/2$, see also Section~\ref{SEC:entropic_RM} and Example~\ref{EX:BeS}.
    We then rewrite the formula in~\eqref{EX:ent_pi} as follows:
    \[
        \mathcal D^{\mathfrak e^\gamma}_s\big(\mathfrak e^\delta_t(X)\big)
        = 
        \frac{\gamma-\delta}2 \E\big[|Z^\delta_t|^2\big|\F_s\big]
        =-\frac{\gamma-\delta}\delta\mathcal D^{\E}_s\big(\mathfrak e^\delta_t(X)\big),
    \]
    where in the last equality we used the formula for the resilience evaluation through $\E[\,\cdot\,|\F_s]$, see Example~\ref{EX:resilience_rate}.
    It shows that the resilience evaluation of $\mathfrak e_t^\delta(X)$
    through the entropic risk measure is a rescaled version of the resilience rate of $\mathfrak e_t^\delta(X)$.
    In particular, when $\delta=\gamma$, then the resilience evaluation vanishes.
\end{exampletext}
\begin{exampletext}[Coherent plus quadratic: robust entropy]
    Consider
    \[
        g(\om,r,z)=h(\om,r,z)+\frac\gamma2|z|^2,\qquad (\om,r,z)\in\Om\times[0,T]\times\R^m,
    \]
    where $\gamma>0$ and $h$ is a general sublinear uniformly Lipschitz driver as in Example~\ref{EX:sublinear}, with support set $C$.
    Then $\partial_z g(\om,r,0)=\partial_z h(\om,r,0)=C(\om,r)$ and the convex conjugate can be computed via inf-convolution (see \cite[Theorem~16.4]{Rockafellar_1997_Convex_analysis}):
    \[
        g^\ast(\om,r,q)
        =
        \inf_{p\in C(\om,r)}
        \frac{1}{2\g}|q-p|^2
        =
        \frac{1}{2\g}
        \operatorname{dist}^2\big(q,C(\om,r)\big), \qquad (\om,r,q)\in\Om\times[0,T]\times\R^m.
    \]
    Hence the growth condition~\itemref{IT:TH:resilience_convex:GA} is satisfied with $\psi(x)=2\g x$ for all $0\leq s \leq t <T$ and the associated dynamic risk measure has the dual representation
    \begin{align}
        \rho_r(X)
        &=
        \essmax_{\mu\in\BMO_T}
            \E_\mu\left[X - \frac{1}{2\g}\int_r^T
                \operatorname{dist}^2(\mu_u,C(u))
                \d u
                \bigg|\F_r
            \right]\\
        &=
        \essmax_{p\in \mathcal A^C}
        \esssup_{\theta \in\BMO_T}
            \left\{\E_{p+\theta}\left[X - \frac{1}{2\g}\int_r^T|\theta_u|^2
                \d u
                \bigg|\F_r
            \right]\right\} \\
        &=
        \essmax_{p\in \mathcal A^C}
            \frac1\g \ln \E_{p}\big[\exp(\gamma X)\big|\F_r\big], \qquad r\in[0,T], \ X\in L^\infty(\F_T),
    \end{align}
    where $\mathcal A^C$ is defined in~\eqref{EQ:def_AC}.
    Here, for fixed $\mu\in\BMO_T$, we rewrote $\operatorname{dist}^2(\mu,C)=\inf\{|\theta|^2 \ :\ \mu-\theta\in C\}$, then used the dual representation of entropic risk measures (see \cite[Proposition~8.5]{Barrieu+ElKaroui_2009_Pricing_hedging_optimally_designing_derivatives_minimization_risk_measures}).
    Therefore Theorem~\ref{TH:resilience_convex}  gives, for $\ell_1$-a.e. $t\in[0,T)$ and all $s\in[0,t]$
    \[
        \mathcal D^\r_s\pi_t
        =
        \essmax_{\mu\in\A^C}
        \E_\mu\bigg[b_t+\sup_{q\in C(t)}
        q\cdot\s_t+\frac\g 2|\s_t|^2\bigg|\F_s\bigg]
        =
        \essmax_{\mu\in\A^C_t}
        \E_\mu\bigg[b_t+\mu_t\cdot\s_t+\frac\g 2|\s_t|^2\bigg|\F_s\bigg],
    \]
    where the last equality can be proved as in Corollary~\ref{COR:from_convex_to_coherent} and $\mathcal A^C_t$ is the set of  $\mu\in\A^C$ such that  $\mu_t(\om)\in C(\om,t)$ for $\P\text{-a.e. }\om\in\Om$.
\end{exampletext}
\begin{exampletext}[Subquadratic]
    Finally, consider
    \[
        g(\om,r,z)=a_r(\om)|z|^p,\qquad (\om,r,z)\in\Om\times[0,T]\times\R^m,
    \]
    where $p\in(1,2]$ and $a:\Omega\times[0,T]\to(0,\infty)$ is predictable and $\P\otimes\ell_1$-essentially bounded. 
    Its convex conjugate is
    \[
        g^\ast(\om,r,q)
        =
        c_r(\om)|q|^{\frac{p}{p-1}}, 
        \qquad 
        c_r(\om):=\frac{p-1}{p}\,\big(a_r(\om)p\big)^{-\frac{1}{p-1}},\qquad (\om,r,q)\in\Om\times[0,T]\times\R^m,
    \] 
    hence $\partial g(\om,r,0)=\{0\}$.
    Since $a\leq \|a\|_{L^\infty_T}$, then $c_r(\om)\geq \frac{p-1}{p}\,(p\|a\|_{L^\infty_T})^{-\frac{1}{p-1}}=:\underline c>0$.
    Hence, setting $p':=p/(p-1)$,
    \[
        |q|^2
        \leq
        \underline c^{-\frac{2}{p'}}
        \big(g^\ast(\om,r,q)\big)^{\frac{2}{p'}},
    \]
    so that the condition~\itemref{IT:TH:resilience_convex:GA} is satisfied for all times with $\psi(x) = \underline c^{-2/p'} x^{2/p'}$, which is concave because $p'\geq 2$.
     The associated dynamic risk measure admits the dual representation
    \[
        \rho_r(X)
        =
        \essmax_{\mu\in\BMO_T}
            \E_\mu\left[X - \int_r^T
                c_u|\mu_u|^{\frac{p}{p-1}}
                \d u
                \bigg|\F_r
            \right], 
        \qquad r\in[0,T], \ X\in L^\infty(\F_T).
    \]
    Therefore Theorem~\ref{TH:resilience_convex} localizes the limiting dual controls on
    $\partial g(0)=\{0\}$, and yields
    \[
        \mathcal D^\r_s\pi_t
        =
        \E\big[b_t+a_t|\s_t|^p\big|\F_s\big], \qquad \ell_1\text{-a.e. }t\in[0,T), \ \forall\,s\in[0,t].
    \]
\end{exampletext}

\subsection{Counterexamples}
We now provide counterexamples showing that the existence of the resilience evaluation may fail outside the BSDE-induced, normalized, and cash-additive framework.
We consider separately two types of failures: first, risk measures such as Value at Risk and Expected Shortfall, that do not admit a BSDE representation; second, BSDE-induced risk measures that are not normalized or cash-additive.
In each case, we show that the resilience evaluation may be ill-posed even for very simple It\^o processes $\pi$.

First, let us set the conventions by recalling the definitions.
\begin{definition}
    Assume that $p>1$, $X\in L^p$, and $\alpha\in(0,1)$.
    The Value at Risk of the position $X$ at level $\alpha$ is defined as:
    \[
        \operatorname{VaR}_\a(X):=\inf\left\{x\in\R \ : \ \P(X\leq x)>\a\right\}.
    \]
    The Expected Shortfall of $X$ of level $\alpha$ is defined as:
    \[
        \operatorname{ES}_\a(X):=\frac1{1-\a}\int_\a^1\operatorname{VaR}_\g(X)\d\gamma.
    \]
\end{definition}
\begin{exampletext}[VaR and ES]
    We show that, for $\a\in(0,1/2)\cup(1/2,1)$, there exists $\pi$ such that for $\ell_1$-a.e. $t\in[0,T)$, as $\eps\to 0^+$: 
    \[
        \frac1\eps \operatorname{VaR}_\a(\pi_{t+\eps}-\pi_t)\longrightarrow\pm\infty,
        \qquad
        \frac1\eps \operatorname{ES}_\a(\pi_{t+\eps}-\pi_t)\longrightarrow+\infty,
    \]
    where the $\pm$ sign depends on the level $\alpha$.
    Indeed, if $\pi=W$, then we have:
    \begin{align*}
        \frac1\eps \operatorname{VaR}_\a(\pi_{t+\eps}-\pi_{t}) 
        = \frac1\eps \operatorname{VaR}_\a(W_{t+\eps}-W_{t}) 
        =\frac1\eps \operatorname{VaR}_\a(W_{\eps}) 
        =\eps^{-1/2}\Phi^{-1}{(\alpha)},
    \end{align*}
    where $\Phi$ is the normal CDF and the last expression diverges to $+\infty$ (resp.\ $-\infty$) for $\a\in(1/2,1)$ (resp.\ $\a\in(0,1/2)$).
    We have a similar behavior for the expected shortfall, which instead yields:
    \begin{align*}
        \frac1\eps \operatorname{ES}_\a(\pi_{t+\eps}-\pi_{t}) 
        = \frac1\eps \operatorname{ES}_\a(W_{\eps}) 
        =\eps^{-1/2}\frac{\varphi\left(\Phi^{-1}{(\alpha)}\right)}{1-\alpha} 
        \longrightarrow +\infty, \text{ as } \eps\to0^+, 
    \end{align*}
    for any $\a\in(0,1)$, where $\varphi$ is the normal PDF.
\end{exampletext}

\begin{exampletext}[Necessity of normalization]
    Let us consider a BSDE-induced dynamic risk measure $\rho$ associated with a driver $g$ satisfying all the assumptions in~\itemref{IT:g_lip}, but with $\P\otimes\ell_1\big(g(\,\cdot\,,0)\neq 0\big)>0$.
    Then $\rho$ is not normalized: there exists $s\in[0,T)$ with $\rho_s(0)\neq 0$.
    As a consequence, the resilience evaluation through this $\rho$ of even the simplest, constant It\^o process $\pi\equiv0$ is ill-posed.
    Indeed, for all $t\in[s,T)$, we have
    \[
        \left\|\frac1\eps \rho_s(\pi_{t+\eps}-\pi_t)\right\|_{L^1}
        =
        \frac1\eps \|\rho_s(0)\|_{L^1}
        \longrightarrow \infty.
    \]
    The obstruction is exactly the lack of normalization: the null increment is assigned a non-zero risk, which is then amplified by the scaling factor $\eps^{-1}$.
\end{exampletext}
\begin{exampletext}[Necessity of cash-additivity]
    Let $a>0$ and consider the Lipschitz driver $g(y)=a|y|$ for $y\in\R$.
    The induced dynamic risk measure 
    \[
        \rho_s(X)
        =
        \esssup_{\mu\in\mathcal A^a}
        \E\Big[
            e^{\int_s^T \mu_r\d r}X
            \Big|\F_s
        \Big], 
        \qquad X\in L^2(\F_T), \ s\in[0,T],
    \]
    where $\mathcal A^a:= \big\{\mu:\Om\times[0,T]\to[-a,a] \text{ predictable}\big\}$, is not cash-additive.
    Take the simple It\^o process $\pi=W$ on $\Om\times[0,T]$, and fix $0\leq s\leq t < T$.
    For $\eps\in(0,T-t]$, the increment $\pi_{t+\eps}-\pi_t=W_{t+\eps}-W_t=:\xi_\eps$ is normally distributed and independent of $\F_s$.
    Choosing the admissible control $\bar\mu:
        =a
        \operatorname{sign}(\xi_\eps)
        \1_{(t+\eps,T]}\in\A^a,$
    gives
    \[
        \frac1\eps\rho_s(\pi_{t+\eps}-\pi_t)
        \geq
        \frac1\eps\E\left[
            e^{a(T-t-\eps)}\xi_\eps^+
            -
            e^{-a(T-t-\eps)}\xi_\eps^-
            \bigg|\F_s
        \right]
        =
        \frac1{\sqrt{2\pi\eps}}
        \left(
            e^{a(T-t-\eps)}
            -
            e^{-a(T-t-\eps)}
        \right),
    \]
    where the right-hand side diverges to $+\infty$ as $\eps\to 0^+$.
\end{exampletext}

\section{Conclusion}
\label{sec:con}
This paper introduced a local notion of financial resilience based on the infinitesimal behavior of dynamic risk measures.
We showed that the resilience evaluation for general It\^o processes via BSDE-induced, cash-additive, normalized, and convex risk measures is well-defined and admits an explicit representation.
The examples and counterexamples clarify both the scope of the construction and the role of the structural assumptions imposed on the risk measure.

Several natural extensions remain open.
A first direction is to pass from the Brownian filtration considered here to a Brownian--Poisson filtration, allowing both the underlying processes and the risk evaluations to exhibit jumps.
A second direction is to replace the deterministic-time formulation by a formulation indexed by stopping times, which would make the resilience evaluation more flexible for applications involving random horizons and event-driven risk assessment.

\appendix

\section{Conditional asymptotic expansions}
\label{SEC:app_taylor}

\begin{lemma}
\label{LEM:Taylor_rv}
    Let $(X_n)_{n\in\N}$ be a sequence of random variables and $f:\R\to\R$ a twice-continuously differentiable function.
    Assume the following.
    \begin{enumerate}[label=(\roman*)]
        \item \label{IT:LEM:Taylor_rv:HP_exp} There exist $c>0$, $C\geq0$ such that $|f(x)|\leq C e^{c|x|}$ for all $x\in\R$.
        \item \label{IT:LEM:Taylor_rv:HP_conv} There exists $p>2$ such that $X_n\longrightarrow0$ in $L^p(\Om)$.
        \item \label{IT:LEM:Taylor_rv:HP_unif_bound} There exists $q>p/(p-2)$ and $K\geq 0$ such that $\E\left[e^{qc|X_n|}\right]\le K$ for all $n\in\N$.  
    \end{enumerate}
    Then, for any sub-$\s$-algebra $\mathcal G \subseteq\F$ and all $n\in\N$, we have $\P$-a.s.
    \[
        \E[f(X_n)|\mathcal G]=f(0)+f'(0)\E[X_n|\mathcal G] + \frac12f''(0)\E[X_n^2|\mathcal G]+R_n,
    \]
    where $R_n\in L^1(\mathcal G)$, for $n\in\N$, and  $\|R_n\|_{L^1}=o\left(\|X_n\|^2_{L^p}\right)$, as $n\to\infty$.
\end{lemma}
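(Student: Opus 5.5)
The plan is to start from the second-order Taylor formula with integral remainder,
\[
    f(x)=f(0)+f'(0)x+\tfrac12 f''(0)x^2+x^2 r(x),
    \qquad
    r(x):=\int_0^1(1-\theta)\big(f''(\theta x)-f''(0)\big)\d\theta .
\]
Since $f''$ is continuous, $r$ is continuous and $r(0)=0$. Substituting $X_n$ and taking $\mathcal G$-conditional expectations gives the stated expansion with $R_n:=\E[X_n^2 r(X_n)|\mathcal G]$. The whole proof then reduces to showing that $X_n^2r(X_n)\in L^1$ and that $\E[|X_n|^2|r(X_n)|]=o(\|X_n\|_{L^p}^2)$. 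The conditional version follows immediately from the unconditional one, because $\|\E[Z|\mathcal G]\|_{L^1}\le\|Z\|_{L^1}$. Integrability of the other terms is clear: $X_n\in L^p$ with $p>2$, and $f(X_n)\in L^1$ by (i) and (iii) (note $q>1$).

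The first task is a growth bound on $r$. Assumption (i) alone controls $f$ but not $f''$, so I cannot read off exponential growth of $f''$ directly. Instead I will avoid $f''$ and express the remainder through $f$ itself: for $|x|\ge1$,
\[
    x^2 r(x)=f(x)-f(0)-f'(0)x-\tfrac12f''(0)x^2 .
\]
This gives $|x^2r(x)|\le C'e^{c|x|}$ for $|x|\ge1$. For $|x|\le1$ we have $|r(x)|\le \omega(|x|)$, where $\omega(\delta):=\sup_{|y|\le\delta}|f''(y)-f''(0)|$ is the modulus of continuity of $f''$ at $0$, and $\omega(\delta)\to0$ as $\delta\to0$.

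Next I will split $\E[X_n^2|r(X_n)|]$ according to whether $|X_n|\le\delta$, $\delta<|X_n|\le1$, or $|X_n|>1$, for a small $\delta\in(0,1)$.

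On $\{|X_n|\le\delta\}$ the term is at most $\omega(\delta)\E[X_n^2]\le\omega(\delta)\|X_n\|_{L^p}^2$. This part can be made an arbitrarily small multiple of $\|X_n\|_{L^p}^2$ by choosing $\delta$ small.

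On $\{|X_n|>\delta\}$ (covering both remaining regions, with a constant $C_\delta$) I will bound the integrand by $C_\delta\, e^{c|X_n|}\,\1_{\{|X_n|>\delta\}}$ and apply Hölder with exponents $q$, $q'$:
\[
    \E\big[e^{c|X_n|}\1_{\{|X_n|>\delta\}}\big]\le K^{1/q}\,\P(|X_n|>\delta)^{1/q'} \le K^{1/q}\,\delta^{-p/q'}\,\|X_n\|_{L^p}^{p/q'} ,
\]
using Markov's inequality at the last step. The condition $q>p/(p-2)$ is exactly equivalent to $p/q'=p(1-1/q)>2$. Hence this term is $O(\|X_n\|_{L^p}^{p/q'})=o(\|X_n\|_{L^p}^2)$, since $\|X_n\|_{L^p}\to0$ by (ii).

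Combining the three regions gives $\limsup_n \|R_n\|_{L^1}/\|X_n\|_{L^p}^2\le\omega(\delta)$ for every $\delta$, and letting $\delta\to0$ proves the claim. Indices with $X_n=0$ a.s. are trivial, since then $R_n=0$. The main obstacle is the growth of the remainder when only $f$, and not $f''$, is assumed to grow at most exponentially. Expressing the remainder through $f$ itself away from the origin resolves this, so the Hölder–Markov step is the only place where the precise relation $q>p/(p-2)$ is used.
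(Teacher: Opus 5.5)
Your proposal is correct and follows essentially the same argument as the paper. The paper likewise splits on $\{|X_n|\le\delta\}$, using the integral remainder and the modulus $\sup_{|x|\le\delta}|f''(x)-f''(0)|$, and on $\{|X_n|>\delta\}$, using $|R(x)|\le C'e^{c|x|}$ derived from (i) together with H\"older and Markov. Your further split of the outer region at $|x|=1$ is only cosmetic.
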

\begin{proof}
    Let us define $R:\R\to\R$ as the Taylor remainder of order $2$ for $f$:
    \[
        R(x):=f(x)-f(0)-f'(0)x - \frac12f''(0)x^2,\qquad\forall\,x\in\R,
    \]
    and define $R_n:=\E[R(X_n)|\mathcal G]$, for $n\in\N$.
    Thanks to hypotheses~\itemref{IT:LEM:Taylor_rv:HP_exp},\itemref{IT:LEM:Taylor_rv:HP_unif_bound} we have $f(X_n)\in L^1$, while thanks to~\itemref{IT:LEM:Taylor_rv:HP_conv} we have $X_n,X_n^2\in L^1$. 
    Consequently, $R_n\in L^1(\mathcal G)$.
    The statement of the lemma follows from
    \[
        \frac{\E[|R(X_n)|]}{\|X_n\|^2_{L^p}}\longrightarrow 0, \qquad \text{as }n\to\infty.
    \]
    because $\|R_n\|_{L^1}\leq \E[|R(X_n)|]$. 

    Let us fix $\delta >0$.
    We decompose
    \begin{align}
    \label{EQ:lem:Taylor_rv:remainder_decomposition}
        \E[|R(X_n)|]
        =\E\big[|R(X_n)|\1_{\{|X_n|\leq \delta\}}\big]+\E\big[|R(X_n)|\1_{\{|X_n|> \delta\}}\big],
    \end{align}
    and we study the two expectations separately. 

    By the integral form of the remainder in the Taylor formula of order $1$ for $f$, we have
    \begin{align}
        f(x)-f(0)-f'(0)x
        &=x^2\int_0^1(1-u)f''(ux)\d u\\
        &= \frac12f''(0)x^2 + x^2\int_0^1(1-u)(f''(ux)-f''(0))\d u,
    \end{align}
    hence we can rewrite $\P$-a.s.
    \[
        R(X_n)=X^2_n\int_0^1(1-u)(f''(uX_n)-f''(0))\d u.
    \]
    This formula can be used to estimate the first term in equation~\eqref{EQ:lem:Taylor_rv:remainder_decomposition}:
    \begin{align}
        \E\big[|R(X_n)|\1_{\{|X_n|\leq \delta\}}\big]
        &\leq \sup_{|x|\leq \delta}|f''(x)-f''(0)|\E\left[X^2_n\1_{\{|X_n|\leq \delta\}}\int_0^1 (1-u) \d u\right]\\
        &\leq \sup_{|x|\leq \delta}|f''(x)-f''(0)|\|X_n\|^2_{L^p},
    \end{align}
    where we controlled by $1$ both the time integral and the indicator function and used the embedding $L^p\hookrightarrow L^2$.
    
    As far as the second term in~\eqref{EQ:lem:Taylor_rv:remainder_decomposition} is concerned, we first observe that the exponential growth assumption~\itemref{IT:LEM:Taylor_rv:HP_exp} on $f$ implies the same growth for $R$, with the same exponent $c\geq 0$ and a possibly larger prefactor $C'\geq C\geq 0$:
    \[
        |R(x)|\leq C'e^{c|x|}, \qquad \forall\, x\in\R.
    \]
    Thus, Hölder's and the Markov's inequalities, together with hypothesis~\itemref{IT:LEM:Taylor_rv:HP_unif_bound}, yield:
    \begin{align}
        \E\big[|R(X_n)|\1_{\{|X_n|> \delta\}}\big]
        &\leq C' \E\left[e^{c|X_n|}\1_{\{|X_n|> \delta\}}\right]\\
        &\leq C' \left(\E\left[e^{qc|X_n|}\right]\right)^{1/q}\big(\P(|X_n|> \delta)\big)^{1-1/q}\\
        &\leq C' K^{1/q}\frac{\|X_n\|^{p(1-1/q)}_{L^p}}{\delta^{p(1-1/q)}}.
    \end{align}
    Combining the two estimates in~\eqref{EQ:lem:Taylor_rv:remainder_decomposition} yields
    \begin{align}
        \frac{\E[|R(X_n)|]}{\|X_n\|^2_{L^p}}
        &\leq \sup_{|x|\leq \delta}|f''(x)-f''(0)|+ C' K^{1/q}\frac{\|X_n\|^{p(1-1/q)-2}_{L^p}}{\delta^{p(1-1/q)}}.
    \end{align}
    By first taking the limit superior as $n\to\infty$, the second summand in the right-hand side vanishes because $p(1-1/q)>2$.
    Eventually, we conclude by taking the limit as $\delta \to 0^+$, by continuity of $f''$ at $0$.
\end{proof}

\begin{proof}[\textbf{Proof of Proposition~\ref{PROP:Taylor_exp}}]
    Let us set 
    \[
        M_n:=\E[X_n|\mathcal G], \quad 
        \tilde X_n:=X_n-M_n, \quad 
        V_n:=\E\big[\tilde X^2_n\big|\mathcal G\big], \quad 
        Y_n:=\E\big[e^{\tilde X_n}\big|\mathcal G\big]-1,\qquad\forall\, n\in\N.
    \]
    Since $M_n$ is $\mathcal G$-measurable and $\E\big[\tilde X_n\big|\mathcal G\big]=0$, we have $\E\big[e^{\tilde X_n}\big|\mathcal G\big]=e^{-M_n}\E\big[e^{X_n}\big|\mathcal G\big]$ and $V_n=\V(X_n|\mathcal G)$.
    Therefore, the claim is equivalent to
    \begin{equation}
    \label{EQ:LEM:Ito_thesis}
        \left\|\ln(1+Y_n)-\frac12V_n\right\|_{L^1}=o\left(\|X_n\|^2_{L^p}\right), \qquad \text{ as }n\to\infty.
    \end{equation}

\textit{Step $1$.}
    First, we observe that $\big\|\tilde X_n\big\|_{L^p}\leq \|X_n\|_{L^p}+\|M_n\|_{L^p}\leq 2\|X_n\|_{L^p}$ by the triangular and the Jensen inequalities. 
    Hence, hypothesis~\itemref{IT:PROP:Taylor_exp:conv} yields 
    \begin{equation}
    \label{EQ:LEM:HP_1}
        \tilde X_n\longrightarrow 0, \qquad \text{in }L^p, \text{ as }n\to\infty. 
    \end{equation}
    Further by conditional Jensen's inequality, we have
    \[
        e^{2q|M_n|}\leq e^{2q\E[|X_n||\mathcal G]}\leq \E\left[\left.e^{2q|X_n|}\right|\mathcal G\right], \qquad\forall\, n\in\N, 
    \]
    which, after taking expectation, yields
    \begin{equation}
    \label{EQ:LEM:HP_2a}
        \E\left[e^{2q|M_n|}\right]
        \leq \E\left[e^{2q|X_n|}\right]\leq K, \qquad\forall\, n\in\N,
    \end{equation}
    by hypothesis~\itemref{IT:PROP:Taylor_exp:unif_bound}.
    Therefore, by the triangular and the Cauchy-Schwarz inequalities
    \begin{equation}
    \label{EQ:LEM:HP_2b}
        \E\left[e^{q|\tilde X_n|}\right]
        \leq \E\left[e^{q|X_n|}e^{q|M_n|}\right]
        \leq \left(\E\left[e^{2q|X_n|}\right]
        \right)^{1/2}\left(
        \E\left[e^{2q|M_n|}\right]\right)^{1/2}\leq K, \qquad\forall\, n\in\N,
    \end{equation}
    where we used hypothesis~\itemref{IT:PROP:Taylor_exp:unif_bound} and~\eqref{EQ:LEM:HP_2a}.
    Eventually, equations~\eqref{EQ:LEM:HP_1},~\eqref{EQ:LEM:HP_2b} allow us to apply Lemma~\ref{LEM:Taylor_rv} to the sequence $(\tilde X_n)_{n\in\N}$ and to the function $f=\exp$, obtaining:
    \begin{equation}
    \label{EQ:LEM:Ito_exp}
        \E\big[e^{\tilde X_n}\big|\mathcal G\big]
        =1+\E\big[\tilde X_n\big|\mathcal G\big]+\frac12\E\big[\tilde X^2_n\big|\mathcal G\big]+R_n,
        \qquad\forall\,n\in\N,
    \end{equation}
    where $(R_n)_{n\in\N}\in (L^1(\mathcal G))^\N$ is such that
    \begin{equation}
    \label{EQ:LEM:remainder}
        \|R_n\|_{L^1}=o(\|\tilde X_n\|^2_{L^p})=o(\| X_n\|^2_{L^p}), \qquad \text{as }n\to\infty,
    \end{equation}
    the last equality holding because $\big\|\tilde X_n\big\|_{L^p}\leq 2\|X_n\|_{L^p}$.
    In particular, equation~\eqref{EQ:LEM:Ito_exp} can be rewritten recalling the notations we set at the beginning of the proof:
    \begin{equation}
        Y_n=\frac12V_n+R_n,
        \qquad\forall\,n\in\N.
    \end{equation}
    
\textit{Step $2$.}
    Define 
    \[
        h(y):=y-\ln(1+y), \qquad\forall\,y\geq 0.
    \]
    Observe that $Y_n\geq 0$ since $\E[\tilde X_n|\mathcal G]=0$ and, by conditional Jensen's inequality, $1=\exp\big(\E[\tilde X_n|\mathcal G]\big)\leq \E\big[\exp(\tilde X_n)\big|\mathcal G\big]$.
    Therefore,
    \[
        \left\|\ln(1+Y_n)-\frac12V_n\right\|_{L^1}
        \!\!\!=\left\|-h(Y_n)-R_n\right\|_{L^1}
        \leq \|h(Y_n)\|_{L^1}+\|R_n\|_{L^1},
    \]
    where we used the definition of $h$ with the conclusion of \textit{Step $1$}, and then the triangle inequality.
    Recalling from~\eqref{EQ:LEM:remainder} that $\|R_n\|_{L^1}=o(\| X_n\|^2_{L^p})$ as $n\to\infty$, it remains to prove that 
    \begin{equation}
        \|h(Y_n)\|_{L^1}=o(\| X_n\|^2_{L^p}),\qquad\text{as }n\to\infty.
    \end{equation}
    Using the elementary estimate $0\leq h(y)\leq y^r$ for all $y\geq 0$ and $r\in[1,2]$, 
    we obtain 
    \begin{equation}
    \label{EQ:LEM:Ito:h_estim}
        \|h(Y_n)\|_{L^1}=\E\big[|h(Y_n)|\big]\leq \E\big[|Y_n|^r\big]=\|Y_n\|^r_{L^r}, \qquad\forall\, n\in\N, \ r\in[1,2].
    \end{equation}
    So, the proof is complete provided that we exhibit some $r\in[1,2]$ such that 
    \begin{equation}
    \label{EQ:LEM:Ito:final_claim}
        \|Y_n\|^r_{L^r}=o(\|X_n\|^2_{L^p}), \qquad \text{as }n\to\infty.
    \end{equation}

\textit{Step $3$.}
    Let us show that there exists $r\in[1,2]$ such that~\eqref{EQ:LEM:Ito:final_claim} holds.
    
    Fix $n\in\N$.       
    Use $e^x-1-x\leq x^2e^{|x|}$, for $x\in\R$, and recall that $\E[\tilde X_n|\mathcal G]=0$, to get
    \[
        Y_n
        =\E\big[e^{\tilde X_n}-1\big|\mathcal G\big]
        =\E\big[e^{\tilde X_n}-1-\tilde X_n\big|\mathcal G\big]\leq \E\big[\tilde X_n^2e^{|\tilde X_n|}\big|\mathcal G\big].
    \]
    In view of the conditional Jensen inequality, the last equation implies
    \begin{equation}
    \label{EQ:LEM:ito:step3_1}
        \|Y_n\|^r_{L^r}=\E[Y_n^r]\leq \E\left[\big|\tilde X_n\big|^{2r}e^{r|\tilde X_n|}\right], \qquad \forall\, r\in[1,2].
    \end{equation}
    We now restrict the range for $r$.
    For any $r\in(1, 2\wedge p/2)$, set
    \[
        s:=\frac{p}{2r}>1, \qquad t:=\left(1-\frac1s\right)^{\!-1}\!\!\!=\frac{p}{p-2r}>1,
    \]
    and apply Hölder's inequality with conjugate exponents $s$ and $t$, obtaining
    \begin{equation}
    \label{EQ:LEM:ito:step3_2}
        \E\left[\big|\tilde X_n\big|^{2r}e^{r|\tilde X_n|}\right]
        \leq \left(\E\Big[\big|\tilde X_n\big|^{p}\Big]\right)^{1/s} \left(\E\left[e^{rt|\tilde X_n|}\right]\right)^{1/t}, \qquad \forall\, r\in(1,2\wedge p/2).
    \end{equation}
    Let us now choose $r\in(1,2\wedge p/2)$ such that 
    \[
        rt=\frac{rp}{p-2r}<q. 
    \]
    This is possible because the map $\varphi:[1,2\wedge p/2)\ni r \mapsto rp/(p-2r)$ is continuous, strictly increasing, and satisfies $\varphi(1)=p/(p-2)<q$ by hypothesis.
    
    Combining~\eqref{EQ:LEM:ito:step3_1} and~\eqref{EQ:LEM:ito:step3_2} for this choice of $r$, we obtain
    \[
        \|Y_n\|^r_{L^r}
        \leq \big\|\tilde X_n\big\|^{p/s}_{L^p}\left(\E\left[e^{q|\tilde X_n|}\right]\right)^{1/t}
        = \big\|\tilde X_n\big\|^{2r}_{L^p}\left(\E\left[e^{q|\tilde X_n|}\right]\right)^{1/t}
        \leq (2\|X_n\|_{L^p})^{2r}K^{1/t},
    \]
    where we used the uniform bound in~\eqref{EQ:LEM:HP_2b} for the last inequality.
    Since $\|X_n\|^2_{L^p}\to 0$ and $r>1$, the right-hand side is $o(\|X_n\|^2_{L^p})$ as $n\to\infty$.
    This proves~\eqref{EQ:LEM:Ito:final_claim} and concludes the proof.
\end{proof}

\section{Convex duality, measurable selections, and changes of measure}
\label{SEC:app_duality}

\begin{proof}[\textbf{Proof of Lemma~\ref{LEM:g*}.}]
    Let $A\in\mathcal P$ with $\P\otimes\ell_1(A)=0$ be such that the properties listed in~\itemref{IT:g_lip} or~\itemref{IT:g_quad} hold in $A^c:=(\Om\times[0,T])\setminus A$.
    On $A\times\R^m$ define $g^\ast=0$, while on $A^c\times\R^m$ define $g^\ast$ as in~\eqref{EQ:LEM:g*:def}.
    Then~\itemref{IT:LEM:g*:def} holds by construction.
    
    For every $(\om,r)\in A^c$,
    the map $z\mapsto g(\om,r,z)$ is  continuous on $\R^m$ because it is finite and convex on $\R^m$.
    Hence, for every $(\om,r,q)\in \Om\times[0,T]\times\R^m$,
    \[
        g^\ast(\om,r,q)
        :=\1_{A^c}(\om,r)\sup_{z\in\R^m}\big\{q\cdot z-g(\om,r,z)\big\}
        =\1_{A^c}(\om,r)\sup_{z\in\Q^m}\big\{q\cdot z-g(\om,r,z)\big\}.
    \]
    Since $A\in\mathcal P$, and $(\om,r,q)\mapsto q\cdot z-g(\om,r,z)$ is $\mathcal P\otimes\mathscr B(\R^m)$-measurable for every fixed $z\in\Q^m$, it follows that $g^\ast$ is $\mathcal P\otimes\mathscr B(\R^m)$-measurable.

    Fix $(\om,r)\in A^c$.
    The map $q\mapsto g^\ast(\om,r,q)$ is proper, convex and lower semi-continuous as it is the convex conjugate of the finite and convex function $z\mapsto g(\om,r,z)$ (see, e.g., \cite[Theorem~12.2.]{Rockafellar_1997_Convex_analysis}).
    Moreover,~\itemref{IT:LEM:g*:g=g**} holds by continuity of $z\mapsto g(\om,r,z)$.
    Since $g(\om,r,0)=0$, we also have
    \[
        g^\ast(\om,r,q)\ge q\cdot 0-g(\om,r,0)=0,
        \qquad \forall\,q\in\R^m,
    \]
    so $g^\ast$ is non-negative. 
    This proves the first part of~\itemref{IT:LEM:g*:prop_nonnega}.
    Concerning the first item in~\itemref{IT:LEM:g*:prop_nonnega}, if $g$ satisfies~\itemref{IT:g_quad}, then $q\cdot z-g(\om,r,z)\ge q\cdot z-k(1+|z|^2)$ for all $q,z\in\R^m$.
    Taking the supremum over $z\in\R^m$ yields
    \[
        g^\ast(\om,r,q)
        \ge \sup_{z\in\R^m}\big\{q\cdot z-k(1+|z|^2)\big\}
        =\frac{|q|^2}{4k}-k.
    \]
    Instead, if $g$ satisfies~\itemref{IT:g_lip}, then $q\cdot z-g(\om,r,z)\ge q\cdot z-L|z|$ for all $q,z\in\R^m$. 
    Fix $|q|>L$ and set $z_n:=n\,q/|q|$ for $n\in\N$.
    Then 
    \[
        g^\ast(\om,r,q)
        \geq \sup_{n\in\N}\big\{q\cdot z_n-L|z_n|\big\}
        = \sup_{n\in\N}n(|q|-L)
        =+\infty.
    \]
    This completes the proof of~\itemref{IT:LEM:g*:prop_nonnega}.
    
    The subdifferential $\partial_z g(\om,r,0)$ is a non-empty, closed and convex set because $z\mapsto g(\om,r,z)$ is convex and finite on $\R^m$ (see, e.g., \cite[Theorem~23.4]{Rockafellar_1997_Convex_analysis}).
    By definition of subgradient, $q\in \partial_z g (\om,r,0)$ if and only if 
    \begin{equation}
    \label{EQ:ggeq0}
        g(\om,r,z)\ge g(\om,r,0)+q\cdot(z-0)=q\cdot z,
        \qquad \forall\,z\in\R^m.
    \end{equation}
    Since $g(\om,r,0)=0$, this is equivalent to $q\cdot z-g(\om,r,z)\le 0$ for all $z\in\R^m$, namely to $g^\ast(\om,r,q)\le 0$. 
    It is then equivalent to $g^\ast(\om,r,q)=0$ because $g^\ast(\om,r,q)\ge 0$.
    
    It remains to prove the uniform boundedness of $\partial_z g (\om,r,0)$.
    If $\partial_zg(\om,r,0)=\{0\}$, then this is trivial.
    Thus fix
    $q\in\partial_z g(\om,r,0)$ such that $q\neq 0$.
    If $g$ satisfies~\itemref{IT:g_lip} then, recalling~\eqref{EQ:ggeq0}:
    \[
        q\cdot z \le g(\om,r,z)\leq|g(\om,r,z)|=|g(\om,r,z)-g(\om,r,0)|\le L|z|,
        \qquad \forall\,z\in\R^m.
    \]
    If instead $g$ satisfies~\itemref{IT:g_quad} then:
    \[
        q\cdot z \le g(\om,r,z)\leq|g(\om,r,z)|\le k(1+|z|^2),
        \qquad \forall\,z\in\R^m.
    \]
    Choose $z=\lambda q /|q|$ for any $\lambda>0$.
    Then the previous inequalities yield
    \[
        \lambda |q|\leq L\l, \qquad \forall\,\lambda>0 
        \qquad \Longrightarrow \qquad 
        |q|\leq L
    \]
    in the Lipschitz case, and 
    \[
        \lambda |q|\leq k(1+\lambda^2), \qquad \forall\,\lambda>0 
        \qquad \Longrightarrow \qquad 
        |q|\leq \inf_{\lambda>0}k\Big(\frac1\lambda+\lambda\Big)=2k
    \]
    in the quadratic case.
    This proves~\itemref{IT:LEM:g*:subgrad_0}.

    Finally we prove~\itemref{IT:LEM:g*:projector}. 
    Thanks to~\itemref{IT:LEM:g*:prop_nonnega} and~\itemref{IT:LEM:g*:subgrad_0},  the set $\partial_z g(\omega,r,0)$ is convex, closed and non-empty for all $(\om,r)\in A^c$.
    Hence, the projector $\R^m\ni q \mapsto \Pi(\om,r,q)\in\partial_z g (\om,r,0)$ is well-defined, is characterized by~\eqref{EQ:LEM:g*:def_proj} and is continuous for all $(\om,r)\in A^c$.
    Define $\Pi(\om,r,q)=0\in\R^m$ for $(\om,r,q)\in A\times\R^m$.
    Then the map $\Om\times[0,T]\ni(\om,r) \mapsto \Pi(\om,r,q)\in\R^m$ is predictable for any $q\in\R^m$ by \cite[Theorem~14.37, Example~14.29]{Rockafellar+Wets_1998_Variational_analysis}.
    Eventually, \cite[Corollary~14.34]{Rockafellar+Wets_1998_Variational_analysis} assures that $\Pi$ is $\mathcal P\otimes\mathscr B(\R^m)$-measurable.

    It remains to prove~\itemref{IT:LEM:g*:selector_subgrad}.
    Set $E:=\Omega\times[0,T]\times\R^m$ and $\mathscr E:=\mathcal P\otimes\mathscr B(\R^m)$.
    Consider the function
    \[
        f:E\times\R^m\ni((\om,r,z),q)\mapsto 
        \begin{cases}
            g(\om,r,q), &\text{if }(\om,r)\in A^c,\\
            0, &\text{if }(\omega,r)\in A.
        \end{cases}
    \]
    It is a finite Carathéodory integrand because, for fixed $q\in\R^m$, the map $E\ni e \mapsto f(e,q)$ is $\mathscr E$-measurable and, for fixed $e\in E$, the map $q\mapsto f(e,q)$ is finite and continuous on $\R^m$. 
    It follows from \cite[Example~14.29]{Rockafellar+Wets_1998_Variational_analysis} that $f$ is a normal integrand.
    Apply \cite[Theorem~14.56]{Rockafellar+Wets_1998_Variational_analysis} to $f$ and to the $\mathscr E$-measurable map $x:E\ni(\omega,r,z)\mapsto z\in\R^m$ to infer that the multifunction $\Gamma:E\ni e\mapsto \partial_q f(e,x(e))\subseteq\R^m$ is closed-valued and $\mathscr E$-measurable. 
    Equivalently,
    \[
        \Gamma:\Om\times[0,T]\times\R^m\ni(\omega,r,z)\mapsto \begin{cases}
            \partial_z g(\omega,r,z), &\text{if }(\omega,r)\in A^c,\\
            \{0\}, &\text{if }(\omega,r)\in A.
        \end{cases}
    \]
    By the measurable selection theorem \cite[Corollary~14.6]{Rockafellar+Wets_1998_Variational_analysis}, there exists a $\mathcal P\otimes\mathscr B(\R^m)$-measurable selector $\Theta:\Omega\times[0,T]\times\R^m\to\R^m$ such that $\Theta(\omega,r,z)\in\Gamma(\omega,r,z)$ for all $(\omega,r,z)\in\Omega\times[0,T]\times\R^m$.
    Hence, for $(\om,r)\in A^c$,
    \[
        \Theta(\omega,r,z)\in\partial_z g(\omega,r,z),
        \qquad
        \forall\,z\in\R^m, 
    \]
    which is equivalent to 
    \[
        g(\,\cdot\,,z)+g^\ast\big(\,\cdot\,,\Theta(\,\cdot\,,z)\big)
        =
        \Theta(\,\cdot\,,z)\cdot z,
        \qquad \forall\,z\in\R^m,
    \]
    by \cite[Theorem~23.5]{Rockafellar_1997_Convex_analysis}.
    
    It remains to prove the stated bounds on $\Theta$.
    Fix $(\omega,r)\in A^c$, $z\in\R^m$ and $q\in\partial_z g(\omega,r,z)$.
    If $q=0$, there is nothing to prove. Assume then that $q\neq0$ and set
    $u:=q/|q|$.
    By definition of subdifferential, for every $\lambda>0$,
    \[
        g(\omega,r,z+\lambda u)
        \geq
        g(\omega,r,z)+\lambda |q|.
    \]
    Hence, if $g$ satisfies~\itemref{IT:g_lip}, then
    \[
        \lambda |q|
        \leq
        g(\omega,r,z+\lambda u)-g(\omega,r,z)
        \leq
        L\lambda,
        \qquad \forall\,\lambda>0,
    \]
    and therefore $|q|\leq L$.

    If instead $g$ satisfies~\itemref{IT:g_quad}, then
    \[
        \lambda |q|
        \leq
        g(\omega,r,z+\lambda u)-g(\omega,r,z)
        \leq
        k(1+|z+\lambda u|^2)+k(1+|z|^2)
        \leq
        k\big(2+2|z|^2+2\lambda |z|+\lambda^2\big),
    \]
    where we used the triangle inequality in the end.
    Choosing
    $\lambda^2:=2+2|z|^2$ and dividing by $\lambda$ yields
    \[
        |q|
        \leq
        k\Big(2|z|+2\sqrt{2+2|z|^2}\Big)
        \leq k\Big(2|z| + 2\sqrt 2 (1+|z|)\Big)
        \leq
        5k(1+|z|),
    \]
    where we used $\sqrt{1+|z|^2}\leq 1+|z|$ and $2\leq 2+2\sqrt 2 \leq 5$.
    By arbitrariness of $q\in\partial_z g(\omega,r,z)$, and since
    $\Theta(\omega,r,z)\in\partial_z g(\omega,r,z)$, this proves the claimed bound and completes the proof of
   ~\itemref{IT:LEM:g*:selector_subgrad}.
\end{proof}

\begin{proof}[\textbf{Proof of Lemma~\ref{LEM:cond_expec_convex}.}]
{\let\qed\relax
\begin{proof}[Proof of~\itemref{IT:LEM:cond_expec_convex:Lexp_BMO}] 
    Fix $\mu\in\BMO_T$.
    Then  there exists $r>1$ such that $\mathcal E_{T}^\mu\in L^r$, thanks to \cite[Theorem~3.4]{Kazamaki_1994_Continuous_exponential_martingales_BMO}.
    Let $r':=r/(r-1)$. 
    If $X\in L^{\mathrm{exp}}$, then, for every $c>0$, 
    \[
        \E_\mu\big[e^{c|X|}\big] 
        = \E\Big[\mathcal E_{T}^\mu \,e^{c|X|}\Big] 
        \leq \big\|\mathcal E_{T}^\mu\big\|_{L^r} \left(\E\Big[e^{cr'|X|}\Big]\right)^{1/r'} <+\infty. \] Therefore $X\in L^{\mathrm{exp}}(\Q^\mu)$. Since finite exponential moments of every order imply finite polynomial moments of every order, we also have $X\in L^p(\Q^\mu)$ for every $p\in[1,+\infty)$. 
\end{proof}
\begin{proof}[Proof of~\itemref{IT:LEM:cond_expec_convex:stoch_exponen}]
    Recall that $\mathcal E(M):=\exp\left(M-\frac12\langle M\rangle\right)$ denotes the stochastic exponential of any continuous martingale $M$ with $M_0=0$, where $\langle M\rangle$ is its quadratic variation process (see, e.g., \cite[Section~1.1]{Kazamaki_1994_Continuous_exponential_martingales_BMO}).
    Then for all $p>0$ and $0\leq a\leq b\leq T$, we have
    \begin{align}
        \E\left[(\mathcal E_b(M))^p|\F_a\right]
        &=\E\left[\left.\exp{\left(pM_b-\frac p2\langle M\rangle_b\right)} \right|\F_a\right]\\
        &=\E\left[\left.\mathcal E_b(pM) \exp{\left(\Big(\frac{p^2}{2}-\frac p2\Big)\langle M\rangle_b\right)}\right|\F_a\right].
    \end{align}
    Recall that, for any such $M$, $\mathcal E(M)$ is a non-negative local martingale, in particular, it is a supermartingale.
    If, in addition, $\langle M\rangle_b$ is bounded,  then Novikov's condition is satisfied and  $\mathcal E(pM)$ is a martingale.
    Therefore we reach
    \[
        \E\left[(\mathcal E_b(M))^p|\F_a\right]\leq \mathcal E_a(pM)\exp{\left(\frac12 p(p-1)\|\langle M\rangle_b\|_{L^\infty}\right)}.
    \]
    The assertion follows by applying this inequality to the martingale  $M :=\int_0^{\,\cdot}\1_{[a,T]}\mu\cdot \d W$, which satisfies $\mathcal E_b(M)=\mathcal E_{a,b}^\mu$, $\mathcal E_a(pM)=1$, and $\langle M\rangle_b\leq K^2(b-a)\leq K^2 T$.
\end{proof}
\begin{proof}[Proof of~\itemref{IT:LEM:cond_expec_convex:embedding}]
    Fix $\mu\in\mathcal U$, $X\in L^p$ and $s\in[0,T]$.
    By the conditional change-of-measure formula, and the conditional H\"older inequality with conjugate exponents $p/u>1$ and $q:=p/(p-u)>1$, we have
    \[
        \E_\mu[|X|^u|\F_s]
        =\E[\mathcal E^\mu_{s,T}|X|^u|\F_s]
        \leq (\E[(\mathcal E^\mu_{s,T})^{q}|\F_s])^{1/q}(\E[|X|^p|\F_s])^{u/p}.
    \]
    The claim follows by first applying Lemma~\ref{LEM:cond_expec_convex}\itemref{IT:LEM:cond_expec_convex:stoch_exponen}   to the first factor on the right-hand side, then taking the power $1/u$.
    In particular, if $s=0$, the continuous embedding of $L^p$ into $L^u(\Q^\mu)$ follows.
\end{proof}}
\end{proof}

\begin{proof}[\textbf{Proof of Proposition~\ref{PROP:dual_representation}}]
    If $g$ satisfies~\ref{IT:g_lip} or~\ref{IT:g_quad}, then it satisfies~\ref{IT:L_condition}, or~\ref{IT:exp_condition}, respectively. 
    Then, see Remark~\ref{REM:induced_dyn_risk_meas}, the dynamic risk measure $\rho$ induced by $g$ is well-defined on $\mathcal X_T=L^2(\F_T)$, or on $\mathcal X_T= L^{\mathrm{exp}}(\F_T)$, respectively.
    In addition $\rho$ is normalized, cash-additive and convex, as recalled at the end of Section~\ref{SEC:BSDEs}.
    
    Before dealing with the dual representation, we note that the conditional expectation appearing in~\eqref{EQ:dual_representation_convex} is well-defined as a $[-\infty,+\infty)$-valued random variable.
    Since $g^\ast\geq 0$, then for all $\mu\in\mathcal B$
    \[
        A^\mu:=\int_s^T g^\ast(r,\mu_r)\d r
        \in [0,+\infty],
    \]
    and therefore $X-A^\mu\in[-\infty,+\infty)$, $\Q^\mu$-\text{a.s.}
    Consequently, 
    \[
        \E_\mu[X-A^\mu|\F_s]:=\E_\mu\big[(X-A^\mu)^+\big|\F_s\big] - \E_\mu\big[(X-A^\mu)^-\big|\F_s\big]
    \]
    is well-posed, provided $(X-A^\mu)^+\in L^1(\Q^\mu)$.
    This follows from $(X-A^\mu)^+ \leq X^+$, and from $X\in \mathcal X_T$, which implies $X^+\in L^1(\Q^\mu)$ by~\itemref{IT:LEM:cond_expec_convex:Lexp_BMO} and~\itemref{IT:LEM:cond_expec_convex:embedding} in Lemma~\ref{LEM:cond_expec_convex}.
    
    The dual representation is given in \cite[Theorem~7.4]{Barrieu+ElKaroui_2009_Pricing_hedging_optimally_designing_derivatives_minimization_risk_measures} under assumption~\ref{IT:g_lip} for all $X\in L^2(\F_T)$, and under assumption~\ref{IT:g_quad} for $X\in L^\infty(\F_T)$.
    
    It remains to extend the representation to $X\in L^{\mathrm{exp}}(\F_T)$ under assumption~\ref{IT:g_quad}.
    Fix $s\in[0,T]$ and let $(Y,Z)$ be the solution to the BSDE with parameters $(g,T,X)$, where $Y\in\mathcal S^{\mathrm{exp}}_T$ and $Z\in \mathcal H^p_T$ for all $p\geq 1$.
    Set
    \[
        D_s(X)
        :=
        \esssup_{\mu\in\BMO_T}
        \E_\mu\left[
            X-\int_s^T g^\ast(r,\mu_r)\d r
            \,\bigg|\,\F_s
        \right].
    \]
    
    We first prove that $D_s(X)\leq Y_s$.
    Fix $\mu\in\BMO_T$.
    The stochastic exponential of $\int_0^{\cdot}\mu\cdot\d W$ satisfies a reverse H\"older inequality by \cite[Theorem~3.4]{Kazamaki_1994_Continuous_exponential_martingales_BMO}. 
    Together with $Z\in\mathcal H_T^p$ for every $p\geq1$, this implies that $g(\,\cdot\,,Z)$ and $\mu\cdot Z$ are integrable with respect to $\Q^\mu\otimes\ell_1$, and that $\int_0^{\,\cdot \,}Z\cdot\d W^\mu$ is a uniformly integrable martingale under $\Q^\mu$, where $W^\mu:=W-\int_0^{\,\cdot\,}\mu \d \ell_1$ is a $\Q^\mu$-Brownian motion by the Girsanov theorem.
    Hence, the solution $Y$ can be written as
    \[
        Y_s = \E_\mu\left[ \left.
        X+\int_s^T\big(g(r,Z_r)-\mu_r\cdot Z_r\big)\d r
        \right|\F_s \right]
        \geq
        \E_\mu\left[ \left.
        X-\int_s^Tg^\ast(r,\mu_r)\d r
        \right|\F_s \right],
    \]
    where we used Fenchel's inequality.
    Taking the essential supremum over $\mu\in\BMO_T$ gives 
    \begin{equation}
    \label{EQ:PROP:dual_exp_left}
        D_s(X)\leq Y_s.
    \end{equation}

    For the reverse inequality, let $X_n:=(-n)\vee(X\wedge n)\in L^\infty(\F_T)$ and denote by $(Y^n,Z^n)\in\mathcal S^\infty_T\times\BMO_T$ the corresponding solution.
    By the dual representation for bounded terminal solutions in \cite[Theorem~7.4]{Barrieu+ElKaroui_2009_Pricing_hedging_optimally_designing_derivatives_minimization_risk_measures},
    \begin{equation}
    \label{EQ:PROP:dual_exp:essmax}
        Y^n_s
        =
        \essmax_{\mu\in\BMO_T}\E_{\mu}\left[
            X_n-\int_s^Tg^\ast(r,\mu_r)\d r
            \bigg|\F_s
        \right],
    \end{equation}
    and the maximum is attained at some $\mu^n\in\BMO_T$.
    Set
    \[
        A^n_s
        :=
        \E_{\mu^n}\bigg[
            \int_s^Tg^\ast(r,\mu^n_r)\d r
            \bigg|\F_s
        \bigg], \quad  
        H^n_s
        :=
        \E_{\mu^n}\left[
            \log\left(
                \frac{\mathcal E_T^{\mu^n}}
                     {\mathcal E_s^{\mu^n}}
            \right)
            \bigg|\F_s
        \right]
        =
        \frac12
        \E_{\mu^n}\left[
            \int_s^T|\mu^n_r|^2\d r
            \bigg|\F_s
        \right].
    \]
    
    The estimate in Lemma~\ref{LEM:g*}\itemref{IT:LEM:g*:def}\itemref{IT:LEM:g*:prop_nonnega} yields
    \begin{equation}
    \label{EQ:PROP:dual_exp:H}
        H^n_s\leq2kA^n_s+2k^2T.
    \end{equation}
    From Lemma~\ref{LEM:g*}\itemref{IT:LEM:g*:projector}, let $q^0:=\Pi(\,\cdot\,,0)$, so that $q^0\in\BMO_T$ and $g^\ast(\,\cdot\,,q^0)=0$. 
    Then~\eqref{EQ:PROP:dual_exp:essmax} and $X_n\geq -X^-$ yield $Y^n_s\geq-\E_{q^0}[X^-\mid\F_s]$.
    Let us recall the conditional entropy inequality
    \[
        \E_{\mu^n}[\xi|\F_s]
        \le
        \log\E\big[e^{\xi}\big|\F_s\Big]
        +H^n_s, \qquad\forall\, \xi\in L^\infty(\F_T),
    \]
    which directly follows from \cite[Lemma~2]{Detlefsen+Scandolo_2005_Conditional_dynamic_convex_risk_measures}.
    Using these two inequalities, we reach:
    \begin{align}
        -\E_{q^0}[X^-\mid\F_s]
        \leq Y^n_s
        &\le
        \E_{\mu^n}[X_n^+|\F_s]-A^n_s  \\
        &\le
        \frac1{\lambda}
        \log\E\big[e^{\lambda X_n^+}\big|\F_s\Big]
        +\frac{H^n_s}\l-A^n_s\\
        &\le
        \frac1{\lambda}
        \log\E\big[e^{\lambda X^+}\big|\F_s\Big]
        +
        \frac{2k^2T}{\lambda}
        -
        \left(1-\frac{2k}{\lambda}\right)A^n_s,
    \end{align}
    where we used the maximizer $\mu^n$ in~\eqref{EQ:PROP:dual_exp:essmax} and $X_n\leq X_n^+$ for the second inequality, followed by the conditional entropy inequality applied to $\xi=\l X^+_n$ for some $\l>0$, then $X_n^+\leq X^+$ with~\eqref{EQ:PROP:dual_exp:H} at the end.
    Since $X\in L^{\mathrm{exp}}(\F_T)$, the first term in the last expression is finite $\P$-a.s. 
    Comparing the first and last members of the previous chain of inequalities, and choosing $\lambda>2k$, yields $\sup_{n\in\N}A^n_s<+\infty$.
    Plugging this uniform upper bound into~\eqref{EQ:PROP:dual_exp:H} gives 
    \begin{equation}
    \label{EQ:PROP:dual_exp:sup_H}    
        \sup_{n\in\N}H^n_s<+\infty.
    \end{equation}
    
    Now set $U_n:=|X_n-X|$.
    Then
    \begin{equation}
    \label{EQ:PROP:dual_exp:Y}
        Y^n_s
        \leq
        \E_{\mu^n}\left[
            X-\int_s^Tg^\ast(r,\mu^n_r)\d r
            \,\bigg|\,\F_s
        \right]
        +
        \E_{\mu^n}[U_n\mid\F_s] \leq
        D_s(X)+\E_{\mu^n}[U_n\mid\F_s].
    \end{equation}
    The stability result \cite[Proposition~7]{Briand+Hu_2008_Quadratic_BSDEs_convex_generators_unbounded_terminal_conditions} gives a subsequence $(n_i)_{i\in\N}$ such that  
    \[
        Y^{n_i}_s\to Y_s, \qquad \P\text{-a.s.}
    \]
    Moreover, fix $\lambda>0$ and $m\in\N$.
    Since $U_n\wedge m\in L^\infty(\F_T)$, we can apply the conditional entropy inequality to $\l(U_n\wedge m)$, then 
    let $m\to\infty$ by the conditional monotone convergence under $\Q^{\mu^n}$ and under $\P$, yielding:
    \begin{equation}
    \label{EQ:PROP:dual_exp:entr}
        \E_{\mu^n}[U_n\mid\F_s]
        \leq
        \frac{H^n_s}{\lambda}
        +
        \frac1\lambda
        \log\E\big[e^{\lambda U_n}\big|\F_s\big].
    \end{equation}
    Indeed, the last conditional expectation is finite $\P$-a.s., since
    $U_n\leq |X|$ and $X\in L^{\mathrm{exp}}(\F_T)$.
    Since $U_n\to0$ a.s., $U_n\leq|X|$, and $X\in L^{\mathrm{exp}}(\F_T)$, dominated convergence yields $\E[e^{\lambda U_n}\mid\F_s]\longrightarrow1$ in $\P$-probability.
    Using~\eqref{EQ:PROP:dual_exp:sup_H} and then letting $\lambda\to\infty$ in~\eqref{EQ:PROP:dual_exp:entr}, we obtain $\E_{\mu^n}[U_n\mid\F_s]\longrightarrow0$ in $\P$-probability.
    By characterization of convergence in probability, we can extract a sub-subsequence $(n_{i_k})_{k\in\N}$
    such that 
    \[
        \E_{\mu^{n_{i_k}}}[U_{n_{i_k}}\mid\F_s]\longrightarrow0, \qquad \P\text{-a.s.}, \ \text{as }k\to\infty.
    \]
    Therefore, passing to the limit the first and last members of~\eqref{EQ:PROP:dual_exp:Y},  $\P$-a.s. along the sub-subsequence $(n_{i_k})_{k\in\N}$, yields $Y_s\leq D_s(X)$.
    This inequality, when combined with~\eqref{EQ:PROP:dual_exp_left}, proves the claimed dual representation~\eqref{EQ:dual_representation_convex} for $X\in L^{\mathrm{exp}}(\F_T)$ and completes the proof.
\end{proof}

\begin{proof}[\textbf{Proof of Lemma~\ref{LEM:mu_vs_nu}.}]
    Fix $\mu,\nu\in\BMO_T$. 
    By the Girsanov theorem, $W^\mu:= W - \int_0^{\, \cdot}\mu_r\d r$ is a $\Q^\mu$-Brownian motion (see, e.g., \cite[Section~3.5, Theorem~5.1]{Karatzas+Shreve_1991_Brownian_motion_stochastic_calculus}).
    Let us compute the Radon-Nikodym derivative of $\Q^\nu$ with respect to $\Q^\mu$, rewrite it in terms of $W^\mu$, and denote it by $R_T$:
    \begin{align}
        R_T&:=\frac{\d \Q^\nu}{\d \Q^\mu}
        =\frac{\d \Q^\nu}{\d \P}\frac{\d \P}{\d \Q^\mu}
        =\mathcal E^\nu_T(\mathcal E^\mu_T)^{-1}\\
        &=\exp\left(\int_0^T(\nu_r-\mu_r)\cdot \d W_r - \frac12\int_0^T\big(|\nu_r|^2-|\mu_r|^2\big)\d r\right)\\
        &=\exp\left(\int_0^T(\nu_r-\mu_r)\cdot \d W^\mu_r + \int_0^T\mu_r\cdot(\nu_r-\mu_r)\d r - \frac12\int_0^T\big(|\nu_r|^2-|\mu_r|^2\big)\d r\right)\\
        &=\exp\left(\int_0^T(\nu_r-\mu_r)\cdot \d W^\mu_r - \frac12\int_0^T|\nu_r-\mu_r|^2\d r\right) = \mathcal E_T\big((\nu-\mu)\tinybullet W^\mu\big),
    \label{EQ:LEM:mu=nu_RT}
    \end{align}
    where we used  the identity $|x-y|^2=|x|^2 + |y|^2 - 2x\cdot y$, for $x,y\in\R^m$, to rewrite the Lebesgue integral.
    Since $\mu,\nu\in\BMO_T$, $M:=(\mu\tinybullet W)$ and $N:=\big((\mu-\nu)\tinybullet W\big)$ are BMO-martingales. 
    By \cite[Theorem~3.6]{Kazamaki_1994_Continuous_exponential_martingales_BMO}, $\langle N,M\rangle-N=\big((\nu-\mu)\tinybullet W^\mu\big)$ is a BMO-martingale under $\Q^\mu=\mathcal E_T(M)\P$. 
    Hence $\mathcal E(\langle N,M\rangle-N)=R$ is a $\Q^\mu$-martingale by \cite[Theorem~2.3]{Kazamaki_1994_Continuous_exponential_martingales_BMO}.
    In particular,  $\E_\mu [R_T|\F_s]=R_s>0$ for any $s\in[0,T]$. 

\textit{Proof of~\itemref{IT:LEM:mu_vs_nu:bound_u}.}
    Assume that $\mu,\nu\in\BMO_T$ satisfy $|\mu|,|\nu|\leq K$, $\P\otimes\ell_1$-a.e. for some $K>0$, and fix $u>1$, $X\in L^u(\Q^\mu)$, $s\in[0,T]$.
    We evaluate the difference between the conditional expectations of $X$ with respect to $\nu$ and $\mu$.
    Using the conditional change-of-measure formula $\E_\nu[X|\F_s]=\E_\mu[R_TR_s^{-1}X|\F_s]$, and the conditional H\"older inequality, we have
    \begin{align}
        \big|\E_\nu[X|\F_s]-\E_\mu[X|\F_s]\big|^u
        &=\left|\E_\mu\big[(R_TR_s^{-1}-1)X\big|\F_s\big]\right|^u\\
        &\leq \E_\mu[|X|^u|\F_s]\Big(\E_\mu\big[|R_TR_s^{-1}-1|^v\big|\F_s\big]\Big)^{u/v}.
    \label{EQ:LEM:mu=nu:bound_u}
    \end{align}
    Letting $\theta:=\nu-\mu$ and $\Lambda:=\mathcal E\big(\1_{[s,T]}\theta\tinybullet W^\mu\big)$, note that we have $R_TR_s^{-1}=\Lambda_T$, hence the Itô formula for the stochastic exponential yields
    \[
        R_TR_s^{-1}=1+\int_s^T\Lambda_r\theta_r\cdot \d W^\mu_r.
    \]
    By the conditional Burkholder-Davis-Gundy inequality (see, e.g., \cite[Chapter~VII, numbers~91,92]{Dellacherie+Meyer_1982_Probabilities_potential_B}; see also \cite{Lenglart+Lepingle+Pratelli_1980_Presentation_unifiee_certaines_inegalites_theorie_martingales}), there is a constant $c_v>0$ such that
    \begin{align}
        \E_\mu\big[|R_TR_s^{-1}-1|^v\big|\F_s\big] 
        &\leq c_v\E_\mu\left[\left.\left(\int_s^T\Lambda_r^2|\theta_r|^2 \d r\right)^{v/2}\right|\F_s\right]\\
        &\leq c_v\E_\mu\left[\left.\left(\sup_{r\in[s,T]}\Lambda_r^{v}\right)\left(\int_s^T|\theta_r|^2 \d r\right)^{v/2}\right|\F_s\right]\\
        &\leq c_v\sqrt{\E_\mu\left[\left.\left(\sup_{r\in[s,T]}\Lambda_r^{2v}\right)\right|\F_s\right]}\sqrt{\E_\mu\left[\left.\left(\int_s^T|\theta_r|^2 \d r\right)^{v}\right|\F_s\right]},
        \qquad
    \label{EQ:LEM:mu=nu:bound_u_bis}
    \end{align}
    where we used the conditional Cauchy-Schwarz inequality in the last line.
    Since $(\Lambda_r^{2v})_{r\in[s,T]}$ is a non-negative $\Q^\mu$-sub-martingale, the conditional Doob inequality yields the existence of a universal constant $C_v>0$ such that
    \begin{equation}
    \label{EQ:LEM:mu=nu:bound_u_ter}
        \E_\mu\left[\left.\sup_{r\in[s,T]}\Lambda_r^{2v}\right|\F_s\right]
        \leq C_v\E_\mu\left[\left.\Lambda_T^{2v}\right|\F_s\right]
        \leq C_v\exp\left(4v(2v-1)K^2T\right),
    \end{equation}
    where we used Lemma~\ref{LEM:cond_expec_convex}\itemref{IT:LEM:cond_expec_convex:stoch_exponen}   for the second inequality.
    Reinserting~\eqref{EQ:LEM:mu=nu:bound_u_ter} into~\eqref{EQ:LEM:mu=nu:bound_u_bis} and then again into~\eqref{EQ:LEM:mu=nu:bound_u} yields the claim with constant $C_{u,K,T}:=\big(c_v^2C_ve^{4v(2v-1)K^2T}\big)^{u/(2v)}>0$.

\textit{Proof of~\itemref{IT:LEM:mu_vs_nu:bound_infty}.}
    Assume now that $\mu,\nu\in\BMO_T$ satisfy $\nu=\mu=0$ on $\Om\times[0,s]$ and fix $s\in[0,T]$, $X\in L^\infty$.
    Then $R_s=1$ and, by the conditional change-of-measure formula, $\E_\nu[X|\F_s]=\E_\mu[R_TR_s^{-1}X|\F_s]=\E_\mu[R_TX|\F_s]$.
    Moreover, since $\Q^\nu$ and $\P$ are equivalent, $X\in L^\infty(\Q^\nu)$ with $\|X\|_{L^\infty}=\|X\|_{L^\infty(\Q^\nu)}$.
    Therefore,
    \begin{equation}
    \label{EQ:LEM:mu=nu_quadratic:bound_u}
        \big|\E_\mu[X|\F_s]-\E_\nu[X|\F_s]\big|
        =\left|\E_\mu\big[(1-R_T)X\big|\F_s\big]\right|
        \leq \|X\|_{L^\infty}\E_\mu[|R_T-1||\F_s].
    \end{equation}
    Notice that $R_T>0$, and estimate the last conditional expectation via the Cauchy-Schwarz inequality
    \begin{align}
        \E_{\mu}\big[|R_T-1|\big|\F_s\big]
        &=\E_{\mu}\Big[\big|\sqrt{R_T}-1\big|\,\big|\sqrt{R_T}+1\big|\Big|\F_s\Big]\\
        &\leq \left(\E_\mu\left[\left.\big(\sqrt{R_T}-1\big)^2\right|\F_s\right]\right)^{1/2}\left(\E_\mu\left[\left.\big(\sqrt{R_T}+1\big)^2\right|\F_s\right]\right)^{1/2}
    \label{EQ:LEM:bound:begin}
    \end{align}
    Concerning the second factor, we compute the square, then use $\E_\mu[R_T|\F_s]=R_s=1$ and Jensen's inequality:
    \begin{equation}
    \label{EQ:LEM:bound_factor_2}
        \E_\mu\Big[\big(\sqrt{R_T}+1\big)^2\Big|\F_s\Big]
        =\E_\mu\Big[R_T+2\sqrt{R_T}+1\Big|\F_s\Big]
        \leq 2 +2\sqrt{\E_\mu\big[R_T\big|\F_s\big]}
        =4.
    \end{equation}
    Concerning the first factor, we use the elementary inequality $(\sqrt x - 1)^2\leq -\ln x + x-1$, for $x>0$, and recall that $\E_\mu[R_T|\F_s]=R_s=1$:
    \begin{equation}
    \label{EQ:LEM:bound_factor_1}
        \E_\mu\Big[\big(\sqrt{R_T}-1\big)^2\Big|\F_s\Big]
        \leq \E_\mu\big[-\ln R_T + R_T - 1\big|\F_s\big]
        =-\E_\mu\big[\ln R_T\big|\F_s\big].
    \end{equation}
    Recall the definition of $R_T$ in~\eqref{EQ:LEM:mu=nu_RT} and use the martingale property for $\big((\mu-\nu)\tinybullet W^\mu\big)$, under $\Q^\mu$, to compute its logarithm:
    \begin{equation}
    \label{EQ:LEM:mu=nu:ln}
        -\E_\mu\big[\ln R_T\big|\F_s\big]
        =\frac12\E_\mu\bigg[\int_0^T|\mu_r-\nu_r|^2\d r \bigg|\F_s\bigg].
    \end{equation}
    Recall that $\nu=\mu$ on $\Om\times[0,s]$, then reinsert~\eqref{EQ:LEM:bound_factor_1} and~\eqref{EQ:LEM:bound_factor_2} into~\eqref{EQ:LEM:bound:begin} and finally~\eqref{EQ:LEM:bound:begin} into~\eqref{EQ:LEM:mu=nu_quadratic:bound_u} to reach the conclusion.

\textit{Proof of~\itemref{IT:LEM:mu_vs_nu:equality}.}
    Suppose now that $\nu,\mu\in\BMO_T$ satisfy $\nu=\mu$ on $\Om\times(s,a)$, for some $0\leq s< a\leq T$.
    Then $R_s=R_a$.
    As a first step, suppose that $X$ is $\F_a$-measurable, real-valued, and integrable with respect to both $\Q^\nu$ and $\Q^\mu$, then
    \[
        \E_\nu[X|\F_s]
        =\E_\mu[R_T X|\F_s]R_s^{-1} 
        = \E_\mu\big[\E_\mu[R_T X|\F_a]\big|\F_s\big]R_s^{-1}.
    \]
    Since $X$ is $\F_a$-measurable and $R$ is a $\Q^\mu$-martingale, we have $\E_\mu[R_T X|\F_a]=X\E_\mu[R_T|\F_a]=XR_a$. 
    Therefore, we reach the claim for an integrable random variable
    \[
        \E_\nu[X|\F_s]
        =\E_\mu[XR_a|\F_s]R_s^{-1}
        =\E_\mu\big[XR_aR_s^{-1}\big|\F_s\big]
        =\E_\mu[X|\F_s].
    \]

    Assume now that $X$ is $\F_a$-measurable and $[0,+\infty]$-valued. 
    For all $n\in\N$, define the bounded (hence integrable) random variable $X_n:=X\wedge n$. 
    By the previous step we have $\E_\nu[X_n|\F_s]=\E_\mu[X_n|\F_s]$.
    By the conditional monotone convergence theorem, we reach 
    \[
        \E_\mu[X|\F_s]
        :=\lim_{n\to\infty}\E_\mu[X_n|\F_s]
        =\lim_{n\to\infty}\E_\nu[X_n|\F_s]
        =\E_\nu[X|\F_s],
    \]
    the equality and the limits holding $\P$-a.s.\ as $[0,+\infty]$-valued random variables.

    Eventually, assume that $X$ is $\F_a$-measurable and $[-\infty,+\infty)$-valued such that $X^+$ is integrable with respect to both $\Q^\mu$ and $\Q^\nu$.
    Then the first step gives $\E_\nu[X^+|\F_s]=\E_\mu[X^+|\F_s]$, while the second gives $\E_\nu[X^-|\F_s]=\E_\mu[X^-|\F_s]$, as $[0,+\infty]$-valued random variables.
    Subtracting the conditional expectations of the positive and negative parts, we reach
    \[
        \E_\mu[X|\F_s]:=\E_\mu[X^+|\F_s] - \E_\mu[X^-|\F_s] = \E_\nu[X^+|\F_s] - \E_\nu[X^-|\F_s] =:\E_\nu[X|\F_s],
    \]
    hence the equality between the $[-\infty,+\infty)$-valued conditional expectations.
\end{proof}

\begin{proof}[\textbf{Proof of Lemma~\ref{LEM:h}}]
    As in the proof of Lemma~\ref{LEM:g*}, let
    $A\in\mathcal P\subset\F\otimes\mathscr B([0,T])$ with
    $\P\otimes\ell_1(A)=0$ be such that the properties listed in
   ~\ref{IT:g_lip} or~\ref{IT:g_quad}, as well as the conclusions of
    Lemma~\ref{LEM:g*}, hold on $A^c:=(\Omega\times[0,T])\setminus A$.
    
    We first define a version of $g$ by setting
    \[
        \widetilde g(\omega,r,z)
        :=
        \begin{cases}
            g(\omega,r,z), &\text{if }(\omega,r,z)\in A^c\times\R^m,\\
            0, &\text{if }(\omega,r,z)\in A\times\R^m.
        \end{cases}
    \]
    For every fixed $z\in\R^m$, the map $(\omega,r)\mapsto\widetilde g(\omega,r,z)$ is predictable.
    Moreover, for every $(\omega,r)$, the map $z\mapsto\widetilde g(\omega,r,z)$ is finite and continuous on $\R^m$.
    Hence $\widetilde g$ is a finite Carathéodory integrand, and therefore a normal integrand by \cite[Example~14.29]{Rockafellar+Wets_1998_Variational_analysis}.
    Consequently, \cite[Theorem~14.56]{Rockafellar+Wets_1998_Variational_analysis} applied to $\widetilde g$ and to thepredictable map $(\omega,r)\mapsto 0\in\R^m$ yields that 
    \[
        K:\Om\times[0,T]\ni  (\omega,r)\mapsto \partial_z\widetilde  g(\omega,r,0)\subseteq\R^m
    \]
    is a closed-valued $\mathcal P$-measurable multifunction in the sense of \cite[Definition~14.1]{Rockafellar+Wets_1998_Variational_analysis}.
    Now define the support function of $K$ by
    \[
        h(\omega,r,z)
        :=
        \sup_{q\in K(\omega,r)}q\cdot z
        =
        \begin{cases}
            \ds\sup_{q\in\partial_z g(\omega,r,0)}q\cdot z,
            &\text{if }(\omega,r,z)\in A^c\times\R^m,\\
            0,
            &\text{if }(\omega,r,z)\in A\times\R^m,
        \end{cases}
    \]
    which is a convex normal integrand by \cite[Example~14.51]{Rockafellar+Wets_1998_Variational_analysis}. 
    In particular, $h$ is $\mathcal P\otimes\mathscr B(\R^m)$-measurable by \cite[Corollary~14.34]{Rockafellar+Wets_1998_Variational_analysis}. 
    
    We now prove the statements~\itemref{IT:LEM:h:def},~\itemref{IT:LEM:h:lip} on $A^c$.
    Fix henceforth $(\omega,r)\in A^c$. 

    The statement~\itemref{IT:LEM:h:def} is satisfied by construction.
    
    Since $z\mapsto h(\omega,r,z)$ is a support function, it is sublinear by \cite[Theorem~8.24]{Rockafellar+Wets_1998_Variational_analysis}.
    For $q\in\partial_z g(\om,r,0)$ and $z,z'\in\R^m$ we have
    \[
        q\cdot z
        \leq q\cdot z'+q\cdot(z-z') 
        \leq q\cdot z'+R|z-z'|,
    \]
    where $R:=L$ if $g$ satisfies~\ref{IT:g_lip} and $R:=2k$ if $g$ satisfies~\ref{IT:g_quad}, thanks to Lemma~\ref{LEM:g*}\itemref{IT:LEM:g*:def}\itemref{IT:LEM:g*:subgrad_0}.
    Take the supremum over $q$ to have $h(\om,r,z)-h(\om,r,z')\leq R|z-z'|$.
    Exchanging $z$ and $z'$ yields $|h(\om,r,z)-h(\om,r,z')|\leq R|z-z'|$, that is the Lipschitz continuity with constant $R$ for $z\mapsto h(\om,r,z)$.
    This shows~\itemref{IT:LEM:h:lip}.
    
    It remains to prove~\itemref{IT:LEM:h:selector}. 
    Consider the product space $ E:=\Omega\times[0,T]\times\R^m$ endowed with the $\sigma$-algebra $\mathcal P\otimes\mathscr B(\R^m)$, and define the multifunction $\overline K: E\ni(\omega,r,z)\mapsto K(\omega,r)\subseteq\R^m$.
    The multifunction $\overline K$ is closed-valued by Lemma~\ref{LEM:g*}\itemref{IT:LEM:g*:def}\itemref{IT:LEM:g*:subgrad_0} and is $\mathcal P\otimes\mathscr B(\R^m)$-measurable because $\overline K^{-1}(O)=K^{-1}(O)\times\R^m \in \mathcal P\otimes\mathscr B(\R^m)$ for every open $O\subseteq\R^m$.
    Then the function
    \[
        F: E\times\R^m\ni((\om,r,z),q)\longmapsto 
        \begin{cases}
            -q\cdot z, &\text{if }q\in \overline K(\om,r,z),\\
            +\infty, &\text{if }q\notin \overline K(\om,r,z),\\
        \end{cases}
    \]
    is a normal integrand by \cite[Example~14.32]{Rockafellar+Wets_1998_Variational_analysis}.
    By \cite[Theorem~14.37]{Rockafellar+Wets_1998_Variational_analysis}, the argmin multifunction
    \[
        \Gamma: E\ni(\omega,r,z)\longmapsto \argmin_{q\in\R^m}F((\omega,r,z),q)
        =\argmax_{q\in K(\om,r)}q\cdot z
    \]
    is closed-valued and $\mathcal P\otimes\mathscr B(\R^m)$-measurable.
    In addition $\Gamma(\om,r,z)$ is nonempty for any $(\om,r,z)\in E$ because $q\mapsto q\cdot z$ is continuous and $K(\om,r)$ is compact by Lemma~\ref{LEM:g*}\itemref{IT:LEM:g*:def}\itemref{IT:LEM:g*:subgrad_0}.
    Therefore, the measurable selection theorem \cite[Corollary~14.6]{Rockafellar+Wets_1998_Variational_analysis} yields a $\mathcal P\otimes\mathscr B(\R^m)$-measurable map $\bar\mu:\Omega\times[0,T]\times\R^m\to\R^m$ such that $\bar\mu\in\Gamma$ everywhere in $E$, namely
    \[
        \bar\mu(\omega,r,z)\in K(\omega,r)
        \quad\text{and}\quad
        \bar\mu(\omega,r,z)\cdot z
        =
        \max_{q\in K(\omega,r)} q\cdot z
        =
        h(\omega,r,z), \qquad \forall\, (\om,r,z)\in E.
    \]
    Then~\itemref{IT:LEM:h:selector} holds on $A^c$.    
\end{proof}

\section{Deterministic drivers}
\label{SEC:app_determ_driver}

\begin{lemma}
\label{LEM:g_deter}
    Assume that $g$ is deterministic and constant in time, so that $g:\R^m\to\R$, and satisfies either~\itemref{IT:g_lip} or~\itemref{IT:g_quad}.
    Then the condition~\ref{IT:TH:resilience_convex:GA} is satisfied for every $0\leq s\leq t<T$.
\end{lemma}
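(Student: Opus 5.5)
Since $g$ is deterministic and time-independent, the zero-penalty set $C:=\partial g(0)$ and the conjugate $g^\ast$ do not depend on $(\omega,r)$. It therefore suffices to construct a single concave $\psi:[0,+\infty]\to[0,+\infty]$ with $\liminf_{x\to0^+}\psi(x)=0$ such that $\operatorname{dist}^2(q,C)\le\psi(g^\ast(q))$ for all $q\in\R^m$. The inequality then holds everywhere on $\Om\times(s,t)$, for every $0\le s\le t<T$. The plan is to take the natural \enquote{modulus}
\[
    \varphi(x):=\sup\big\{\operatorname{dist}^2(q,C)\,:\,g^\ast(q)\le x\big\},\qquad x\in[0,+\infty),
\]
and let $\psi$ be its least concave majorant on $[0,+\infty)$, with $\psi(+\infty):=+\infty$. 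By construction $\operatorname{dist}^2(q,C)\le\varphi(g^\ast(q))\le\psi(g^\ast(q))$ whenever $g^\ast(q)<+\infty$. When $g^\ast(q)=+\infty$ the inequality is trivial. Note that $\varphi$ is non-decreasing and $\varphi(0)=0$, since $\{g^\ast\le0\}=C$ by Lemma~\ref{LEM:g*}\itemref{IT:LEM:g*:def}\itemref{IT:LEM:g*:subgrad_0}.

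\textbf{Step 1: affine growth of $\varphi$.} Under~\itemref{IT:g_lip}, $g^\ast=+\infty$ outside the ball of radius $L$ and $C$ lies in the same ball, so $\varphi\le4L^2$. Under~\itemref{IT:g_quad}, the bound $g^\ast(q)\ge|q|^2/(4k)-k$ gives $|q|\le2\sqrt{k(x+k)}$ on $\{g^\ast\le x\}$. Together with $|C|\le2k$ this yields $\varphi(x)\le\big(2\sqrt{k(x+k)}+2k\big)^2\le a+bx$ for constants $a,b\ge0$. In both cases the sublevel sets $\{g^\ast\le x\}$ are bounded, and they are closed because $g^\ast$ is lower semicontinuous, so they are compact.

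\textbf{Step 2: $\varphi(x)\to0$ as $x\to0^+$.} I expect this to be the main point. Argue by contradiction: suppose there exist $\delta>0$, $x_n\downarrow0$ and $q_n$ with $g^\ast(q_n)\le x_n$ and $\operatorname{dist}(q_n,C)\ge\delta$. The $q_n$ lie in the compact set $\{g^\ast\le x_1\}$, so a subsequence converges to some $q$. Lower semicontinuity gives $g^\ast(q)\le0$, hence $q\in C$. On the other hand, continuity of the distance gives $\operatorname{dist}(q,C)\ge\delta$, a contradiction.

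\textbf{Step 3: finiteness of $\psi$ and vanishing at $0$.} Fix $\eta>0$ and choose $\delta>0$ with $\varphi\le\eta$ on $[0,\delta]$. For $x\ge\delta$ we have $\varphi(x)\le a+bx\le(a/\delta+b)x$. Hence $\varphi\le\eta+(a/\delta+b)x$ on $[0,+\infty)$, and this right-hand side is affine, so it is concave. The least concave majorant is therefore finite and satisfies $\psi(x)\le\eta+(a/\delta+b)x$. It is also non-negative, being larger than $\varphi$. Letting $x\to0^+$ and then $\eta\to0$ gives $\lim_{x\to0^+}\psi(x)=0$.

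Finally, the extension $\psi(+\infty)=+\infty$ preserves concavity on $[0,+\infty]$ in the extended sense used in Remark~\ref{REM:growth_HP}. This completes the verification of~\ref{IT:TH:resilience_convex:GA}.
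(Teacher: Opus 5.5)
Your proof is correct and follows essentially the same route as the paper. Both use the same modulus $\Phi(x)=\max\{\operatorname{dist}^2(q,\partial g(0)) : g^\ast(q)\le x\}$ over compact sublevel sets, its least concave majorant (the paper obtains it as minus the convex hull of $-\Phi$), the compactness/lower-semicontinuity contradiction giving $\Phi(x)\to0$ as $x\to0^+$, and the affine-majorant sandwich $0\le\psi(x)\le\eta+M_\eta x$. Your growth constants are in fact the more careful ones (e.g.\ $4L^2$ rather than the paper's $L^2$ in the Lipschitz case), since $\operatorname{dist}(q,\partial g(0))$ can exceed $|q|$ when $0\notin\partial g(0)$; this does not affect either argument, which only needs some affine bound.
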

\begin{proof}
    Define $\Phi:[0,+\infty)\to [0,+\infty)$ such that $\Phi(x)=\max_{q\in\Gamma(x)}d^2(q)$, where $\Gamma(x):=\{q\in\R^m \ : \ g^\ast (q)\leq x\}$ is the sublevel set of $g^\ast$ at level $x$ and $d:\R^m\ni q \mapsto \operatorname{dist}(q,\partial g(0))$. 
    The definition is well-posed because $d$ is continuous and $\Gamma(x)$ is compact. 
    Indeed, the sublevels sets of $g^\ast$ are closed because $g^\ast$ is lower semi-continuous by Lemma~\ref{LEM:g*}\itemref{IT:LEM:g*:def}\itemref{IT:LEM:g*:prop_nonnega}, and are bounded thanks to the two items in Lemma~\ref{LEM:g*}\itemref{IT:LEM:g*:def}\itemref{IT:LEM:g*:prop_nonnega}. 
    By \cite[Proposition~2.31]{Rockafellar+Wets_1998_Variational_analysis}, the convexification of $-\Phi$ is the greatest convex function $f:[0,+\infty)\to[-\infty,+\infty]$ such that $f\leq -\Phi$.
    It follows that $\psi:=-f$ is concave, satisfies $\Phi\leq \psi$, hence it is non negative and 
    \[
        \operatorname{dist}^2(q,\partial g(0))=d^2(q)\leq \Phi(g^\ast(q))\leq \psi(g^\ast(q)), \qquad \forall\, q\in\R^m.
    \]
    The first inequality above follows from $q\in\Gamma(g^\ast(q))$ if $g^\ast(q)$ is finite, otherwise set $\Phi(+\infty)=\psi(+\infty)=+\infty$.
    It remains to prove that $\liminf_{x\to 0^+}\psi(x)=0$.
    
    First note that $\Phi(x)\to0$ as $x\to 0^+$. 
    Indeed, if not, there exist $\eta>0$, $x_n\to 0^+$ and $q_n\in\Gamma(x_n)$ such that $d^2(q_n)\geq\eta$.
    Since $(x_n)_{n\in\N}$ is bounded and the sublevel sets of $g^\ast$ are compact, there exists $q\in\R^m$ and a subsequence $q_{n_k}\to q$. 
    By lower semi-continuity,
    \[
        g^\ast(q)
        \leq \liminf_{k\to\infty}g^\ast(q_{n_k})
        \leq \liminf_{k\to\infty}x_{n_k}
        =0.
    \]
    Since \(g^\ast\ge0\), we get \(g^\ast(q)=0\). Hence
    \(q\in\partial g(0)\), and therefore \(d(q_{n_k})\to d(q)=0\), a contradiction.
    
    Moreover, by the two items in Lemma~\ref{LEM:g*}\itemref{IT:LEM:g*:def}\itemref{IT:LEM:g*:prop_nonnega}, there exist $A,B\geq 0$ such that $\Phi(x)\le A+Bx$ for $x\geq 0$ (choose $A=L^2, B=0$ under~\itemref{IT:g_lip} and $A=4k^2, B=4k$ under~\itemref{IT:g_quad}).
    
    Fix now $\eps>0$.
    Since $\Phi(x)\to 0$ as $x\to 0^+$, choose $\delta>0$  such that $\Phi(x)\leq \eps$ for $0\leq x\leq\delta$.
    Then, for $x >\delta$, choose $M_\eps>0$ large enough (e.g., $M_\eps=B+A/\delta$) such that $\Phi(x)\leq A+Bx\leq M_\eps x$. 
    Therefore, we have $-\eps - M_\eps x\leq -\Phi(x)$ for all $x\geq 0$.
    The map $[0,+\infty)\ni x\mapsto -\eps-M_\eps x$ is convex and is less than or equal to $-\Phi$.
    By definition of $f$, we infer $-\eps-M_\eps x \leq f(x)\leq -\Phi(x)\leq 0$, thus $0\leq \psi(x)=-f(x)\leq \eps+M_\eps x$. 
    Taking $x\to 0^+$ and by arbitrariness of $\eps>0$, we conclude that $\lim_{x\to 0^+}\psi(x)=0$.
\end{proof}

\section{Proof of the main theorem in the Lipschitz case}
\label{SEC:app_lipschitz}

\begin{proof}[\textbf{Proof of Theorem~\ref{TH:resilience_convex} under~\itemref{IT:TH:convex:lip}}]
    We here explain in detail how the proof of Theorem~\ref{TH:resilience_convex} under~\itemref{IT:TH:convex:quad} can be adapted to the case~\itemref{IT:TH:convex:lip}, which allows for a more general choice of the inner risk process $\pi$ by restricting the class of convex risk measures $\r$ (imposing stricter conditions on the drivers $g$).

    Henceforth assume that $g$ satisfies~\ref{IT:g_lip}, and $b\in L^2_T$.
    Then the dual domain $\mathcal B\subseteq\BMO_T$ for the induced risk measure $\r$ consists of predictable processes $\mu:\Om\times[0,T]\to\R^m$ essentially bounded by $L$, as stated in Proposition~\ref{PROP:dual_representation}.
    In this setting, dealing with the dual representation for $\r$ is easier thanks to the uniform boundedness of the dual drifts $\mu\in\mathcal B$ at all times.
    Thus the main strategy of the proof that we presented in the quadratic case still holds in the Lipschitz case.
    The only parts that need further analysis are the ones that were based on the uniform boundedness of $b$ and $\s$.
    We review, one by one, the steps of the previous proof, discussing how they need to be modified under the new assumptions.

    \textit{Step 1} is unchanged, with the assumptions $b,\s\in L^2_T$ guaranteeing that $\Delta_\eps\pi_t\in L^2$.

    \textit{Step 2} also remains unchanged.
    In order to apply Lemma~\ref{LEM:mu_vs_nu}\itemref{IT:LEM:mu_vs_nu:equality}, replace~\eqref{EQ:one_day_mooore} by
    \[
        \left(
        \int_t^{t+\eps}(b_r+\s_r\cdot\tilde\mu_r)\,dr-\int_s^{t+\eps}g^\ast(r,\tilde\mu_r)\d r
        \right)^+
        \le \int_t^{t+\eps}\big(|b_r|+L|\s_r|\big)\d r.
    \]
    The right-hand side belongs to $L^2$, hence to $L^1(\Q^\mu)\cap L^1(\Q^{\tilde\mu})$ by Lemma~\ref{LEM:cond_expec_convex}\itemref{IT:LEM:cond_expec_convex:embedding}  (with $K=L$, $u=1$, $p=2$).
    Therefore the conclusion~\eqref{EQ:TH:convex:step2_bis_quadratic} of Step 2 remains valid.

    \textit{Step 3} maintains the same structure with the same set $\mathcal C_\eps^{s,t}$, but the final estimate~\eqref{EQ:TH:convex_quadratic:step_J_2_bis} needs changing because $b$ and $\s$ are no longer bounded processes.
    In detail, from~\eqref{EQ:TH:convex_quadratic:step_J_1} and from
    \[
        b+\sigma\cdot\mu-\frac12 g^\ast(\,\cdot\,,\mu)
        \leq b+\sigma\cdot\mu
        \le |b|+L|\sigma|,
    \]
    we infer
    \[
        V_\eps^{s,t}(\mu)
        \le
        \E_\mu\left[\left.
        \frac1\eps\int_t^{t+\eps}\big(|b_r|+L|\sigma_r|\big)\d r
        \right|\F_s\right]
        -\frac1{2\eps}G^s_{t,t+\eps}(\mu)
        -\frac1\eps G_{s,t}(\mu).
    \]
    Hence, the deterministic constant $C_0$ appearing in the control~\eqref{EQ:TH:convex_quadratic:step_J_2} for $V_\eps^{s,t}(\mu)$, with $\mu\in\mathcal B_\eps^{s,t}\setminus \mathcal C_\eps^{s,t}$, can be replaced by the random variable
    \[
        U_\eps^{s,t}
        :=
        \esssup_{\mu\in\mathcal B_\eps^{s,t}}
        \E_\mu\left[\left.
        \frac1\eps\int_t^{t+\eps}\big(|b_r|+L|\sigma_r|\big)\d r
        \right|\F_s\right].
    \]
    Therefore,~\eqref{EQ:TH:convex_quadratic:step_J_2_bis} is replaced by 
    \begin{equation}
    \label{EQ:REM:lip:step_3}
        \esssup_{\mu\in\mathcal B_\eps^{s,t}\setminus\mathcal C_\eps^{s,t}}V_\eps^{s,t}(\mu)\le 
        U_\eps^{s,t}
        -\frac1{2\sqrt\eps}.
    \end{equation}

    \textit{Step 4} maintains the projection argument but the upper bound needs to be amended.
    The main difference with the quadratic case is that $\langle Y_t\rangle_\eps$ is only in $L^2$ and not in $L^\infty$, hence Lemma~\ref{LEM:mu_vs_nu}\itemref{IT:LEM:mu_vs_nu:bound_infty} has to be replaced by Lemma~\ref{LEM:mu_vs_nu}\itemref{IT:LEM:mu_vs_nu:bound_u} for $K=L$ and some fixed $u\in(1,2)$, with the conjugate exponent $v=u/(u-1)>2$.
    Specifically, we obtain
    \begin{equation}
    \label{EQ:REM:lip_variation_u}
        \big|\E_\mu[\langle Y_t\rangle_\eps|\F_s]
        -
        \E_{\tilde\mu}[\langle Y_t\rangle_\eps|\F_s]\big|
        \leq
        C'\big(\E_\mu\big[|\langle Y_t\rangle_\eps|^u\big|\F_s\big]\big)^{1/u}
        \Bigg(\!\E_\mu\Bigg[\!\bigg(\int_s^{t+\eps}\!\!\!|\mu_r-\tilde\mu_r|^2\d r\bigg)^{\!v}\Bigg|\F_s\Bigg]\Bigg)^{\!1/(2v)}
    \end{equation}
    where $C'>0$ depends only on $u,L,T$.
    To control the right-hand side, we write the exponent of the time integral as $v=1+(v-1)$.
    For the integral over $(s,t)$ we still employ the assumption~\eqref{EQ:TH:resilience_convex:HP_psi}, while over $(t,t+\eps)$ and for the remaining factor $(\cdots)^{v-1}$  we use instead the bound $|\mu-\tilde\mu|\leq 2L$.
    This yields
    \begin{align}
        \E_\mu\Bigg[\bigg(\int_s^{t+\eps}|\mu_r-\tilde\mu_r|^2\d r\bigg)^v\Bigg|\F_s\Bigg]
        &\leq \bigg[(1\vee T)\psi\big(G_{s,t}(\mu)\big)+4L^2\eps\bigg]\big(4L^2 T\big)^{v-1}\\
        &\leq C''\big(\psi(\sqrt\eps)+\eps\big),
    \end{align}
    where in the last inequality we used the  increasing monotonicity of $\psi$ with $\mu\in\mathcal C_\eps^{s,t}$, and renamed $C'':=\big[1\vee T\vee (4L^2)\big]\big(4L^2 T\big)^{v-1}>0$.
    Next, we control $\big(\E_\mu\big[|\langle Y_t\rangle_\eps|^u\big|\F_s\big]\big)^{1/u}$ by $C'''\sqrt{\E\big[|\langle Y_t\rangle_\eps|^2\big|\F_s\big]}$, thanks to  Lemma~\ref{LEM:cond_expec_convex}\itemref{IT:LEM:cond_expec_convex:embedding} with $p=2$ and $K=L$, where $C'''>0$ depends only on $u,L,T$.
    Combining these estimates with~\eqref{EQ:TH:convex_quadratic:upper_2} and recalling that $\tilde\mu\in\mathcal A_s$, we find a constant $\tilde C_1>0$ depending only on $u,L,T$, such that
    \begin{equation}
    \label{EQ:REM:lip:step_4}
        \esssup_{\mu\in\mathcal C_\eps^{s,t}}V_\eps^{s,t}(\mu)
        \le
        \esssup_{\mu\in\mathcal A_s}\E_\mu[\langle Y_t\rangle_\eps|\F_s]
        +
        \tilde C_1
        \sqrt{\E\big[|\langle Y_t\rangle_\eps|^2\big|\F_s\big]}
        \tilde \varpi(\eps),
    \end{equation}
    where $\tilde\varpi(\eps):=(\psi(\sqrt\eps)+\eps)^{1/(2v)}$ satisfies $\tilde\varpi(\eps)\to 0$ as $\eps\to 0^+$.
    This is the upper bound that replaces~\eqref{EQ:TH:convex_quadratic:upper_3} in the Lipschitz case.

    \textit{Step 5} is adapted as \textit{Step 4}, by using  Lemma~\ref{LEM:mu_vs_nu}\itemref{IT:LEM:mu_vs_nu:bound_u} with $K=L$ and the same $u\in(1,2)$ from before.
    We then easily reach the lower bound replacing~\eqref{EQ:TH:convex_quadratic:step_lowerbound_2_new}:
    \begin{equation}
    \label{EQ:REM:lip:step_5}
        \esssup_{\mu\in\mathcal B_\eps^{s,t}}V_\eps^{s,t}(\mu)
        \ge
        \esssup_{\mu\in\mathcal A_s}\E_\mu[\langle Y_t\rangle_\eps|\F_s]
        -
        \tilde C_2
        \sqrt{\E\big[|\langle Y_t\rangle_\eps|^2\big|\F_s\big]}
        \sqrt\eps,
    \end{equation}
    where $\tilde C_2>0$ depends only on $u,L,T$.

    \textit{Step 6} is unchanged, yielding the same full $\ell_1$-measure Borel set $\mathcal T_1\subseteq[0,T)$, such that, for $t\in\mathcal T_1$, we have $\langle Y_t\rangle_\eps\longrightarrow Y_t$ in $L^2$, hence, for $s\in[0,t]$:
    \[
        \esssup_{\mu\in\mathcal A_s}\E_\mu[\langle Y_t\rangle_\eps|\F_s]
        \longrightarrow
        H_{s,t}
        \qquad\text{in }L^1.
    \]
    Moreover, the common factor appearing in both the upper bound~\eqref{EQ:REM:lip:step_4} from \textit{Step 4} and in the lower bound~\eqref{EQ:REM:lip:step_5} from \textit{step 5} is bounded in $L^1$, for sufficiently small $\eps$, by 
    \[
        \left\|\sqrt{\E\big[|\langle Y_t\rangle_\eps|^2\big|\F_s\big]}\right\|_{L^1}
        \leq \left\|\langle Y_t\rangle_\eps\right\|_{L^2}
        \leq \left\|Y_t\right\|_{L^2}+1.
    \]
    Therefore, the right-hand sides of~\eqref{EQ:REM:lip:step_4} and~\eqref{EQ:REM:lip:step_5} converge in $L^1$ to $H_{s,t}$.

    The strategy of \textit{Step 7} is unchanged.
    The negative part can be dealt with exactly as in the quadratic case, using the lower bound~\eqref{EQ:REM:lip:step_5} and the new \textit{Step~6}:
    \begin{equation}
    \label{EQ:REM:negative_part}
        \left(
        \esssup_{\mu\in\mathcal B_\eps^{s,t}}V_\eps^{s,t}(\mu)-H_{s,t}
        \right)^-
        \longrightarrow0,
        \qquad \text{in }L^1.
    \end{equation}
    Analogously, the positive part involving the essential supremum over $\mathcal C_\eps^{s,t}$, is treated as in the quadratic case, using the upper bound~\eqref{EQ:REM:lip:step_4} and the new \textit{Step 6}:
    \begin{equation}
    \label{EQ:REM:lip:C}
        \left(
            \esssup_{\mu\in\mathcal C_\eps^{s,t}}V_\eps^{s,t}(\mu)-H_{s,t}
        \right)^+
        \longrightarrow0,
        \qquad \text{in }L^1.
    \end{equation}
    The only term that needs some additional care is 
    \[
        \left(\esssup_{\mu\in\mathcal B_\eps^{s,t}\setminus\mathcal C_\eps^{s,t}}V_\eps^{s,t}(\mu)-H_{s,t}\right)^+.
    \]
    Indeed, in the quadratic case, we could rely on both the deterministic upper bound~\eqref{EQ:TH:convex_quadratic:step_J_2_bis} for the $\esssup$ and the deterministic lower bound~\eqref{EQ:TH:convex_quadratic:L1_1_b} for $H_{s,t}$ to claim that this term  was identically $0$ for small $\eps$ (see~\eqref{EQ:TH:convex_quadratic:L1_2_bis}). 
    In the Lipschitz case, this is no longer true as~\eqref{EQ:TH:convex_quadratic:step_J_2_bis} has been replaced by~\eqref{EQ:REM:lip:step_3} and the $L^\infty$-lower bound for $H_{s,t}$ can be replaced by an $L^2$-lower bound, as the following:
    \begin{equation}
    \label{EQ:REM:lip:H_mu0}
        H_{s,t}\geq \E_{\mu^0}[Y_t|\F_s],
    \end{equation}
    where $\mu^0:=\1_{(s,T]}\Pi(\,\cdot\,,0)\in\mathcal A_s$.
    Let $\widetilde{\mathcal T}_2\subseteq[0,T]$ be the full $\ell_1$-measure Borel set provided by Lemma~\ref{LEM:integral_average}, such that, for $t\in\widetilde{\mathcal T}_2$
    \begin{equation}
    \label{EQ:REM:lip:L2_conv_tilde}
        \langle\widetilde Y_t\rangle_\eps:=\frac1\eps\int_t^{t+\eps}\big(|b_r|+L|\s_r|\big)\d r \longrightarrow|b_t|+L|\s_t|=:\widetilde Y_t, \qquad \text{ in }L^2.
    \end{equation}
    Use the non-decreasing monotonicity of the positive part with~\eqref{EQ:REM:lip:step_3} from \textit{Step 3} and~\eqref{EQ:REM:lip:H_mu0}:
    \begin{align}
        \left(
            \esssup_{\mu\in\mathcal B_\eps^{s,t}\setminus\mathcal C_\eps^{s,t}}V_\eps^{s,t}(\mu)-H_{s,t}
        \right)^+
        &\leq 
        \left(
            U_\eps^{s,t}-\frac1{2\sqrt \eps}-\E_{\mu^0}[Y_t|\F_s]
        \right)^+\\
        &\leq 
        \left(
            U_\eps^{s,t}-\frac1{4\sqrt \eps}
        \right)^+
        +
        \left(
            -\E_{\mu^0}[Y_t|\F_s]-\frac1{4\sqrt \eps}
        \right)^+,
    \label{EQ:REM:lip:estimate_B_minus_C}
    \end{align}
    where in the last inequality we used $(x+y)^+\leq x^++y^+$, for $x,y\in\R$.
    Concerning the first positive part on the right-hand side, Lemma~\ref{LEM:cond_expec_convex}\itemref{IT:LEM:cond_expec_convex:embedding} (with $K=L$, $\mathcal U=\mathcal B_\eps^{s,t}$, $u=1$, and $p=2$) yields
    \begin{equation}
    \label{EQ:REM:lip_1}
        U_\eps^{s,t}
        \leq
        C_{L,T}
        \sqrt{\E\big[|\langle\widetilde Y_t\rangle_\eps|^2\big|\F_s\big]}
        =:Z_\eps.
    \end{equation}
    Consequently,
    \begin{equation}
    \label{EQ:REM:lip}
        \left(
        U_\eps^{s,t}-\frac1{4\sqrt\eps}
        \right)^+
        \leq
        \left(
        Z_\eps-\frac1{4\sqrt\eps}
        \right)^+
        \leq Z_\eps.
    \end{equation}
    By~\eqref{EQ:REM:lip_1},~\eqref{EQ:REM:lip} and the tower property, the family $(Z_\eps)_{\eps>0}$ is uniformly bounded in $L^2$.
    Thus, for every $\delta>0$, Markov's inequality gives, as $\eps\to 0^+$:
    \[
        \P\left(
        \left(
        U_\eps^{s,t}-\frac1{4\sqrt\eps}
        \right)^+>\delta
        \right)
        \leq
        \P\left(
        Z_\eps>\frac1{4\sqrt\eps}+\delta
        \right)
        \leq
        \frac{\E[Z_\eps^2]}
        {\left(\frac1{4\sqrt\eps}+\delta\right)^2}
        \longrightarrow0.
    \]
    Thus, the first positive part converges to $0$ in probability. 
    Moreover,~\eqref{EQ:REM:lip} and the uniform $L^2$-bound on $Z_\eps$ show that it is uniformly integrable. Hence it converges to $0$ in $L^1$ by Vitali's theorem. 
    The second positive part on the right-hand side of~\eqref{EQ:REM:lip:estimate_B_minus_C} converges to $0$ $\P$-a.s. and is dominated by $\big(-\E_{\mu^0}[Y_t|\F_s]\big)^+\in L^1$, thus it converges to $0$ in $L^1$ by dominated convergence.
    Therefore, both terms on the right-hand side of~\eqref{EQ:REM:lip:estimate_B_minus_C} converge to $0$ in $L^1$, leading to
    \begin{equation}
    \label{EQ:REM:lip:B_minus_C}
        \left(
            \esssup_{\mu\in\mathcal B_\eps^{s,t}\setminus\mathcal C_\eps^{s,t}}V_\eps^{s,t}(\mu)-H_{s,t}
        \right)^+
        \longrightarrow0,
        \qquad \text{in }L^1.
    \end{equation}
    We can finally conclude as in the quadratic case.
    For $t$ in the full $\ell_1$-measure Borel set $\mathcal T:=\mathcal T_1\cap\widetilde{\mathcal T}_2$, we can use the newly proved convergences~\eqref{EQ:REM:negative_part},~\eqref{EQ:REM:lip:C} and~\eqref{EQ:REM:lip:B_minus_C} in the old estimates~\eqref{EQ:TH:convex_quadratic:L1_3_bis} and~\eqref{EQ:TH:convex_quadratic:L1_2}, yielding~\eqref{EQ:TH:convex:fine}.
    The well-posedness of the resilience evaluation follows directly from the convergence of the positive and negative parts.
    The $L^2$-bound~\eqref{EQ:TH:convex_lip_bounds_resilience} easily follows from the definition of $H_{s,t}$, Lemma~\ref{LEM:cond_expec_convex}\itemref{IT:LEM:cond_expec_convex:embedding} and the linear growth bound for $g$.
\end{proof}

\printbibliography[heading=bibintoc]

\end{document}